\documentclass{article}
\usepackage[nonatbib,preprint]{neurips_2026}

\usepackage[utf8]{inputenc}
\usepackage[T1]{fontenc}
\usepackage{amsmath,amssymb,amsthm,mathtools}
\usepackage{enumitem}
\usepackage{microtype}
\usepackage{xcolor}
\usepackage{hyperref}
\usepackage{graphicx}
\usepackage{subcaption}

\usepackage[ruled,vlined,linesnumbered]{algorithm2e}
\SetKwComment{Comment}{$\triangleright$\ }{}
\usepackage{dsfont}

\usepackage{natbib}
\usepackage{aliascnt}

\newtheorem{theorem}{Theorem}[section]

\newaliascnt{lemma}{theorem}
\newtheorem{lemma}[lemma]{Lemma}
\aliascntresetthe{lemma}

\newaliascnt{proposition}{theorem}

\aliascntresetthe{proposition}

\newaliascnt{corollary}{theorem}

\aliascntresetthe{corollary}

\newaliascnt{assumption}{theorem}
\newtheorem{assumption}[assumption]{Assumption}
\aliascntresetthe{assumption}

\theoremstyle{definition}

\newaliascnt{definition}{theorem}
\newtheorem{definition}[definition]{Definition}
\aliascntresetthe{definition}

\newaliascnt{remark}{theorem}
\newtheorem{remark}[remark]{Remark}
\aliascntresetthe{remark}

\makeatletter
\newtheorem*{rep@theorem}{\rep@title}
\newcommand{\newreptheorem}[2]{%
\newenvironment{rep#1}[1]{%
\def\rep@title{#2 \ref{##1}}%
\begin{rep@theorem}}%
{\end{rep@theorem}}}
\makeatother

\newreptheorem{theorem}{Theorem}
\newreptheorem{lemma}{Lemma}
\newreptheorem{corollary}{Corollary}

\usepackage[capitalize]{cleveref}

\crefname{theorem}{Theorem}{Theorems}
\Crefname{theorem}{Theorem}{Theorems}

\crefname{lemma}{Lemma}{Lemmas}
\Crefname{lemma}{Lemma}{Lemmas}

\crefname{proposition}{Proposition}{Propositions}
\Crefname{proposition}{Proposition}{Propositions}

\crefname{corollary}{Corollary}{Corollaries}
\Crefname{corollary}{Corollary}{Corollaries}

\crefname{assumption}{Assumption}{Assumptions}
\Crefname{assumption}{Assumption}{Assumptions}

\crefname{definition}{Definition}{Definitions}
\Crefname{definition}{Definition}{Definitions}

\crefname{remark}{Remark}{Remarks}
\Crefname{remark}{Remark}{Remarks}

\newcommand{\E}{\mathbb E}
\newcommand{\N}{\mathbb N}
\newcommand{\pro}{\mathbb P}

\newcommand{\init}{\mathrm{init}}
\newcommand{\newb}{\mathrm{new}}
\newcommand{\LRU}{\mathrm{LRU}}
\newcommand{\one}{\mathds{1}}

\newcommand{\Exp}{\text{Exp}}

\title{Multi-Turn LLM Conversations under the Least-Recently-Used Policy: Mean-Field Asymptotics and Hit Ratio Approximation}

\author{
  Heyuan Yao \\
  Northwestern University\\
  Evanston, IL 60208 \\
  \texttt{heyuanyao@u.northwestern.edu}
  \And
  Chutong Gao\textsuperscript{*} \\
  Purdue University \\
  West Lafayette, IN 47907\\
  \texttt{gao945@purdue.edu}
  \AND
  Yuan Lyu\textsuperscript{*} \\
  HKUST\\
  Hong Kong SAR\\
  \texttt{ylyuad@connect.ust.hk}
  \And
  Izzy Grosof \\
  Northwestern University\\
  Evanston, IL 60208 \\
  \texttt{izzy.grosof@northwestern.edu}
  \And
  David Simchi-Levi \\
  Purdue University \\
  West Lafayette, IN 47907\\
  \texttt{dslevi@purdue.edu}
  \AND
}

\date{}
\begin{document}
\maketitle

\vspace{-5.5em}
\begin{center}
\small \textsuperscript{*}Equal contribution.
\end{center}
\vspace{0.5em}

\begin{abstract}
The major workloads in modern large language model (LLM) serving systems have shifted from single-shot LLM calls to multi-turn conversations, where new responses are generated based on the whole conversation history across all previous turns.    
The hit ratio, i.e., the average fraction of KV caches accessed directly from existing caches stored in high-bandwidth memory (HBM), is hence a crucial metric that governs system performance. Estimating the hit ratio is a highly nontrivial task due to the complex system dynamics, where the KV cache prefixes grow with turns and some must be evicted due to finite memory capacity. We formulate the system as a multi-turn conversation model under the least-recently-used (LRU) policy. Through a mean-field asymptotic framework, we prove that as the conversation arrival rate and the memory capacity grow proportionally to infinity, the hit ratio converges to a closed-form limit. Based on the characterization of the limit, we further propose a practical hit ratio estimator, and validate its accuracy by real LLM serving experiments on the Qwen3-8B model implemented on Ascend NPUs. Our results provide a theoretical foundation for the analysis of multi-turn LLM serving systems and a practical guideline for memory capacity provisioning.
\end{abstract}

\vspace{-0.5em}

\vspace{-0.5em}
\vspace{-0.5em}

\section{Introduction}
\label{sec: Introduction}
\vspace{-0.5em}
\vspace{-0.2em}

Multi-turn interactions have become a core workload form in modern large language model (LLM) services, following the prevalence of coding agents as the most commercially validated LLM application (\cite{github2025codingagent, anthropic2026claudecode}).
%GitHub reports that its coding agent helped developers merge more than one million pull requests within five months of launch (\cite{github2025codingagent}), while Anthropic reports that Claude Code has surpassed \$2.5 billion in run-rate revenue (\cite{anthropic2026claudecode}). 
Within one \textit{conversation}, 
coding agents must carry all conversation history (including previous system instructions, history prompt-response pairs, and tool-calling feedback) for new response generation, and such \textit{multi-turn} LLM calls create prolonged and growing conversation history. A key strategy to hedge against the correspondingly growing compute is context caching, that is, storing and reusing the already generated key-value (KV) caches of the conversation history in the memory or disk, such that new prompts sharing the same history can avoid repeated prefill and thus massively reduce time-to-first-token (TTFT) and inference costs. For example, DeepSeek reports that context caching reduces the TTFT of repetitive 128K-token inputs from 13s to 500ms, and consequently, it charges a prompt that successfully hits the cached content $1/10$ of the original price (\cite{deepseek2024contextcache}). 

To leverage the maximum value of context caching, it is important to design an LLM serving system that
maximizes the long-run average \textit{hit ratio}, defined as the fraction of KV caches directly reused from the HBM among all KV caches in a generic conversation. A prerequisite for the optimization problem is the following performance analysis problem, which is the focus of this paper:

\vspace{-0.5em}

\begin{quote}
\textit{How to characterize (or approximate) the hit ratio as a function of given system primitives, including the conversation arrival rate, the interarrival-time distribution of prompts within one conversation, the workload distributions, and the offered HBM capacity?}
\end{quote}
\vspace{-0.5em}
\vspace{-0.5em}

\paragraph{The least-recently-used (LRU) eviction policy.}
A conversation history can be reused only if its KV caches remain resident in the HBM memory when a new prompt within the same conversation arrives. However, the HBM memory of the processing unit has only finite capacity and thus inevitably forces eviction of its resident conversations. Throughout the paper we assume the system operates under the LRU eviction policy widely adopted in multi-turn KV-cache management due to its simplicity and empirical success (\cite{wang2025kvcache, zhang2026tail}). Under LRU, when the amount of KV caches exceeds the HBM capacity, conversations (with its all history KV caches) are evicted sequentially from the least-recently-used one to the most-recently-used one until the capacity constraint is restored.
%a reuse hit is governed by a race between the conversation's return time and its cache eviction time: the former depends on inter-turn times and conversation lifetimes, whereas the latter depends on conversation arrivals, prefix growth, and cache capacity. Under-provisioning evicts reusable states prematurely, while over-provisioning consumes expensive memory for diminishing hit ratio gains. The hit ratio is thus jointly determined by workload dynamics, serving load, and provisioned capacity.

%This interaction creates an analytical provisioning problem. Existing systems provide block-based KV management, hierarchical storage, and KV offloading, while trace-driven simulators can evaluate candidate configurations (\cite{gao2024cachedattention,qin2025mooncake,agrawal2024vidur}). Yet cache sizing still relies largely on trace replay, simulation, and empirical search. What is missing is a direct mapping from observable multi-turn workload statistics to the LRU hit-ratio curve, and from a target hit ratio back to the required capacity. We therefore ask:
%\begin{quote}
%    \emph{Given a stochastic multi-turn LLM workload and a model-specific KV-memory footprint, what is the cache capacity required under LRU to achieve a target  hit ratio?}
%\end{quote}
\vspace{-0.5em}

\paragraph{Overview of our results and contributions.}
We make three main contributions.
\vspace{-0.5em}

\begin{itemize}
    \item \textbf{Modeling framework.} We propose a parsimonious multi-turn conversation model (MCM) for multi-turn LLM serving under LRU eviction policy, which captures key components that govern the stochastic system dynamics.
    \item \textbf{Mean-field asymptotics.} We prove the convergence of the hit ratio to an  explicit limit as the conversation arrival rate and the HBM capacity grows proportionally to infinity.
    \item \textbf{Estimator and validation.} We derive a theoretically-motivated hit ratio estimator and validate its accuracy in Qwen3-8B model serving experiments implemented on Ascend 910B2 NPUs fed by conversations from the ShareGPT dataset.
\end{itemize}
% \paragraph{Contribution.}
%To answer this question, we generalize the stochastic conversation model of~\citet{zhang2026tail} to a \emph{Multi-Turn Conversation Model} (MCM) and develop a block-weighted mean-field analysis of LRU. From the conversation arrival rate, inter-turn distribution, conversation length, and per-turn KV growth, we characterize the tagged-conversation displacement process and derive the characteristic time and  hit ratio. This mapping can be inverted to compute the capacity required for a target hit ratio. These statistics can be estimated from serving traces, while the model's KV-memory footprint translates the predicted number of blocks into physical memory. We further develop a theoretically-motivated estimator and validate it on real-world multi-turn conversation traces. Although our experiments use an HBM-resident cache, the analysis depends on workload statistics, effective cache capacity, and eviction dynamics rather than a specific storage medium, and may serve as a first-order approximation for hierarchical and offloaded KV-cache systems.

\vspace{-0.2em}

\vspace{-0.5em}
\vspace{-0.5em}

\section{The Model}
\label{sec: Model Settings-short}
\vspace{-0.5em}
\vspace{-0.2em}

In this section, we formulate the MCM under the LRU policy and the corresponding  hit ratio. Additional model details, formal policy definitions, and illustrations are provided in \cref{sec: Model Settings}. 

% \paragraph{Notation.} We denote by $\N$ the set of natural numbers, and write $\N_+ := \N \backslash \{0\}$. 
\vspace{-0.5em}

\paragraph{The multi-turn conversation model.}
We first introduce the MCM in detail.
In LLM practice, KV caches are managed as blocks in the HBM memory, each having a fixed capacity measured in token positions\footnote{For instance, in our vLLM-Ascend experiments, each full KV block represents 128 token positions.}. We consider a system where the HBM can store at most $N$ KV blocks, referred to as the \textit{HBM capacity}. We assume the system is implemented with a Prefill-Decode disaggregation structure with one prefill instance, where the prefill and decode workloads are processed by different machines (\textit{prefiller} and \textit{decoder}). The HBM capacity $N$ under our consideration is that of one prefiller. New \textit{conversations} arrive according to a Poisson process with rate $\lambda_0$.  
We assume that the numbers of turns $M$ of different conversations are i.i.d.
Upon arrival of a conversation with its initial prompt tokens, turn-$1$ starts and the prefiller generates $X_1$ KV blocks for the prompt. Its corresponding response is immediately generated by one decoder, whose tokens, but not the KV caches\footnote{The current Mooncake infrastructure (\cite{qin2025mooncake}) supports the KV block transfer only from a prefiller to a decoder, but not in the reverse direction.}, are sent back to the prefiller. This marks the end of turn-$1$. After a random interarrival time $W$ (i.i.d. across all turns and all conversations), a new prompt of the same conversation arrives, and a new turn starts.  
At the start of turn-$j$, $j=2, \ldots, M$, the prefiller immediately generates $Y_{j-1}$ KV blocks for the response to the last prompt from turn-$(j-1)$, and $X_j$ KV blocks for the new prompt of turn-$j$. Then the corresponding KV blocks are transferred to the decoder, to generate the turn-$j$ response. Again, the tokens of this response, but not the KV blocks, are sent back to the prefiller. This marks the end of turn-$j$. Therefore, the KV block increment of turn-$j$ $Z_j$ is defined by $Z_1:=X_1 ~\text{ and }~ Z_j:=Y_{j-1} + X_j, j\geq2$,
where we assume $Z_j\in\N_+$. The history block size immediately after the turn-$j$ prompt is therefore defined as $R_j:=\sum_{\ell=1}^{j}Z_\ell$. \cref{fig: KV content increments example} illustrates the growth of the KV content. In \cref{fig: Multi-turn examples}, we use a running example to better illustrate the model setting.

Our analytical results for the MCM are established under the following additional assumptions. 
First, the turn-wise block increments $\{Z_{j}\}_{j\geq 1}$, the interarrival times $W$ between each turn, and the turn number $M$ are mutually independent, while dependence across block increments of different turns is allowed. In addition, the cumulative number of KV blocks generated at the last turn of a conversation, defined as
\begin{equation} \label{eq:B}
   B:=R_M=\sum_{\ell=1}^{M} Z_\ell, 
\end{equation}
satisfies $\E [\exp( \theta_0 B)]<\infty$ for some $\theta_0>0$. Finally, the interarrival time $W$ of prompts within one conversation has a Lipschitz continuous cumulative distribution function (CDF) $F$.

\vspace{-0.5em}

\paragraph{The least-recently-used policy and the  hit ratio.}
Consider a generic conversation with $M$ turns and history content sizes $\{R_j\}_{j=1}^M$. If the conversation continues from turn $j$ to turn $j+1$, then the $R_j$ history blocks become reuse-eligible at the next prompt, as illustrated in \cref{fig: LLM multi turn example}. If any prefix block cannot be recovered when its next-turn prompt arrives, all suffix blocks of this prefix block cannot be retrieved from the HBM\footnote{In reality, these suffix blocks must be retrieved from lower-level storage or reprefilled, which causes more computation and transfer cost than the local HBM reuse. }. 

The MCM under our consideration operates under the LRU policy. The system maintains an ordered resident conversation list $\mathcal L$, from the most-recently-used (MRU) end to the LRU end. The list is updated whenever a new prompt is processed and the corresponding conversation content is inserted or refreshed in HBM. Suppose an arriving prompt belongs to a resident conversation $\theta\in\mathcal L$. The conversation label $\theta$ is moved to the MRU end, while its previously stored content history is replaced by the enlarged content containing the newly generated blocks. If the prompt belongs to a nonresident conversation $\theta\notin\mathcal L$, the conversation is inserted directly at the MRU end, together with its available history content and the newly generated blocks.

The HBM can store at most $N$ KV blocks. Whenever inserting or updating a conversation content causes the total resident KV blocks to exceed this capacity, the system evicts complete conversation contents sequentially from the LRU end until the capacity constraint is restored\footnote{In practical implementations, LRU eviction may operate at the block level and therefore allow partial-prefix eviction. We discuss these implementation differences in the appendix.}. The whole-content LRU policy under our consideration is described by Algorithm \ref{alg: Whole-Content LRU} in the Appendix, and the hit ratio under LRU admits the following definition. 
    For each turn $j\geq 2$, define the LRU hit indicator
    \[H_{j} := \one \left\{ \text{the history content of size $R_{j-1}$ is resident in HBM when the turn-$j$ prompt arrives} \right\}.
    \]
    The cache hit ratio is defined as  $h  :=\E [\sum_{j=2}^{M}R_{j-1}H_{j}]/{A_R}$, where $A_R : =\E [\sum_{j=1}^{M} R_j]$ represents the expected cumulative history content workload over the lifetime of a conversation, and the numerator represents the portion of this workload that is reused through direct HBM lookup.

\vspace{-0.5em}
\vspace{-0.5em}

\paragraph{The mean-field asymptotic regime.}

An explicit characterization of the hit ratio $h$ is rendered intractable due to the complex system dynamics, so we resort to a mean-field scaling regime where the conversation arrival rate scales proportionally with the HBM capacity $N$, and we characterize the mean-field limit of the hit ratios.
Specifically, the system indexed by $N$ has HBM capacity $N$ and conversation arrival rate $\lambda_0^{(N)}:=N\lambda_0$. The other system primitives, i.e., the prompt interarrival-time distribution $F$, the turn-number distribution $M$, and the block-increment sequence $\{(X_j,Y_j)\}_{j\geq1}$, are kept unchanged. Under this scaling, both the HBM capacity and the aggregate workload arrival rate grow linearly with $N$, and we will show in the following section that this scaling yields a nondegenerate mean-field limit for the hit ratio.
\vspace{-0.5em}
\vspace{-0.5em}

\section{Main results}
\label{sec: main results-short}
\vspace{-0.5em}
\vspace{-0.2em}

In this section, we present our main theoretical result. We establish the convergence of the hit ratios for a sequence of MCMs operating under LRU and indexed by their HBM capacities $N = 1, 2, \ldots$. Based on the mean-field convergence results, we develop a theoretically-motivated estimator for the hit ratio, and we validate its accuracy through real LLM serving experiments. 
%Unless otherwise stated, all results assume a stationary system.
For more details, we refer our readers to \cref{sec: main results}. All proofs in this paper appear in \cref{sec: Proofs}.

We consider the normalized displacement process of a tagged conversation, i.e., the cumulative KV block mass (normalized by N ) inserted or refreshed ahead of the tagged content in the LRU order since its last access. To state the main result,  we define the mean displacement curve $d:\mathbb R_+  \rightarrow \mathbb R_+$, which is shown to be the mean of the normalized displacement process (see \cref{lem: Mean Displacement Curve}), as
\begin{equation}
d(t):=\lambda_0\E[B]t+\lambda_0A_H\int_0^t\bar F(u)\,du,
\end{equation}
where $B$ is given in \eqref{eq:B}, $A_H:=\E[\sum_{j=1}^{M-1} R_j]$ denotes the tight upper bound of the expected amount of reusable history content workload, and $\bar{F}$ is the tail probability of the prompt interarrival time $W$.
Note that $d$ is strictly increasing, and has a well-defined inverse $d^{-1}$. We then define the mean-field characteristic time $T_C:=d^{-1}(1)$, i.e., the time at which the mass fills the entire cache.

We are now ready to state the main result. For the system indexed by $N$, we let $h^{(N)}$ denote the  hit ratio and define the \textit{mean-field hit ratio} as $h ^{(\infty)} :=(A_H/{A_R})F(T_C) = (\E[\sum_{j=1}^{M-1} R_j] / \E [\sum_{j=1}^{M} R_j]) F(T_C)$. %\chutong{$A_H$ undefined}

\begin{theorem}[Mean-Field Limit of the Hit Ratio]
    \label{thm: Hit Ratio Convergence in the MCM-LRU model- Short-short}
    Under the MCM assumptions, we have $h ^{(N)}\rightarrow h ^{(\infty)}$ as $N\rightarrow \infty.$
\end{theorem}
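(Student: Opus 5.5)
The plan is the standard Che-approximation argument: a tagged conversation's history survives a gap exactly when the displacement mass accumulated during that gap stays below the capacity. We show the normalized displacement concentrates on the mean displacement curve $d$ as $N\to\infty$, which reduces each hit to a comparison between the gap $W$ and $T_C$.

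\textbf{Step 1: the hit event as a displacement comparison.} Fix a stationary version of the system indexed by $N$. Tag a conversation and consider its turn $j$ prompt at time $\tau$, with the turn-$(j+1)$ prompt at $\tau+W$. Under whole-content LRU, the content of size $R_j$ is still resident at $\tau+W$ if and only if the total mass inserted or refreshed ahead of it during $(\tau,\tau+W]$, plus the mass of the tagged content itself, is at most $N$. Evictions only remove mass behind the tagged item, so the mass ahead is exactly what was moved to the MRU end since $\tau$, counting each distinct conversation once at its latest size. Let $D^{(N)}(t)$ denote this displacement mass divided by $N$. Then
\[
H_{j+1}=\one\{D^{(N)}(W)+R_j/N\le 1\},
\]
up to boundary ties. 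The term $R_j/N$ is negligible because $B$ has exponential moments.

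\textbf{Step 2: a law of large numbers for $D^{(N)}$.} Decompose the conversations contributing to $D^{(N)}(t)$ into two classes.
\begin{itemize}
\item \emph{Fresh conversations.} These arrive in $(\tau,\tau+t]$ as a Poisson stream of rate $N\lambda_0$. Over their whole lifetimes they contribute mean mass $N\lambda_0\E[B]t$ to first order, after bookkeeping that counts a conversation's size at its last access before $\tau+t$.
\item \emph{Pre-existing conversations.} These were active at $\tau$ and are refreshed during the window. One has pending reusable history $R_\ell$ with $\ell<M$ and a residual time to its next prompt. Integrating over ages via Poisson thinning and the renewal structure of $W$ gives the term $N\lambda_0A_H\int_0^t\bar F(u)\,du$.
\end{itemize}
The exact bookkeeping of sizes is what \cref{lem: Mean Displacement Curve} establishes, and we take $\E[D^{(N)}(t)]\to d(t)$ from it.

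For concentration, $ND^{(N)}(t)$ is a functional of a Poisson process of conversations marked by independent trajectories $(M,\{Z_\ell\},\{W\})$. Contributions from distinct conversations are independent, and the tagged conversation's own contribution is $O(B/N)$. Campbell's theorem then bounds the variance by $N\lambda_0\,\E[B^2](t+C)$, which is finite by the exponential moment. Chebyshev gives $D^{(N)}(t)\to d(t)$ in probability for each $t$. Monotonicity of $D^{(N)}$ in $t$ and continuity of $d$ upgrade this to uniform convergence on compacts via a finite grid.

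\textbf{Step 3: passing to the limit in the hit ratio.} Since $d$ is strictly increasing, on the event $\sup_{t\le T}|D^{(N)}-d|<\epsilon$ the following hold:
\begin{itemize}
\item $W\le d^{-1}(1-2\epsilon)$ implies a hit;
\item $W> d^{-1}(1+\epsilon)$ implies a miss.
\end{itemize}
The complement of that event has vanishing probability.

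The Lipschitz property of $F$ makes the window between these two thresholds carry probability $O(\epsilon)$. So $\pro(H_{j+1}=1\mid R_j)\to F(T_C)$ uniformly in $j$. One point needs care: $W$ is independent of the displacement process. This holds because $D^{(N)}$ is driven by other conversations, while the tagged conversation's own future is independent of $W$ apart from the negligible $R_j/N$ term.

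Then
\[
\E\Big[\sum_{j=1}^{M-1}R_jH_{j+1}\Big]\to F(T_C)A_H
\]
by dominated convergence with dominating function $MB$, which is integrable. Dividing by $A_R$ gives $h^{(N)}\to h^{(\infty)}$.

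\textbf{Main obstacle.} The hard step is Step 2, and specifically the dependence between the tagged conversation's residency and the stationary state at time $\tau$. The state seen at a tagged prompt is not a Poisson-arrival snapshot, because of Palm versus stationary effects and because the tagged conversation's earlier history correlates with the cache contents. I would handle this by constructing the displacement directly from the Poisson marked-point process of all other conversations, which is independent of the tagged one by Slivnyak's theorem. This makes $D^{(N)}$ a deterministic functional of an independent Poisson process, without needing the full LRU state. The cost is that sizes must be counted at the refresh time, which requires conditioning on each conversation's turn index at $\tau$. That conditioning is the source of the $\bar F$ integral and needs careful renewal-type computation.
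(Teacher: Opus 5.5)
Your proposal is correct and follows essentially the paper's route. That route is a Slivnyak/Poisson-random-measure representation of the displacement whose law does not depend on the tagged turn, the mean curve $d$, a Campbell variance bound with Chebyshev and monotonicity, a sandwich of the hit event around $T_C$ using the Lipschitz continuity of $F$, and summation over turns. The one real difference is the final step: you use dominated convergence (the conditional hit probability is the same function $g_N(R_j)$ for every $j$, converging to $F(T_C)$ pointwise in the value of $R_j$, dominated by $\sum_{j<M}R_j\le B^2$) instead of the paper's truncation at level $J$ with an explicit tail bound. This is cleaner but gives no finite-$N$ error bound, and the convergence is pointwise in $R_j$ rather than ``uniformly in $j$'' as you wrote. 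Also, the $\lambda_0\E[B]t$ term is the stationary new-block rate of all active conversations, not the contribution of fresh arrivals; this does not affect your argument, since you take the mean curve from the paper's lemma.
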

\vspace{-0.5em}
\vspace{-0.3em}

\paragraph{theoretically-motivated hit ratio estimator.}
In real multi-turn LLM conversation systems, the last KV block generated at each turn is usually partially filled, which cannot be reused. We then modify the MCM to a practical version, where we further assume that the last block of each turn is partially filled and hence unhashable for prefix matching. Excluding one such block per non-final turn gives the expected reusable workload $A_H^{\mathrm{hash}}:=A_H-\E[M-1].$ Applying the same mean-field argument to the practical MCM, we have that the hit ratio $h_{ \mathrm{prac}}^{(N)}$ converges to the mean-field hit ratio $h_{ \mathrm{prac}}^{(\infty)}:={(A_H^{\mathrm{hash}}}/{A_R})\,F(T_C^{\mathrm{prac}})$, where $T_C^{\mathrm{prac}}$ is the practical characteristic time, which is the  solution to the  equation $d_{\mathrm{prac}}(t):=\lambda_0 \E[B]t+\lambda_0 A_H^{\mathrm{hash}}\int_0^t \bar F(u) du =1.$

Based on our convergence results, we propose a theoretically-motivated mean-field estimator for the hit ratio. Suppose that we have the estimates $\hat \lambda_0$, $\hat A_R, \widehat A_H^{\mathrm{hash}}, \widehat{\E[B]}, \widehat F$ of $\lambda_0$, $A_R, A_H^{\mathrm{hash}}, {\E[B]}, F$, respectively, and we have HBM capacity $N$. Then our estimator is given by 
\begin{equation} \label{eq:estimator}
    \widehat h:={\widehat A_H^{\mathrm{hash}}}/{\widehat A_R} \, \widehat F(\widehat T_C^{\mathrm{prac}}),
\end{equation}
where $\widehat T_C^{\mathrm{prac}}$ is the solution to $\hat \lambda_0 \widehat{\E[B]}\widehat T_C^{\mathrm{prac}}+\hat \lambda_0 \widehat A_H^{\mathrm{hash}}\int_0^{\widehat T_C^{\mathrm{prac}}}\widehat{\bar F}(u)\,du=N.$ Here $\hat \lambda_0$ denotes the unscaled conversation arrival rate, so the threshold
is $N$ rather than $1$.

\vspace{-0.5em}
\vspace{-0.5em}

\section{LLM Serving Experiments}
\label{sec: LLM Serving Experiments}
% \chutong{model - card - dataset. Cut to 1/3 of the current length. Cite ShareGPT. Put our code online and refer to it. Second paragraph: report results. }
\vspace{-0.5em}
\vspace{-0.2em}

In this section, we validate our estimator \eqref{eq:estimator} using public multi-turn ShareGPT conversations (\cite{sharegpt_vicuna_unfiltered}). Because each trace contains both prompts and responses, we use Prefill-Decode disaggregation to avoid inconsistencies between the generated responses and subsequent recorded prompts. We deploy Qwen3-8B on 5 Ascend 910B2 NPUs, with one NPU as the prefiller.
% \chutong{No need to appear here; moved to appendix} For each prompt, the decoder generates the same number of tokens as the recorded response, but its output is discarded. Instead, we append the recorded response to the next-turn prompt to send to the prefiller after a sampled interarrival time. 
We assign 40 GB of prefiller HBM for the KV cache, such that the block capacity is $N=2,275$. We use 5,000 conversations for measurement under Poisson conversation arrivals. More details on the experimental setting are deferred to \cref{sec: LLM-Inference Experiment and Empirical Results}.

In Experiment 1, we fix $\lambda_0=1.5$ and vary the target mean prompt interarrival time $\mu_F$, using an exponential clock. However, a new prompt is submitted when both the sampled exponential clock rings and the previous decoding finishes. The prior estimator uses $F(t)=1-exp(-\frac{t}{\mu_F})$, while the posterior estimator replaces $\mu_F$ by the measured mean interarrival time $\hat{\mu}_F$. We report the absolute error $\mathrm{AE}=|\text{estimated hit ratio}-\text{actual hit ratio}|$ and relative error $\mathrm{RE}=\mathrm{AE}/\text{actual hit ratio}\times100\%$. In Experiment 2, we fix $\mu_F=135$s, under which $\hat{\mu}_F$ is closest to the target, and vary the conversation arrival rate from $0.5$ to $2.0$. Results are shown in \cref{fig: E1-1 hit ratio results,fig: E1-1 Error Analysis,fig: E1-2 hit ratio results,fig: E1-2 Error Analysis}. Both estimators capture the decrease in empirical hit ratio when either $\mu_F$ or the $\lambda_0$ increases. In addition, the AE remains below $0.02$, and the RE is below $10\%$ when the realized interarrival time closely matches its target.

\vspace{-0.5em}
\vspace{-0.5em}

\section{Conclusion}
\vspace{-0.5em}
\vspace{-0.2em}

We studied the characterization of the hit ratio in LLM serving systems with multi-turn conversations.   
We proposed a parsimonious multi-turn conversation model under the widely adopted least-recently-used policy, which captures several key components governing the real system dynamics. Through a mean-field asymptotic framework, we proved the convergence of the hit ratios to a certain mean-field hit ratio limit as the HBM capacity $N\to \infty$. Based on our characterization of the mean-field limit, we designed a hit ratio estimator for practical LLM serving systems, and validated its accuracy via LLM serving experiments with the Qwen3-8B model implemented on Ascend 910B2 NPUs. Our work provides a theoretical foundation for analyzing multi-turn LLM services, and provides practical system-design guidelines on memory provisioning that achieves a target hit ratio.

% \section{Conclusion}

% \chutong{Some standarization for terminology:  1. We write ``least-recently-used policy'' with ``-''; abbreviation is introduced where they were mentioned at the first time and we use the abbreviation thereafter, e.g., multi-turn conversation model (MCM).... and we use MCM only after that. 2. Call $X_i$ the number of prompt blocks and $Y_i$ the number of response blocks. (these are more easy to understand; avoid prefill block and decode block) }

\bibliographystyle{plainnat}
\bibliography{reference}

\appendix

\section{Prior Work}
\label{sec: prior work}

\subsection{Caching Systems and KV-Cache Management for Multi-turn LLM Serving}

% Recent LLM serving systems support cross-request KV-cache reuse at several layers. CachedAttention studies KV reuse across turns, Mooncake expands effective cache capacity through a tiered and disaggregated architecture, and production-workload characterization shows that KV-cache reuse intervals and probabilities vary substantially across applications (\cite{gao2024cachedattention,qin2025mooncake,wang2025kvcache}). These mechanisms have also entered practical inference platforms: TensorRT-LLM manages limited KV-cache blocks using LRU and priority-based eviction and supports KV-aware routing, while the DeepSeek API reports and prices cache-hit and cache-miss prompt tokens separately (\cite{nvidia2025kvcachereuse,deepseek2024contextcache}). Because different requests may reuse substantially different numbers of KV blocks, a request-level hit indicator does not capture the volume of reuse. We therefore adopt a block-weighted hit ratio, which measures the fraction of reusable KV-state workload served from HBM.

Recent caching systems reveal several workload characteristics that are particularly relevant to multi-turn LLM serving. In conventional caching systems, some caches' highly heterogeneous object sizes have already been observed in production workloads and incorporated into cache admission and resource allocation (\cite{berger2017adaptsize,berger2018robinhood,berg2020cachelib}). In LLM serving, KV states are typically managed at the block level (\cite{kwon2023pagedattention}), while block reuse may span multiple conversation turns and even multiple storage tiers (\cite{deepseek2024contextcache,gao2024cachedattention,qin2025mooncake}). Production traces further show substantial heterogeneity in KV-cache reuse intervals, reuse probabilities, and reusable prefix lengths across requests \cite{wang2025kvcache}. Practical serving platforms therefore employ mechanisms such as LRU- or priority-based eviction, prefix matching, and KV-aware routing under limited cache capacity (\cite{nvidia2025kvcachereuse}). Moreover, a distinctive feature of multi-turn LLM workloads is that the cacheable blocks associated with the same conversation grow across successive accesses, rather than remaining a fixed-size object.

\subsection{Previous Analyses of LRU and Hit Ratio}
\label{ssec: Previous Analyses of LRU and Hit Ratio}
    
We now review prior analytical work on the least-recently-used (LRU) cache management policy. Existing analyses have mainly focused on web and content caching systems, where a fixed population of cacheable objects is repeatedly requested according to independent, stationary, or temporally correlated request processes. A common objective is to characterize the miss or hit probability of LRU in large-cache regimes, often through asymptotic approximations that replace the complicated LRU ordering dynamics with a more tractable working-set or time-to-live (TTL) representation. One of the earliest asymptotic analyses of LRU can be found in \cite{fagin1977asymptotic}, which showed that the expected LRU miss ratio can be asymptotically approximated by the miss ratio of a corresponding working-set model. This working-set model characterizes cache residency through a fixed window of recent page references. A closely related line of work analyzes LRU through a deterministic characteristic time that approximates the random eviction time of a cached object. \citet{che2002hierarchical} introduced this characteristic time approximation for hierarchical web caches, which was later widely referred to as the Che approximation. A mathematical explanation for its accuracy was provided in \cite{fricker2012versatile} by relating the eviction of a tagged object to the cumulative number of distinct objects requested after its last access. In \cite{martina2014unified}, this characteristic time decoupling principle was further extended to general renewal traffic, multiple cache management policies, and interconnected caches. Subsequent work established stronger asymptotic foundations for characteristic time and TTL approximations under more general request processes, including shot-noise request models with temporal locality (\cite{leonardi2017modeling}) and independent stationary and ergodic request processes (\cite{jiang2018convergence}). In addition, \citet{gast2017ttl} connected TTL approximations for LRU variants to deterministic transient mean-field ODE limits under the IRM, whose fixed points recover the proposed TTL approximations.

Our analysis builds on a similar general principle. We represent LRU eviction through the cumulative workload that moves ahead of a tagged conversation content and identify a deterministic characteristic time in the mean-field limit. However, multi-turn LLM conversations introduce several challenges that are absent from previous theoretical work, which focuses on traditional object-based caching models. First, cacheable KV blocks have finite reuse lifetimes. A block can only be reused by subsequent turns of its corresponding conversation. Once the conversation terminates, its blocks may still occupy HBM but can never generate another cache hit. Moreover, the server does not know in advance when a conversation will terminate, which contrasts with classical IRM settings with a fixed population of objects and time-invariant request probabilities. Second, the reusable history associated with a conversation grows over time, so a later access can touch a nondecreasing number of KV blocks. As a result, the displacement process must track the amount of KV block workload moved ahead of the tagged content, while the size of the tagged content itself is also random and increases with the turn state. Finally, request-level hit probability is insufficient for quantifying cache effectiveness in LLM serving because different prompts may reuse substantially different numbers of historical KV blocks. Inspired by the byte hit ratio used in content delivery network (\cite{berger2017adaptsize}), we therefore define a hit ratio that weights successful reuse by the number of KV blocks served from HBM, which further quantifies the computational and data-movement savings benefited by LRU caching.

\section{Model details and formal definitions}
\label{sec: Model Settings}

In this section, we formulate the multi-turn conversation model (MCM) and introduce the whole-content Least-Recently Used (LRU) policy. We begin with basic notation in \cref{ssec: Basic Notations}. In \cref{ssec: Generalized SCM}, we define the MCM for multi-turn LLM inference. We then introduce the  hit ratio and characterize its intrinsic upper bound in \cref{ssec: The  Hit Ratio}. In \cref{ssec: The LRU Policy}, we specify the whole-content LRU policy used in our theoretical analysis. The mean-field scaling is introduced in \cref{ssec: Our Mean-Field Model}. Finally, in \cref{ssec: tagged-conversation Displacement Process and Characteristic Time}, we define the tagged-conversation displacement process and its characteristic time. These two terms will be used as the key stochastic objects in our mean-field analysis in \cref{sec: main results}. Their limiting behavior is important for our subsequent hit ratio convergence analysis.

\subsection{Basic Notations}
\label{ssec: Basic Notations}

We let $\mathbb N$ denote the set of nonnegative integers and $\mathbb N_+$ denote the set of positive integers. For a cumulative distribution function $F$, its survival function is defined as $\bar F$ with $\bar F(t):=1-F(t)$. 

\subsection{Multi-Turn Conversation Model}
\label{ssec: Generalized SCM}

We first introduce our multi-turn conversation model (MCM). In LLM practice, KVs are managed as blocks in the system, the number of which can be modeled to follow some integer distribution. For instance, in our experiment, 128 tokens are arranged in a block, which always has a constant memory requirement. We consider a system with an  HBM  KV block memory capacity $N\in\mathbb N$, i.e., the HBM can store at most $N$ KV blocks. New conversations arrive according to a Poisson process of rate $\lambda_0$. Each conversation enters the system and immediately produces a prompt corresponding to $X_1$ KV blocks, and the LLM generates a sentence corresponding to $Y_1$ KV blocks. We assume that the number of turns of a conversation follows some positive integer distribution $M$. At turn-$j$, the numbers of prompt blocks and response blocks generated follow a joint distribution $(X_j, Y_j) \in \N^2$.

Our model is motivated by the Prefill-Decode disaggregation setting, where the KV block composition at successive turns is visualized in \cref{fig: KV content increments example}. We define the turn-wise content increment as $Z_1:=X_1$ and
\begin{equation*}
    Z_j:=X_j+Y_{j-1},\qquad j\geq2.
\end{equation*}
We assume $Z_j\in\N_+$, so that every turn increases the history content by at least one KV block. The history content size immediately after the turn-$j$ prompt is therefore defined as
\begin{equation}
    \label{equL The history content size R_j}
    R_j:=\sum_{\ell=1}^{j}Z_\ell=\sum_{\ell=1}^{j}X_\ell+\sum_{\ell=1}^{j-1}Y_\ell.
\end{equation}
Thus, $R_j$ contains all prompt KV blocks up to turn $j$ and all response KV blocks generated up to turn $j-1$. This entire history content is required to process the turn-$j$ prompt and generate the corresponding response.   

\paragraph{A running example.} \cref{fig: LLM multi turn example} illustrates a running example of how KV blocks in one turn are managed. Lowercase letters denote tokens, and uppercase letters denote block numbers. The tokens at the start of turn-$3$ are $(x_1, y_1, x_2, y_2, x_3)$. All history KV blocks (or ``prefix KV blocks'') $(X_1,Y_1, X_2)$ are retrieved and reused from HBM, based on which the prefiller generates $Y_2$ KV blocks for the response $y_2$ from the last turn, and $X_3$ blocks for the new prompt $x_3$. All KV blocks $(X_1, Y_1, X_2, Y_2, X_3)$ are passed to a decoder to generate the new response $y_3$ for the prompt $x_3$. But only the tokens $y_3$, not the KV blocks $Y_3$, are sent back to the prefiller for turn-$4$'s compute.

\begin{figure}[ht] 
    \centering
    \begin{subfigure}[t]{0.39\linewidth}
        \centering
        \includegraphics[width=\linewidth]{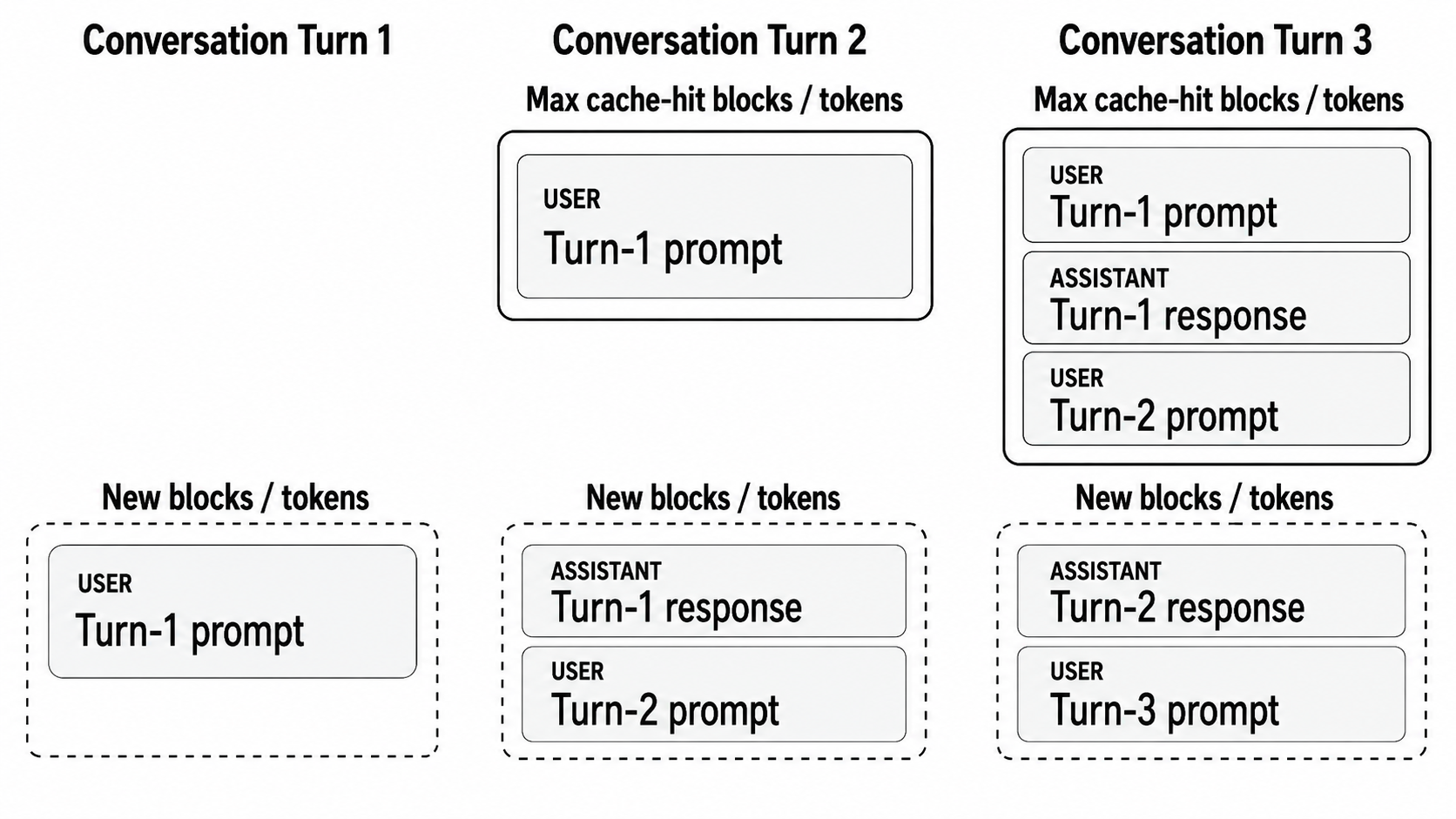}
        \caption{Multi-turn prefix cache reuse example}
        \label{fig: KV content increments example}
    \end{subfigure}
    \hfill
    \begin{subfigure}[t]{0.59\linewidth}
        \centering
        \includegraphics[width=\linewidth]{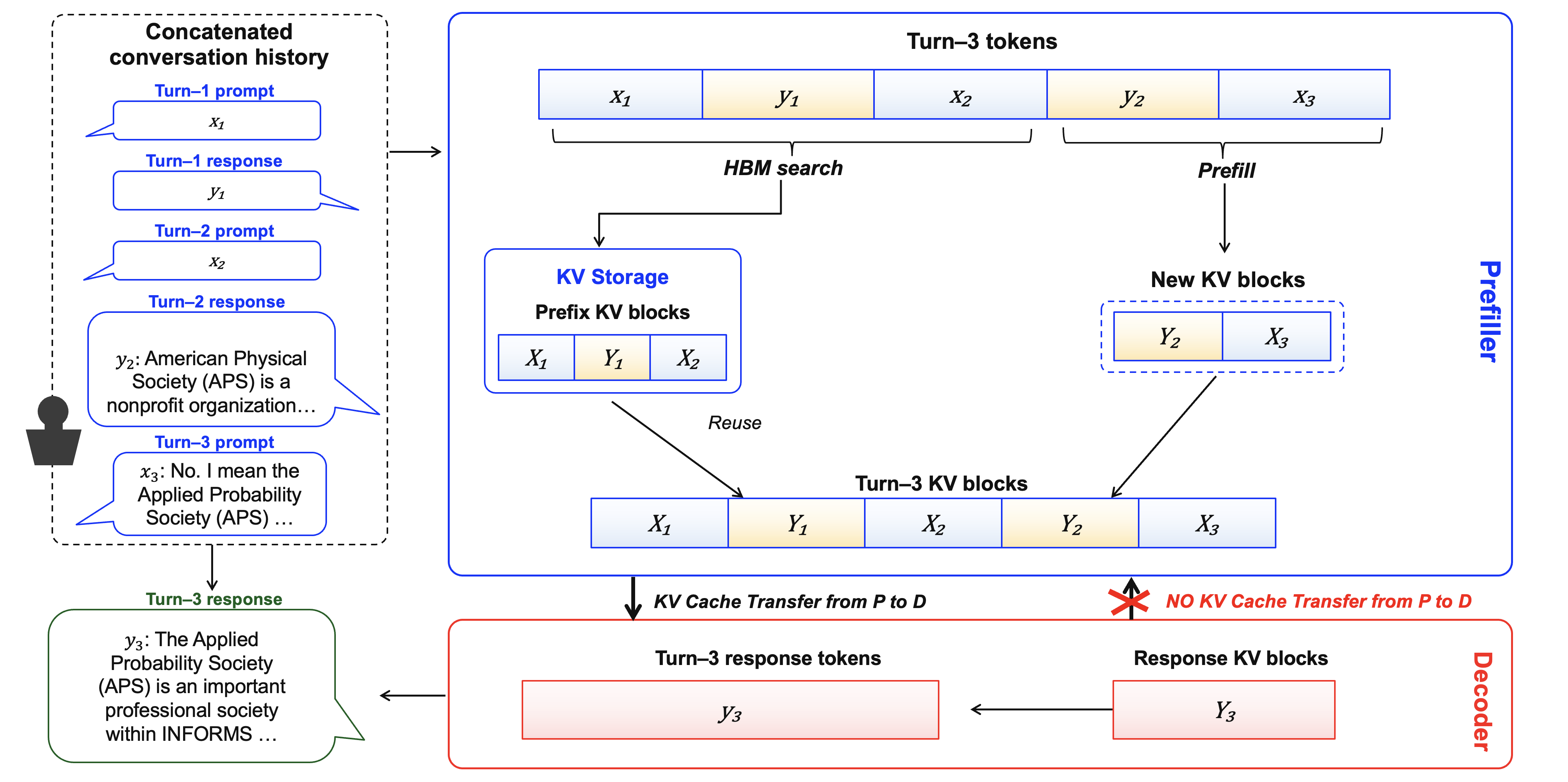}
        \caption{Turn-Level KV cache workflow example}
        \label{fig: LLM multi turn example}
    \end{subfigure}
    \caption{(a) Illustration of the KV block composition at successive turns, under the Prefill-Decode disaggregation setting. (b) An example of how the KV blocks in one turn are managed, assuming the full history KV blocks reside in HBM.}
    \label{fig: Multi-turn examples}
\end{figure}

\begin{remark}
    The treatment of response KV blocks may differ across several state-of-the-art LLM serving architectures. The HBM KV block increment dynamics described above are most relevant in the Prefill-Decode disaggregation architecture in which response KV blocks are released after decoding. In this case, when the conversation receives the next-turn prompt, the corresponding response history must be reconstructed on the prefill side.

    Under some Prefill-Decode disaggregation frameworks, however, response KV blocks may be transferred back to the prefill-side worker through a ``write-back" mechanism and retained for future reuse. Similarly, when prefilling and decoding are operated on the same processing unit, the newly generated response KV blocks may remain directly resident in HBM. Nevertheless, in either case, the alternative architecture can be accommodated by redefining the turn-wise increment $Z_j$ and the turn boundary.
\end{remark}

Our MCM is constructed under the following assumptions.

\begin{assumption}[Independent Block Increments]
    \label{ass:independent block incrememts}
    The block-increment pairs $\{(X_j,Y_j)\}_{j\geq1}$ (allowing arbitrary dependence across turns) are independent of the total number of turns $M$ within a conversation. The turn-wise content increments $Z_i$ satisfy that $Z_i \ge 1$ for all $i\ge 1$.
\end{assumption}

\begin{assumption}[Light-tailed Cumulative Block Workload]
    \label{ass: Regularity of one cumulative conversation blocks}
    The cumulative number of KV blocks generated over the lifetime of a conversation, defined as
    \begin{equation}
        \label{equ: cumulative number of blocks}
        B:=R_M=\sum_{\ell=1}^{M}Z_\ell,
    \end{equation}
    has a finite moment generating function in a positive neighborhood of $0$, i.e., there exists $\theta_0>0$ such that $\E[e^{\theta_0 B}]<\infty$.
\end{assumption}

\begin{assumption}[i.i.d. Prompt Interarrival Times with a Lipschitz-continuous CDF.]
    \label{ass: Exponential Prompt Inter-arrival Time}
    For each conversation, let $W_j$ denote the interarrival time between the submissions of the turn-$j$ prompt and the turn-$(j+1)$ prompt. The sequence $\{W_j\}_{j\geq1}$ is assumed to be i.i.d. with cumulative distribution function (CDF) $F$, independent of $M$ and the block increments $\{(X_j,Y_j)\}_{j\geq1}$. We further assume that $F$ is Lipschitz-continuous with constant $L_F$ and $\mu_F:= \E [W] <\infty$.
\end{assumption}

We note that the MCM reduces to the stochastic conversation model (SCM) introduced in \cite{zhang2026tail}, under the following additional assumptions:
\begin{enumerate}
    \item The block-increment pairs $\{(X_j,Y_j)\}_{j\geq1}$ are i.i.d.;
    \item The prompt interarrival times are i.i.d. exponentially distributed with rate $r>0$;
    \item The total number of turns $M$ is geometrically distributed. Specifically, after each completed turn, a conversation continues with probability $p\in(0,1)$ and terminates with probability $q:=1-p$. Equivalently, an active conversation can be viewed as having an independent new-prompt Poisson clock with rate $pr$ and an independent termination Poisson clock with rate $qr$.
\end{enumerate}
When the block increments additionally have constant means $\E[X_j]=k$ and $\E[Y_j]=m$ for all $j\geq1$, the expected history content size after turn $j$ can be simplified as
\begin{equation*}
    \bar R_j:=\E[R_j]=jk+(j-1)m.
\end{equation*}

\subsection{The  Hit Ratio}
\label{ssec: The  Hit Ratio}

We next introduce the hit ratio, which measures the cache reuse effectiveness of a management policy and will serve as the main performance metric throughout the paper. Unlike a request-level hit ratio that assigns equal weight to each cache access, the hit ratio quantifies the fraction of the overall KV block workload that can be directly served through history reuse from HBM. This block-level metric is particularly important for multi-turn LLM serving, because different conversations, and even different turns within the same conversation, may have substantially different history lengths.

Consider a generic conversation with $M$ turns and history content sizes $\{R_j\}_{j=1}^M$. If the conversation continues from turn $j$ to turn $j+1$, then the $R_j$ history blocks become reuse-eligible at the next prompt. Ideally, these blocks can be retrieved directly from HBM rather than reconstructed through reprefilling or transferred from lower-level storage such as CPU DRAM or SSD. Before decoding the turn-$(j+1)$ response, the previously generated history and the newly arrived turn-$(j+1)$ prompt must be organized as the complete prefix required by the model. We note that, on the one hand, recovering missing history from lower-level storage introduces additional communication and storage overhead. On the other hand, reprefilling the corresponding tokens incurs additional computation and increases the workload on the prefiller. In contrast, directly reusing resident KV blocks from HBM can reduce both data transfer overhead and recomputation cost. This motivates a workload-level metric that captures the amount of reusable KV history served from HBM.

We next specify the conditions under which a KV block stored in HBM can be successfully reused when its corresponding conversation receives a new prompt. In practical prefix-caching systems, two conditions are generally required:
\begin{enumerate}
    \item The block must be fully populated and therefore hashable, which means that its complete token content can be mapped to a deterministic identifier (a hash value) and stored in a hash map for later exact prefix matching and lookup. For instance, in our experiments (see \cref{sec: LLM-Inference Experiment and Empirical Results}), each block corresponds to $128$ tokens. A block having less than $128$ tokens becomes an unhashed block and cannot be reused. Such a phenomenon happens because the block is from the tail of a conversation and cannot fill a block.
    \item All preceding blocks in the same prefix must also be available\footnote{However, the preceding blocks do not necessarily need to reside in HBM. They may instead be retrieved from lower-level storage, such as CPU DRAM or SSD. After finding these prefix blocks, the subsequent blocks that remain resident in HBM can still be reused.}, because cache reuse follows the prefix-hit rule. For example, suppose a conversation contains blocks indexed from $0$ to $6$. Even if block-$5$ is fully populated and resident in HBM, it cannot be reused if block-$4$ is unavailable in any storage. In this case, block-$5$ must be reprefilled together with the missing preceding blocks.
\end{enumerate}

In our theoretical MCM analysis, we assume that all KV blocks are fully populated and hence hashable, so that condition 1 is always satisfied. The effect of unhashable tail blocks will be incorporated later in our practical MCM. We now define the hit ratio.

\begin{definition}
    \label{def:  hit ratio}
    For each turn $j\geq 2$, let the random variable $R^H_j \in \N$  denote the stationary number of history-prefix blocks reused directly from HBM under the LRU policy by a turn-$j$ access, with $0\leq R^H_j \leq R_{j-1}$. The \emph{hit ratio} under LRU is defined as
    \begin{equation}
    \label{equ:  hr}
        h :=\frac{\E \Big[\sum\limits_{j=2}^{M}R^H_{j} \Big]}{\E \Big[\sum\limits_{j=1}^{M} R_j\Big]}.
    \end{equation}
\end{definition}

In \eqref{equ:  hr}, the denominator $\E \Big[\sum\limits_{j=1}^{M} R_j\Big]$ represents the expected cumulative history content workload over the lifetime of a conversation, whereas the numerator represents the portion of this workload that is successfully touched through direct HBM reuse.

We use a simple example to illustrate this definition. Suppose a conversation contains three turns, with $X_1=2$, $X_2=4$, $Y_1=2$, $Y_2=5$, and $X_3=3$. Then the history content sizes are $R_1=X_1=2$, $R_2=X_1+(X_2+Y_1)=8$, and $R_3=X_1+(X_2+Y_1)+(X_3+Y_2)=16$. Suppose that, under LRU, all history blocks required at turns 2 and 3 are successfully reused from HBM. Then the  hit ratio is
\begin{equation*}
    h=\frac{0+2+8}{2+8+16}=\frac{10}{26}.
\end{equation*}

A key feature of the hit ratio is that its intrinsic upper bound is usually strictly smaller than $1$. This is because newly arrived prompt blocks cannot be reused from HBM, and only history contents associated with conversations that continue to the next turn are eligible for reuse. We therefore characterize the maximum upper bound on the hit ratio.

Define the sequence of tail probabilities 
\begin{equation}
    \label{equ: survival sequence}
    S_j:=\pro(M\geq j), \qquad j\geq 1,
\end{equation}
where $S_1=1$ and $S_j\downarrow 0$. For every $j$ such that $S_j>0$, define
\begin{equation}
    \label{equ: continuation probability at turn j}
    p_j:=\pro(M\geq j+1\mid M\geq j)=\frac{S_{j+1}}{S_j},
\end{equation}
as the continuation probability at turn-$j$, and $q_j:=1-p_j$ as the corresponding termination probability. The expected cumulative history content workload over a conversation lifetime is
\begin{equation}
    \label{equ: total number of blocks}
    A_R:= \mathbb{E} \left[ \sum\limits_{j=1}^M R_j \right ] =  \sum_{j\geq1}S_j\bar R_j,
\end{equation}
whereas the expected amount of reusable history content workload is
\begin{equation}
    \label{equ: number of blocks eligible for reuse}
    A_H:= \mathbb{E} \left[ \sum\limits_{j=1}^{M-1} R_j \right ] = \sum_{j\geq1}S_{j+1}\bar R_j.
\end{equation}
We define the intrinsic reuse potential of the workload as
\begin{equation}
    \label{equ: upper bound of all hit ratio}
    h_{\max}:=\frac{A_H}{A_R}.
\end{equation}
Then, for any HBM capacity,
\begin{equation*}
    h \leq h_{\max}.
\end{equation*}
The quantity $h_{\max}$ depends only on the statistical structure of the multi-turn conversation workload and is independent of the cache-management policy or the conversation-prompt arrival dynamics.

For instance, in our SCM model, where $p_j \equiv p$, $q_j \equiv 1-p$, and $\bar R_j = jk+(j-1)m$, we have that $A_R = \frac{k+pm}{q^2}$ and $A_H = \frac{p(k+pm)}{q^2}$. Hence, the intrinsic reuse potential has a closed form $h_{\max} = p$, which only depends on the continuation probability, or equivalently the mean number of turns in a conversation.

\subsection{The Least-Recently-Used (LRU) Policy}
\label{ssec: The LRU Policy}

In our MCM, we consider a whole-content version of the least-recently-used (LRU) policy. The system maintains an ordered resident list $\mathcal L$, from the most-recently-used (MRU) end to the LRU end. The list is updated whenever a new prompt is processed and the corresponding conversation content is inserted or refreshed in HBM. Suppose an arriving prompt belongs to a resident conversation $\theta\in\mathcal L$. The conversation label $\theta$ is removed from its current position and moved to the MRU end, while its previously stored content is replaced by the enlarged content containing the newly generated blocks. If the prompt belongs to a nonresident conversation $\theta\notin\mathcal L$, the conversation is inserted directly at the MRU end, together with its available history content and the newly generated blocks.

As specified in \cref{ssec: Generalized SCM}, the HBM can store at most $N$ KV blocks. Whenever inserting or updating a conversation content causes the total resident KV block workload to exceed this capacity, the system will evict complete conversation contents sequentially from the LRU end until the capacity constraint is restored.\footnote{In practical implementations, LRU eviction may operate at the block level and therefore allow partial-prefix eviction. We discuss these implementation differences in \cref{sec: LLM-Inference Experiment and Empirical Results}.} The whole-content LRU policy under our consideration is described by Algorithm \ref{alg: Whole-Content LRU} below. 

\begin{algorithm}[ht]
\caption{Whole-Content LRU}
\label{alg: Whole-Content LRU}
    \KwIn{HBM capacity $N$; ordered resident list $\mathcal L=(i_1,\ldots,i_K)$ from MRU to LRU, where $K$ is the number of conversations whose contents are stored in the HBM; resident content sizes $\{Z_i\}_{i\in\mathcal L}$; arriving turn $(\theta,j)$ with content size $R_j$.}
    
    \KwOut{Updated LRU list $\mathcal L$, content sizes $\{Z_i\}$, and hit indicator $H_{\theta,j}$.}
    
    $H_{\theta,j}\gets \one \{\theta\in\mathcal L\}$\;
    
    \If{$\theta\in\mathcal L$}{
        remove $\theta$ from its current position in $\mathcal L$\;
    }
    
    $Z_\theta\gets R_j$\;
    
    $\mathcal L\gets(\theta,\mathcal L)$
    \tcp*[r]{Move the complete updated content to MRU}
    
    \While{$\displaystyle\sum_{i\in\mathcal L} Z_i>N$}{
    
        $i^*\gets$ last element of $\mathcal L$
        \tcp*[r]{Current LRU conversation}
    
        remove $i^*$ from $\mathcal L$\;
    
        delete $Z_{i^*}$
        \tcp*[r]{Whole-content eviction}
    }
    \Return{$(\mathcal L,\{Z_i\},H_{\theta,j})$}\;
\end{algorithm}

Under the whole-content LRU policy in our MCM, a conversation content is either fully resident in HBM or entirely absent. If the content is missed when its next-turn prompt arrives, its history blocks must be retrieved from lower-level storage or reconstructed through reprefilling. Therefore, the  hit ratio admits the following equivalent characterization.

For each turn $j\geq 2$, define the LRU hit indicator
\[H_{j} := \one \left\{ \text{the history content of size $R_{j-1}$ is resident in HBM when turn $j$ arrives} \right\}.
\]
If $H_j=1$, all $R_{j-1}$ history blocks can be directly reused from HBM. Otherwise, none of these blocks is touched through HBM reuse under the whole-content convention. Hence, the  hit ratio in \eqref{equ:  hr} can be equivalently written as
\begin{equation}
    \label{equ:  hr LRU}
    h  :=\frac{\E \Big[\sum\limits_{j=2}^{M}R_{j-1}H_{j}\Big]}{\E \Big[\sum\limits_{j=1}^{M} R_j\Big]}.
\end{equation}

\subsection{The Mean-Field Asymptotic Regime}
\label{ssec: Our Mean-Field Model}

We next introduce the mean-field scaling used in our theoretical analysis. The motivation is that, if the HBM capacity $N$ is very large while the conversation arrival process remains unchanged, an arriving prompt can retain almost all reusable history blocks in HBM, and the hit ratio approaches the upper bound $h_{\max}$. Similarly, when $N$ is very small, only a negligible fraction of the history workload can remain resident in HBM, and the hit ratio approaches $0$. To obtain a nontrivial asymptotic regime, we therefore scale the conversation arrival rate proportionally with the HBM capacity.

Specifically, for the system indexed by $N$, we set
\begin{equation*}
    \lambda_0^{(N)}:=N\lambda_0,
\end{equation*}
while keeping the prompt interarrival distribution $F$, the turn-number distribution $M$, and the block-increment sequence $\{(X_j,Y_j)\}_{j\geq1}$ unchanged. Under this scaling, both the HBM capacity and the aggregate workload arrival rate grow linearly with $N$, and we will show in \cref{ssec: Mean-Field Results of the MCM under LRU} that this scaling yields a nondegenerate mean-field limit for the cache dynamics and hit ratio.

\begin{remark}
    \label{rmk: different scaling way}

    The mean-field scaling above has an equivalent interpretation obtained through a rescaling of prompt interarrival time. Specifically, suppose that we keep the conversation arrival rate fixed at $\lambda_0$, while scaling the prompt interarrival distribution as $F^{(N)}(\cdot):=F(\cdot/N)$, i.e., the original interarrival times are scaled by a factor of $N$. Under the scaled time $t:=t/N$, conversations then arrive at rate $N\lambda_0$, while the interarrival distribution becomes $F$, which is exactly the mean-field scaling considered above.

    This alternative interpretation is closer to our LLM inference experiments. In practice, increasing the physical conversation arrival rate proportionally with $N$ may overload the prefill and decode workers and make computation be the system bottleneck. The resulting end-to-end (E2E) latency may then substantially affect the observed prompt interarrival times. Therefore, we use the arrival-rate scaling $\lambda_0^{(N)}=N\lambda_0$ for the theoretical mean-field analysis. Nonetheless, in the experiments we keep the physical conversation arrival rate at a practical level and adjust the prompt interarrival times to emulate the corresponding mean-field regime.
\end{remark}

\subsection{Tagged-conversation Displacement Process and Characteristic Time}
\label{ssec: tagged-conversation Displacement Process and Characteristic Time}

Our mean-field analysis uses a classical tagged-job approach in queueing theory. In our setting, we tag a conversation immediately after its turn-$j$ access and study its subsequent evolution in stationarity. We introduce two key notions: the tagged-conversation displacement process and the corresponding characteristic time. These objects provide the basis for our mean-field analysis of eviction and the LRU hit ratio.

We first formulate a tagged-conversation representation of the whole-content LRU dynamics. The key idea is to characterize the eviction of a tagged content through the cumulative KV block mass that moves ahead of it, after its most recent access. This transforms the evolution of the LRU ordering into a one-dimensional cumulative displacement process.

We tag a conversation immediately after one of its prompt accesses. Suppose that the tagged conversation receives its turn-$j$ prompt at time $0$ and is moved to the MRU end of the whole-content LRU list. At this moment, no other resident blocks lie ahead of the tagged content in the LRU ordering. Assuming that the tagged conversation receives no further access before eviction, as other conversations are subsequently accessed, their KV blocks move to the MRU side and therefore accumulate in front of the tagged content.

A background conversation that has already moved ahead of the tagged content should not contribute its entire content again at its later accesses. Once its pre-existing history has crossed the tagged content, only the newly appended KV blocks generated at subsequent accesses further increase the block mass ahead of the tagged content. Hence, for each background conversation $i$, define its cumulative displacement contribution $C_i(t)$ as follows:
\begin{enumerate}
    \item Before its first access after time $0$, $C_i(t)=0$;
    \item At its first access after time $0$, its complete pre-existing history content and the newly accessed prompt's blocks cross the tagged content;
    \item At each subsequent access, only the newly appended KV block mass contributes additional displacement.
\end{enumerate}
We are now ready to define the tagged-conversation displacement process.

\begin{definition}[(Tagged Turn-$j$ Conversation) Displacement Process]
    \label{def: tagged-conversation displacement process}
    In the MCM-LRU system with HBM capacity $N$, consider a tagged conversation immediately after its turn-$j$ access at time $0$, with history content size $R_j$.
    We condition on the event that its content is evicted before its next reuse. The displacement process of this content immediately after its turn-$j$ access is defined as
    \begin{equation}
    \label{equ: tagged displacement process}
        D_{N,j}(t):=\sum_{i\neq \mathrm{tag}} C_i(t), \qquad t\geq0.
    \end{equation}
    The corresponding normalized displacement process is defined as $\frac{D_{N,j}(t)}{N}$.
\end{definition}

Under the whole-content LRU convention, $D_{N,j}(t)$ is nondecreasing in $t$. We next define the corresponding characteristic time, denoted by $\tau_C^{(N)}(j)$.

\begin{definition}[Tagged Turn-$j$ Characteristic Time]
    \label{def: characteristic time}
    Consider a tagged conversation immediately after its turn-$j$ access at time $0$, with history content size $R_j$. We condition on the event that its content is evicted before its next reuse. The characteristic time $\tau_C^{(N)}(j)$ is defined as the time required for the background displacement to evict this tagged content from HBM, assuming that the tagged conversation is not accessed again before eviction. Specifically,
    \begin{equation}
    \label{equ: exact characteristic time}
        \tau_C^{(N)}(j):=
        \begin{cases}
            0, & R_j>N, \\
            \inf\{t\geq 0: D_{N,j}(t)>N-R_j\}, & R_j\leq N.
        \end{cases}
    \end{equation}
\end{definition}

In \cref{ssec: Mean-Field Results of the MCM under LRU}, we establish the convergence of the normalized displacement process $\frac{D_{N,j}(t)}{N}$ and the characteristic time $\tau_C^{(N)}(j)$, and then use these results to derive the mean-field convergence of the LRU hit ratio.

\section{Extended mean-field results}
\label{sec: main results}

In this section, we present the main theoretical results of our mean-field analysis. We first establish the convergence of the displacement process, the characteristic time, and the LRU hit ratio for the multi-turn conversation model (MCM) in \cref{ssec: Mean-Field Results of the MCM under LRU}. In \cref{ssec: Practical MCM Consequence and Theoretically-Motivated Hit Ratio Estimator}, we extend the analysis to a practical MCM that incorporates additional features of real multi-turn LLM serving systems. Based on its mean-field convergence results, we develop a theoretically-motivated estimator for the  hit ratio. The accuracy of this estimator will be evaluated empirically in \cref{sec: LLM-Inference Experiment and Empirical Results}. Unless otherwise stated, all results in this section concern a stationary system.

\subsection{Mean-Field Results of the MCM under LRU}
\label{ssec: Mean-Field Results of the MCM under LRU}

We begin with the mean-field limit of the (tagged-conversation) displacement process. Its normalized sample path converges uniformly on every finite time horizon to the deterministic mean displacement curve. We first define this mean displacement curve.

\begin{definition}
    \label{def: Mean displacement Curve}
    Recall that $A_H:=\sum_{j\geq1}S_{j+1}\bar R_j$ denotes the expected amount of history content workload eligible for reuse, and that $B:=R_M$ denotes the total KV block count required by the end of a conversation. The mean displacement curve $d:[0,\infty)\rightarrow\mathbb R$ is defined as
    \begin{equation}
    \label{equ: Mean Displacement Curve-1}
        d(t):=\lambda_0\E[B]t+\lambda_0 A_H\int_0^t\bar F(u)\,du.
    \end{equation}
\end{definition}

\begin{theorem}[Sample-Path Convergence of the Normalized Displacement Process]
\label{thm: Displacement Process Convergence}
    The tagged turn-$j$ content process follows the same law, such that $D_{N,j}(\cdot)\overset{d}{=}D_N(\cdot)$ for every turn state $j$. In addition, for a fixed $T<\infty$, let $V_T  = \lambda_0T\E[B^2]+\lambda_0\mu_W\E[(M-1)B^2]<\infty.$, and let $L :=\sup_{0\leq t\leq T}d'(t)=\lambda_0\E[B]+\lambda_0A_H$.
    Then, for every $t\in[0,T]$ and every $\varepsilon>0$,
    \begin{equation}
    \label{equ: displacement pointwise concentration}
        \pro\left(\left|\frac{D_N(t)}{N}-d(t)\right|>\varepsilon\right)\leq\frac{V_T}{N\varepsilon^2}.
    \end{equation}
    Moreover, for every $0<\varepsilon\leq1$,
    \begin{equation}
    \label{equ: displacement sample path concentration}
        \pro\left(\sup_{0\leq t\leq T}\left|\frac{D_N(t)}{N}-d(t)\right|>\varepsilon\right)\leq\frac{C_T}{N\varepsilon^3},
    \end{equation}
    where $C_T:=\frac{16V_T}{9}(2+4 L T).$
    
    Consequently, for any fixed $T\geq 0$,
    \begin{equation}
    \label{equ: displacement sample path convergence}
        \sup_{0\leq t\leq T}\left|\frac{D_N(t)}{N}-d(t)\right|\xrightarrow{\mathbb P}0.
    \end{equation}
    In particular, the finite-size bounds imply
    \begin{equation}
    \label{equ: displacement pointwise rate}
        \left|\frac{D_N(t)}{N}-d(t)\right|=O_{\mathbb P}(N^{-1/2})
    \end{equation}
    for every fixed $t$, and
    \begin{equation}
    \label{equ: displacement uniform rate}
        \sup_{0\leq t\leq T}\left|\frac{D_N(t)}{N}-d(t)\right|=O_{\mathbb P}(N^{-1/3}).
    \end{equation}
\end{theorem}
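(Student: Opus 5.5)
The plan is to write $D_N(t)$ as a functional of a marked Poisson process of background conversations and then use second-moment bounds. In stationarity with arrival rate $N\lambda_0$, the background conversations that have a displacement contribution by time $t$ split into two groups. The first group consists of those arriving in $(0,t]$: each contributes all blocks generated up to time $t$, at most $B$. The second group consists of those that arrived before $0$ and are still active at time $0$: each contributes, at its first access in $(0,t]$, its whole current history plus all later increments, again at most its own $B$.

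The tagged conversation's turn index $j$ and its content $R_j$ are independent of this background Poisson process, since Poisson arrivals see time averages and different conversations are independent. Hence $D_{N,j}\overset{d}{=}D_N$. This gives the first claim. I read the conditioning on eviction before reuse as simply suppressing the tagged conversation's own future accesses.

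For the mean, I apply Campbell's formula over the marks $(M,\{Z_\ell\},\{W_\ell\})$. New arrivals in $(0,t]$ contribute $N\lambda_0\int_0^t \E[\text{blocks generated within } t-s]\,ds$. Pre-existing conversations contribute $N\lambda_0\int_{-\infty}^0\E[\cdot]\,ds$, where the expectation covers the history at first access plus later increments. Summing over the turn state at time $0$ and using stationarity of the within-conversation renewal clock, the total contributed mass should decompose as
\[
N\lambda_0\E[B]t+N\lambda_0\sum_{j}S_{j+1}\bar R_j\,\pro(\text{residual clock}\le t).
\]
The next step is to check that this sum equals $N\bigl(\lambda_0\E[B]t+\lambda_0A_H\int_0^t\bar F\bigr)$. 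The stationary backward recurrence time of a pending prompt after turn $j$ has density proportional to $\bar F(u)$ (with normalization $\mu_F$ absorbed into the count of such conversations, which is $N\lambda_0S_{j+1}\mu_F$). The "pending mass" $R_j$ first crosses the tag when the next prompt arrives. Increments beyond that point combine with the new-arrival contributions to give exactly the $\E[B]t$ term, by a rate-conservation argument: mass is generated at long-run rate $N\lambda_0\E[B]$. This identity is presumably what \cref{lem: Mean Displacement Curve} records, and I would invoke it. If it is not available, I would verify it directly by the renewal computation above.

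For the variance, the contributions of distinct conversations are independent marks of a Poisson process, so $\mathrm{Var}(D_N(t))=N\lambda_0\int\E[C^2]\,ds$, with each $C\le B$. New arrivals in $(0,t]$ give at most $N\lambda_0 t\E[B^2]$. Pre-existing active conversations number in expectation $N\lambda_0\E[(M-1)]\mu_W$-weighted, giving at most $N\lambda_0\mu_W\E[(M-1)B^2]$. Altogether $\mathrm{Var}(D_N(t)/N)\le V_T/N$, and Chebyshev gives \eqref{equ: displacement pointwise concentration}. Finiteness of these moments follows from \cref{ass: Regularity of one cumulative conversation blocks}, since $M\le B$ because $Z_\ell\ge1$.

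For the uniform bound, I take a grid $0=t_0<\dots<t_K=T$ with mesh $\delta=\varepsilon/(2L)$, roughly. Since $D_N$ and $d$ are both nondecreasing, for $t\in[t_k,t_{k+1}]$ the error is at most $\max\bigl(|D_N(t_{k+1})/N-d(t_{k+1})|,\,|D_N(t_k)/N-d(t_k)|\bigr)+L\delta$. It therefore suffices to control the grid points at level $\varepsilon/2$. A union bound gives
\[
(K+1)\cdot\frac{4V_T}{N\varepsilon^2}\lesssim \frac{V_T(2+4LT)}{N\varepsilon^3},
\]
and I would tune the constants (splitting as $3\varepsilon/4$ and $\varepsilon/4$) to match $C_T$. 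The rates \eqref{equ: displacement sample path convergence}, \eqref{equ: displacement pointwise rate} and \eqref{equ: displacement uniform rate} follow by taking $\varepsilon\asymp N^{-1/2}$ and $\varepsilon\asymp N^{-1/3}$ respectively.

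The main obstacle is the exact mean computation: correctly accounting for pre-existing conversations whose first post-$0$ access carries their entire history. This requires the stationary joint law of the turn state and residual interarrival time, and verifying that it yields precisely $A_H\int_0^t\bar F$ with no extra terms from conversations that terminate after time $0$.
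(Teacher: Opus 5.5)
Your proposal is correct and follows essentially the same route as the paper. Both use a marked Poisson random measure representation, with $j$-invariance via Slivnyak/PASTA, and split the mean into the pre-existing-history term $\lambda_0 A_H\int_0^t\bar F(u)\,du$ and the new-block term $\lambda_0\E[B]t$ (\cref{lem: Mean Displacement Curve}). The variance bound uses the envelope $B\,\one\{-L(\zeta)<a\le T\}$ plus Chebyshev, and the uniform bound comes from the monotone sandwich on a grid of mesh $\varepsilon/(4L)$ with a union bound at level $3\varepsilon/4$, which gives exactly $C_T$. The only slip is in your displayed mean decomposition: $\pro(\text{residual clock}\le t)=\mu_F^{-1}\int_0^t\bar F(u)\,du$ must be multiplied by $\mu_F$, as your next sentence indicates, whereas the paper gets $\int_0^t\bar F(u)\,du$ directly from Campbell's formula with window length $\min\{t,W_\ell\}$.
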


We defer the proof of \cref{thm: Displacement Process Convergence} to \cref{ssec: Proof of the full version of the sample path convergence}. Nonetheless, we provide a proof sketch.

\noindent\textit{Proof Sketch:} We first represent the stationary background conversations as a marked Poisson random measure. Hence, we can express the displacement process as an integral of their cumulative displacement contributions. This representation also shows that the law of the background displacement process is invariant w.r.t. the tagged turn state $j$. We then derive the mean displacement curve $d(t)$ by separately evaluating the expected contributions from pre-existing history contents and newly generated KV blocks. The Poisson random measure representation further yields a pointwise second-moment bound of order $O(N^{-1})$ for the normalized displacement process. Finally, using the nondecreasing property of $D_N(t)$ and $d(t)$, the Lipschitz continuity of $d(t)$, and a discretization argument over a finite time horizon with granularity $O(N^{\frac{1}{3}})$, we obtain the uniform convergence. \hfill $\square$

The sample-path convergence of the displacement process leads to our next core result, the tagged turn-$j$ content characteristic time converges to a deterministic limit. We first define this limiting mean-field characteristic time. Recall from \cref{def: Mean displacement Curve} that $d(0)=0$ and $d(t)\rightarrow\infty$ as $t\rightarrow\infty$. Moreover, $d$ is differentiable with uniformly lower bounded derivative 
\begin{equation*}
    d'(t)=\lambda_0\E[B]+\lambda_0A_H\bar F(t) \geq \lambda_0\E[B]>0.
\end{equation*}
Hence, $d$ is strictly increasing, and its inverse $d^{-1}:[0,\infty)\rightarrow[0,\infty)$ is well-defined.
\begin{definition}
    \label{def: deterministic mean-field characteristic time}
    The \emph{mean-field characteristic time} $T_C$ is defined as $T_C:=d^{-1}(1)$, or equivalently, as the unique solution to the equation
    \begin{equation}
        \label{equ: mean displacement curve solution to 1}
        d(T_C)=\lambda_0\E[B]T_C+\lambda_0A_H\int_0^{T_C}\bar F(u)\,du=1.
    \end{equation}
\end{definition}

We now state our second core result on the convergence of the characteristic time.

\begin{theorem}[Characteristic Time Convergence]
\label{thm: Characteristic Time Convergence- Full}
    Fix a turn state $j$ such that $S_j>0$, and a finite horizon $T>T_C$. For every
    \begin{equation*}
        0<\varepsilon<\min\{T_C,T-T_C\},
    \end{equation*}
    the characteristic time satisfies
    \begin{equation}
        \label{equ: characteristic time finite N bound}
        \pro\left(\left|\tau_C^{(N)}(j)-T_C\right|>\varepsilon\right)
        \leq\frac{18V_T}{Nc_d^2\varepsilon^2}
        +\frac{\E[e^{\theta B}]}{S_j}\exp\left(-\frac{\theta c_d}{3}N\varepsilon\right),
    \end{equation}
    where $c_d:=\lambda_0\E[B]$ and $V_T:=\int_{\mathcal Z}G_T(z)^2\,\nu(dz)<\infty$ is the second-moment constant introduced in \cref{lem: Poisson Random Measure Representation}. Consequently,
    \begin{equation}
    \label{equ: characteristic time convergence-2}
        \tau_C^{(N)}(j)\xrightarrow{\mathbb P}T_C
    \end{equation}
    for every fixed turn state $j$ with $S_j\geq 0$, and
    \begin{equation*}
        \tau_C^{(N)}(j)-T_C=O_{\mathbb P}(N^{-1/2}).
    \end{equation*}
\end{theorem}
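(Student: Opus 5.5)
The plan is to reduce the event $\{|\tau_C^{(N)}(j)-T_C|>\varepsilon\}$ to two kinds of bad events. The first kind is a deviation of the normalized displacement process at the two times $T_C\pm\varepsilon/\!\cdot$ (or at $T_C\pm\varepsilon$). The second kind is the tagged content size $R_j$ being macroscopically large. The first is controlled by the pointwise Chebyshev bound \eqref{equ: displacement pointwise concentration} of \cref{thm: Displacement Process Convergence}, using that $D_{N,j}\overset{d}{=}D_N$. The second is controlled by a Chernoff bound on $R_j$ given $M\ge j$.

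\textbf{Margins from strict monotonicity of $d$.} Since $d'(t)\ge c_d=\lambda_0\E[B]$, we have
\[
d(T_C+\varepsilon)\ge 1+c_d\varepsilon \quad\text{and}\quad d(T_C-\varepsilon)\le 1-c_d\varepsilon .
\]
The condition $\varepsilon<\min\{T_C,T-T_C\}$ guarantees that both times lie in $[0,T]$.

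\textbf{Upper deviation.} Suppose $\tau_C^{(N)}(j)>T_C+\varepsilon$. Then $R_j\le N$ and $D_{N,j}(T_C+\varepsilon)\le N-R_j\le N$. Hence $D_N(T_C+\varepsilon)/N-d(T_C+\varepsilon)\le -c_d\varepsilon$. By \eqref{equ: displacement pointwise concentration} with deviation $c_d\varepsilon/3$ (slack chosen to match the constants), this has probability at most $9V_T/(Nc_d^2\varepsilon^2)$.

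\textbf{Lower deviation.} Suppose $\tau_C^{(N)}(j)<T_C-\varepsilon$. Then either $R_j>N c_d\varepsilon/3$, or $R_j\le Nc_d\varepsilon/3$ and $D_{N,j}(T_C-\varepsilon)>N-R_j\ge N(1-c_d\varepsilon/3)$. The case $R_j>N$ giving $\tau=0$ is included in the first alternative. In the second alternative,
\[
D_N(T_C-\varepsilon)/N-d(T_C-\varepsilon)>\tfrac{2}{3}c_d\varepsilon\ge \tfrac13 c_d\varepsilon,
\]
which again costs at most $9V_T/(Nc_d^2\varepsilon^2)$. Adding the two Chebyshev terms gives $18V_T/(Nc_d^2\varepsilon^2)$.

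For the size term, the tagged conversation is in turn state $j$, so its law is that of $R_j$ conditioned on $M\ge j$. Since $R_j\le B$ on $\{M\ge j\}$ and $Z_\ell\ge1$,
\[
\pro(R_j>x\mid M\ge j)\le \pro(B>x)/S_j\le \E[e^{\theta B}]e^{-\theta x}/S_j
\]
for $0<\theta\le\theta_0$. Taking $x=Nc_d\varepsilon/3$ gives the exponential term.

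\textbf{Main obstacle.} The delicate point is the independence and distributional claim. We must justify that, conditional on the tagged state $j$ and on $R_j$, the background displacement $D_{N,j}$ still has the law of $D_N$, so that the Chebyshev bound applies irrespective of $R_j$. I would invoke the Poisson random measure representation behind \cref{thm: Displacement Process Convergence}, where the background is built from conversations independent of the tagged one. This is also where the conditioning on "evicted before next reuse" must be handled: $\tau_C$ is defined via the hypothetical no-access path, so the displacement is a functional of the background alone.

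The consequences then follow directly. Both terms vanish as $N\to\infty$, which gives convergence in probability. Choosing $\varepsilon=K/\sqrt N$ makes the first term $O(1/K^2)$ and the second term exponentially small in $K\sqrt N$, yielding $\tau_C^{(N)}(j)-T_C=O_{\pro}(N^{-1/2})$.
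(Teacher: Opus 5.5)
Your proposal is correct and follows essentially the same route as the paper. The paper also sandwiches $\tau_C^{(N)}(j)$ via the displacement at $T_C\pm\varepsilon$ with slack $\eta=c_d\varepsilon/3$, bounds both deviations by the pointwise Chebyshev estimate, controls $R_j/N$ by the tail bound $\E[e^{\theta B}]e^{-\theta x}/S_j$, and combines everything with a union bound (packaged as a single good event rather than your two directional cases); the conditional-independence issue you flag as the main obstacle is not actually needed, since the union bound only uses the marginal laws $D_{N,j}\overset{d}{=}D_N$ and that of $R_j$.
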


The proof of \cref{thm: Characteristic Time Convergence- Full} is deferred to \cref{ssec: Proof of the Full Version of CT convergence}. We also provide a proof sketch here.

\noindent\textit{Proof Sketch:} We note that different tagged turn-$j$ contents have different eviction thresholds, because eviction occurs when $D_{N,j}(t)>N-R_j$. However, for any fixed $j$ with $S_j>0$, the content size satisfies $R_j=O_{\mathbb P}(1)$, and hence $R_j/N$ is asymptotically negligible. To control the event $\{|\tau_C^{(N)}(j)-T_C|>\varepsilon\}$, it is sufficient to control the normalized displacement process at the two deterministic times $T_C-\varepsilon$ and $T_C+\varepsilon$, and the deviation of $R_j/N$ from $0$. Because $d(T_C-\varepsilon)<1<d(T_C+\varepsilon)$, the pointwise concentration bounds for $|\frac{D_N(t)}{N}-d(t)|$ imply that, with high probability, the tagged content is resident at time $T_C-\varepsilon$ and already evicted by time $T_C+\varepsilon$. By applying a union bound to these deviations, we obtain the convergence of $\tau_C^{(N)}(j)$ to $T_C$. The $O_{\mathbb P}(N^{-1/2})$ convergence rate follows from the pointwise concentration rate of the displacement process in \cref{thm: Displacement Process Convergence}. \hfill $\square$

We finally turn to the third core result for the MCM-LRU system: the convergence of the  hit ratio. In the system indexed by $N$, for each turn $j\geq2$, define the hit indicator
\[
    H_{N,j} := \one \left\{ \text{the history content of size $R_{j-1}$ is resident in HBM when turn $j$ arrives} \right\}.
\]
Then the finite-$N$  hit ratio under whole-content LRU can be written as
\begin{equation}
\label{equ: finite N LRU hit ratio}
    h ^{(N)}:=\frac{\E\left[\sum_{j=1}^{M-1}R_jH_{N,j+1}\right]}{A_R},
\end{equation}
where we recall that $A_R=\sum_{j\geq1}S_j\bar R_j.$ We define the corresponding mean-field limiting hit ratio as
\begin{equation}
\label{equ: limiting LRU hit ratio}
    h ^{(\infty)}  :=\frac{A_H}{A_R}  F(T_C). 
\end{equation}
We now state \cref{thm: Hit Ratio Convergence Full}, which follows from the convergence of the displacement process and the characteristic time established above. Its proof is deferred to \cref{ssec: Proof of the Full Version of hit ratio conv MCM-LRU short}.

\begin{theorem}[Hit Ratio Convergence in the MCM-LRU Model]
\label{thm: Hit Ratio Convergence Full}
    Under the MCM assumptions, fix a finite $J$ such that $S_{J+1}>0$, and fix $T>T_C$. For every $0<\varepsilon<\min\{T_C,T-T_C\}$,
    \begin{equation}
    \label{equ: finite N hit ratio error bound}
    \left|h_{\LRU}^{(N)} -h_{\LRU}^{(\infty)}  \right| \leq \frac{L_F\varepsilon A_H^{(J)}}{A_R} +\frac{K_J}{A_R}\sqrt{p_{N,J}(\varepsilon)} +\frac{2\E[B^2e^{\theta B}]e^{-\theta(J+2)}}{A_R},
    \end{equation}
    where $p_{N,j}(\varepsilon):= \frac{18V_T}{Nc_d^2\varepsilon^2}
        +\frac{\E[e^{\theta B}]}{S_j}\exp\left(-\frac{\theta c_d}{3}N\varepsilon\right)$ and  $K_J:=\sum_{j=1}^{J}S_{j+1}\sqrt{\E[R_j^2]}$.
    Consequently, the system-level hit ratio under the whole-content LRU policy converges when $N\rightarrow\infty$:
    \begin{equation}
    \label{equ: hit ratio convergence-2}
        h_{\LRU}^{(N)} \rightarrow h_{\LRU}^{(\infty)}  =\frac{A_H}{A_R} F(T_C).
    \end{equation}
\end{theorem}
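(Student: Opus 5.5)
The plan is to reduce the hit ratio to per-turn hit probabilities. We then invoke \cref{thm: Characteristic Time Convergence- Full} turn by turn and control the tail of large turn indices with the light-tail assumption.

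\textbf{Per-turn hit condition.} Under whole-content LRU, the turn-$j$ content of a conversation that continues (i.e., $M\ge j+1$) is hit at turn $j+1$ if and only if its next prompt arrives before eviction. In other words, $H_{N,j+1}=\one\{W_j\le \tau_C^{(N)}(j)\}$, where $\tau_C^{(N)}(j)$ is the characteristic time of \cref{def: characteristic time}, computed for the displacement process assuming no further access. Since $W_j$ is independent of $M$, of the block increments, and of the background Poisson measure that drives $D_{N,j}$, we have
\begin{equation*}
\E\left[R_jH_{N,j+1}\one\{M\ge j+1\}\right]=S_{j+1}\,\E\left[R_j F(\tau_C^{(N)}(j))\mid M\ge j+1\right].
\end{equation*}
The same holds with $\tau_C^{(N)}(j)$ replaced by $T_C$. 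Summing over $j$ in the latter case gives exactly $A_HF(T_C)$, using $\E[R_j\mid M\ge j+1]=\bar R_j$ by \cref{ass:independent block incrememts}. Hence
\begin{equation*}
A_R\bigl|h^{(N)}-h^{(\infty)}\bigr|\le\sum_{j\ge1}S_{j+1}\,\E\left[R_j\bigl|F(\tau_C^{(N)}(j))-F(T_C)\bigr|\right].
\end{equation*}
Any conditioning on $M\ge j+1$ is absorbed because $R_j$, $\tau$ and $M$ interact only through independent pieces.

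\textbf{Split at a finite turn $J$.} For $j\le J$, we split each expectation on the event $\{|\tau_C^{(N)}(j)-T_C|\le\varepsilon\}$ and its complement.
\begin{itemize}
\item On the good event, Lipschitz continuity of $F$ gives a contribution of at most $L_F\varepsilon\bar R_j$. Summed over $j\le J$, this yields the term $L_F\varepsilon A_H^{(J)}$, where $A_H^{(J)}:=\sum_{j\le J}S_{j+1}\bar R_j$.
\item On the bad event, we use $|F(\cdot)-F(T_C)|\le1$ and Cauchy--Schwarz to bound the contribution by $\sqrt{\E[R_j^2]}\sqrt{\pro(|\tau_C^{(N)}(j)-T_C|>\varepsilon)}$. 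We then insert the bound $p_{N,j}(\varepsilon)$ from \cref{thm: Characteristic Time Convergence- Full}. Since $S_j\ge S_{J+1}>0$ for $j\le J$, we have $p_{N,j}\le p_{N,J}$, and summing gives $K_J\sqrt{p_{N,J}(\varepsilon)}$.
\end{itemize}

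\textbf{Tail $j>J$.} Here we bound both $F$ terms by $1$, so the tail is at most $2\sum_{j>J}S_{j+1}\bar R_j\le 2\E[\sum_{j=J+1}^{M-1}R_j]$. Because $Z_\ell\ge1$, we have $R_j\ge j$, so a turn $j>J$ forces $B\ge R_M\ge J+2$. Also $\sum_{j<M}R_j\le MB\le B^2$. Therefore the tail is at most $2\E[B^2\one\{B\ge J+2\}]\le 2\E[B^2e^{\theta B}]e^{-\theta(J+2)}$ by Markov's inequality, which is finite for $\theta<\theta_0$ by \cref{ass: Regularity of one cumulative conversation blocks}. This gives the bound \eqref{equ: finite N hit ratio error bound}.

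\textbf{Convergence.} We let $N\to\infty$, which sends $p_{N,J}(\varepsilon)\to0$, then $\varepsilon\to0$, then $J\to\infty$. This yields \eqref{equ: hit ratio convergence-2}, and hence \cref{thm: Hit Ratio Convergence in the MCM-LRU model- Short-short}.

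\textbf{Main obstacle.} The delicate step is the first one. We must justify that the tagged characteristic time defined under the hypothetical "no further access" convention, together with the stationary background, is exactly what determines $H_{N,j+1}$. We also need the independence of $W_j$ from $\tau_C^{(N)}(j)$, including under the conditioning on $M\ge j+1$ used in \cref{def: characteristic time}. I would handle this via the Poisson random measure representation of \cref{thm: Displacement Process Convergence}: the background is independent of the tagged conversation's own marks, and the law of $D_{N,j}$ does not depend on $j$.
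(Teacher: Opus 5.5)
Your proposal is correct and follows essentially the same route as the paper. It uses the exact representation $h^{(N)}=A_R^{-1}\sum_{j\ge1}S_{j+1}\E[R_jF(\tau_C^{(N)}(j))]$, the Lipschitz/Cauchy--Schwarz split for $j\le J$ via \cref{thm: Characteristic Time Convergence- Full}, and the tail bound $\sum_{j=J+1}^{M-1}R_j\le B^2\one\{B\ge J+2\}\le B^2e^{\theta(B-J-2)}$; the only differences are cosmetic: you take iterated limits ($N\to\infty$, then $\varepsilon\downarrow0$, then $J\to\infty$) where the paper sets $\varepsilon=N^{-1/4}$, and $p_{N,j}\le p_{N,J}$ should be justified by $S_j\ge S_J$ (not $S_{J+1}$), which holds because $S_j$ is nonincreasing.
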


\noindent\textit{Proof Sketch:} The proof mainly relies on the concentration of the characteristic time established in \cref{thm: Characteristic Time Convergence- Full} and the Lipschitz continuity of the interarrival CDF $F$. For each fixed turn state $j$, \cref{thm: Characteristic Time Convergence- Full} suggests that its hit probability is asymptotically characterized by $F(T_C)$. However, the characteristic time convergence is not uniform over all turn states. We therefore truncate the reusable-workload series at some level $J$, and compare $\sum_{j=1}^{J}S_{j+1}\bar R_jF(T_C)$ with $\E\big[\sum\limits_{j=1}^{\min\{M-1,J\}}R_jH_{N,j+1}\big].$ For every fixed $J$, the difference between these two quantities vanishes as $N\rightarrow\infty$ by the characteristic time convergence and the Lipschitz continuity of $F$. The remaining contribution from turn states $j>J$ is controlled by the tail reusable workload $\sum_{j>J}S_{j+1}\bar R_j$, which vanishes as $J\rightarrow\infty$. Finally, we conclude \cref{thm: Hit Ratio Convergence Full} by first taking $N\rightarrow\infty$ and then $J\rightarrow\infty$. \hfill $\square$

\subsection{Practical MCM Consequence and Theoretically-Motivated Hit Ratio Estimator}
\label{ssec: Practical MCM Consequence and Theoretically-Motivated Hit Ratio Estimator}

In real multi-turn LLM conversation systems, the last KV block generated at each turn is usually partially filled. Therefore, this block does not have a valid hash. We call this block the \emph{unhashable tail block}. To capture this feature, we introduce a practical multi-turn conversation model (practical MCM). As a modeling approximation, we conservatively assume each completed turn generates one unhashable tail block. Although this block occupies one physical HBM block, it cannot participate in prefix reuse at subsequent turns and can only gradually move toward the LRU end until eviction.

When the conversation proceeds to the next turn, the tokens contained in the unhashable tail block of turn $j$ must be reprefilled together with the turn-$j$ response tokens and the turn-$(j+1)$ prompt tokens. Accordingly, these tokens can be absorbed into the next turn-wise content increment. % Without loss of generality, we therefore write $Z_1:=X_1$ and $Z_{j+1}:=Y_j+X_{j+1}$ for $j\geq1$, where $Z_{j+1}$ includes the tokens carried by the unhashable tail block from turn $j$.

\subsubsection{Practical Characteristic Time and Hit Ratio with Unhashable Tail Blocks}
\label{sssec: Practical Hit Ratio}

We keep the definitions of $R_j$ and $B$ as the KV block counts, including the unhashable tail blocks. Hence, the total number of newly generated physical KV blocks is unchanged from the original MCM. The only modification is that, when a conversation is revisited after turn $j$, the unhashable tail block cannot participate in prefix reuse, so only $R_j-1$ history blocks are hashable and are moved back to the MRU side. We therefore define the expected reusable hashable workload as
\begin{equation}
\label{equ: practical hashed AH}
    A_H^{\mathrm{hash}}:=\sum_{j\geq1}S_{j+1}\E[R_j-1]=A_H-\E[M-1].
\end{equation}

Let $D_N^{\mathrm{prac}}(t)$ denote the displacement process in the practical system. Because every newly generated physical KV block still contributes to displacement, the new-block contribution remains unchanged. In contrast, each revisit moves one fewer pre-existing history block because the unhashable tail block cannot participate in prefix reuse. Thus, we define the practical mean displacement curve as
\begin{equation}
\label{equ: practical displacement curve}
    d_{\mathrm{prac}}(t):=\lambda_0 \E[B]t+\lambda_0 A_H^{\mathrm{hash}}\int_0^t \bar F(u) du.
\end{equation}
Because $d_{\mathrm{prac}}(0)=0$, $\lim\limits_{t\uparrow \infty} d_{\mathrm{prac}}(t)=\infty$, and
\begin{equation*}
    d_{\mathrm{prac}}'(t)=\lambda_0\E[B]+\lambda_0 A_H^{\mathrm{hash}} \bar F(t)>0,
\end{equation*}
there exists a unique practical mean-field characteristic time $T_C^{\mathrm{prac}}$ satisfying
\begin{equation}
\label{equ: practical characteristic time}
    d_{\mathrm{prac}}(T_C^{\mathrm{prac}})=1.
\end{equation}

For a turn-$j$ tagged conversation, let $\tau_{C,\mathrm{prac}}^{(N)}(j)$ denote the HBM residence time of its hashable history content with size $R_j-1$, in the absence of a next-turn access. %We further let $\tau_{\mathrm{nohash}}^{(N)}$ denote the residence time of a tagged unhashable block, and let $N_{\mathrm{nohash}}^{(N)}$ denote the number of unhashable blocks resident in HBM at stationarity.

Finally, we define the practical finite-$N$ hit ratio as
\begin{equation}
\label{equ: practical finite N hit ratio}
    h_{ \mathrm{prac}}^{(N)}:=\frac{\E\left[\sum_{j=1}^{M-1}(R_j-1)H_{N,j+1}^{\mathrm{prac}}\right]}{A_R},
\end{equation}
where $H_{N,j+1}^{\mathrm{prac}}$ is the indicator of the event that the hashable history content after turn $j$ remains resident in HBM when the turn-$(j+1)$ prompt arrives. The corresponding mean-field limiting hit ratio is defined as
\begin{equation}
\label{equ: practical limiting hit ratio}
    h_{ \mathrm{prac}}^{(\infty)}:=\frac{A_H^{\mathrm{hash}}}{A_R}F(T_C^{\mathrm{prac}}).
\end{equation}

We next state the main mean-field convergence result for the practical MCM-LRU system.

\begin{theorem}[Practical Hit Ratio Convergence]
\label{thm: Practical Hit Ratio Convergence Full}
    Fix a finite $J$ such that $S_{J+1}>0$ and $T>T_C^{\mathrm{prac}}$. For every $0<\varepsilon<\min\{T_C^{\mathrm{prac}},T-T_C^{\mathrm{prac}}\},$ the difference between the practical finite-size hit ratio and the mean-field limiting hit ratio can be bounded, such that
    \begin{equation}
    \label{equ: practical finite N hit ratio bound}
        \left|h_{ \mathrm{prac}}^{(N)}-h_{ \mathrm{prac}}^{(\infty)}\right| \leq \frac{L_F\varepsilon A_{H,\mathrm{hash}}^{(J)}}{A_R} + \frac{K_{J,\mathrm{hash}}}{A_R}\sqrt{p_{N,J}^{\mathrm{prac}}(\varepsilon)} + \frac{2\E[B^2e^{\theta B}]e^{-\theta(J+2)}}{A_R},
    \end{equation}
    where $L_F$ is the Lipschitz constant of $F$, $p_{N,J}^{\mathrm{prac}}(\varepsilon) :=\frac{18V_T^{\mathrm{prac}}}{Nm_{\mathrm{prac}}^2\varepsilon^2} + \frac{\E[e^{\theta B}]}{S_J} \exp\left(-\frac{\theta m_{\mathrm{prac}}}{3}N\varepsilon\right)$, and $ K_{J,\mathrm{hash}}:=\sum_{j=1}^{J}S_{j+1}\sqrt{\E[(R_j-1)^2]}$.  Consequently,
    \begin{equation}
    \label{equ: practical hit ratio convergence full}
        h_{ \mathrm{prac}}^{(N)}\rightarrow h_{ \mathrm{prac}}^{(\infty)}= \frac{A_H^{\mathrm{hash}}}{A_R} F(T_C^{\mathrm{prac}}).
    \end{equation}
\end{theorem}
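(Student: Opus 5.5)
The plan is to mirror the three-step argument behind \cref{thm: Hit Ratio Convergence Full}, replacing each MCM object by its practical counterpart. \textbf{Step 1 (displacement).} Represent the stationary background conversations in the practical system as the same marked Poisson random measure as in \cref{lem: Poisson Random Measure Representation}. Then modify the displacement contribution $C_i^{\mathrm{prac}}(t)$ of a background conversation that revisits after turn $\ell$: at its first revisit it moves only $R_\ell-1$ hashable pre-existing blocks ahead of the tag. Every newly generated physical block, including the new unhashable tail, still contributes.

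Computing the mean as in \cref{lem: Mean Displacement Curve} gives the new-block term $\lambda_0\E[B]t$ unchanged. The history term becomes $\lambda_0\sum_\ell S_{\ell+1}\E[R_\ell-1]\int_0^t\bar F=\lambda_0A_H^{\mathrm{hash}}\int_0^t\bar F$, which is exactly $d_{\mathrm{prac}}$. Since $0\le C_i^{\mathrm{prac}}\le C_i$ pathwise, the second-moment integrand $G_T^{\mathrm{prac}}\le G_T$. Hence $V_T^{\mathrm{prac}}\le V_T<\infty$, and the pointwise Chebyshev bound $\pro(|D_N^{\mathrm{prac}}(t)/N-d_{\mathrm{prac}}(t)|>\varepsilon)\le V_T^{\mathrm{prac}}/(N\varepsilon^2)$ follows verbatim. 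The law is again independent of the tagged turn $j$.

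\textbf{Step 2 (characteristic time).} Set $m_{\mathrm{prac}}:=\lambda_0\E[B]$, which is a lower bound on $d_{\mathrm{prac}}'$. Define the tagged eviction time as the first time $D_N^{\mathrm{prac}}$ exceeds $N-(R_j-1)$. Strictly, the tagged conversation also holds its tail block, so the threshold is $N-R_j$; either way the threshold is $N-O_{\mathbb P}(1)$. Repeat the proof of \cref{thm: Characteristic Time Convergence- Full}:
\begin{itemize}
\item Evaluate at the two times $T_C^{\mathrm{prac}}\pm\varepsilon$, where $d_{\mathrm{prac}}(T_C^{\mathrm{prac}}\pm\varepsilon)$ differs from $1$ by at least $m_{\mathrm{prac}}\varepsilon$.
\item Split the gap into thirds.
\item Control $R_j/N>m_{\mathrm{prac}}\varepsilon/3$ by Chernoff, using $R_j\le B$ conditionally on $M\ge j$. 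This produces the $\E[e^{\theta B}]/S_j$ term.
\end{itemize}
This yields $\pro(|\tau_{C,\mathrm{prac}}^{(N)}(j)-T_C^{\mathrm{prac}}|>\varepsilon)\le p^{\mathrm{prac}}_{N,j}(\varepsilon)$, which is monotone in $j$ through $1/S_j$, so $p^{\mathrm{prac}}_{N,j}\le p^{\mathrm{prac}}_{N,J}$ for $j\le J$.

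\textbf{Step 3 (hit ratio).} By independence of $W_j$ from the displacement driven by other conversations, $H^{\mathrm{prac}}_{N,j+1}=\one\{W_j<\tau^{(N)}_{C,\mathrm{prac}}(j)\}$. Write the error as
$$\sum_{j\le J}S_{j+1}\E\big[(R_j-1)(H^{\mathrm{prac}}_{N,j+1}-F(T_C^{\mathrm{prac}}))\mid M\ge j+1\big]$$
plus a tail over $j>J$. On the good event $\{|\tau-T_C^{\mathrm{prac}}|\le\varepsilon\}$, conditioning on $W_j$ and using $L_F$ bounds each term by $L_F\varepsilon\,\E[R_j-1]$. Summing gives $L_F\varepsilon A^{(J)}_{H,\mathrm{hash}}$. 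On the bad event, Cauchy--Schwarz gives $\sqrt{\E[(R_j-1)^2]}\sqrt{p^{\mathrm{prac}}_{N,J}}$, which sums to $K_{J,\mathrm{hash}}$. The tail is handled exactly as in \cref{thm: Hit Ratio Convergence Full}: both the finite-$N$ and limiting tails are at most $\sum_{j>J}S_{j+1}\E[R_j]\le\E[B\,M\one\{M>J+1\}]\le\E[B^2\one\{B\ge J+2\}]\le\E[B^2e^{\theta B}]e^{-\theta(J+2)}$, using $M\le B$ since $Z_i\ge1$. Doubling this gives the last term. Dividing by $A_R$ gives the bound; letting $N\to\infty$, then $\varepsilon\downarrow0$, then $J\to\infty$ gives convergence.

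The main obstacle is Step 1. The tail blocks break the clean ``whole content crosses the tag'' bookkeeping. Resident unhashable tails of finished or revisited turns occupy capacity but are never refreshed, and one must argue that the displacement seen by the tag is still exactly new physical blocks plus hashable refreshed history. Equivalently, stale tails behind the tag never move ahead of it. I would settle this first by checking the LRU update rule in the practical model: the tail block stays at its old position, and only the hashable prefix is moved to MRU. Once that is verified, the comparison $C_i^{\mathrm{prac}}\le C_i$ makes all the moment bounds inherit from the original proof.
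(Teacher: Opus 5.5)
Your proposal is correct and follows the paper's proof essentially step for step. It uses the same Poisson-random-measure displacement with $R_\ell-1$ replacing $R_\ell$ in the history term and the envelope $\Phi^{\mathrm{prac}}_t\le\Phi_t\le G_T$, the same threshold sandwich at $T_C^{\mathrm{prac}}\pm\varepsilon$ with $\eta=m_{\mathrm{prac}}\varepsilon/3$, and the same truncation at $J$ with Lipschitz and Cauchy--Schwarz on the head and the $\E[B^2e^{\theta B}]e^{-\theta(J+2)}$ bound on both tails (your limit order $N\to\infty$, $\varepsilon\downarrow0$, $J\to\infty$ is equivalent to the paper's choice $\varepsilon=N^{-1/4}$). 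Two minor remarks: the Lipschitz step should come from integrating out $W_j$ given $(R_j,\tau^{(N)}_{C,\mathrm{prac}}(j))$ to obtain $\E[(R_j-1)F(\tau^{(N)}_{C,\mathrm{prac}}(j))]$, not from conditioning on $W_j$; and your point that the eviction threshold could be $N-R_j$ rather than $N-(R_j-1)$ is harmless, since the sandwich argument works with either.
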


The proof of \cref{thm: Practical Hit Ratio Convergence Full} is deferred to \cref{ssec: Proof of the Full Version of practical results}. It follows the same sequence as in the original MCM-LRU analysis, from displacement-process convergence, via characteristic time convergence, to hit-ratio convergence, which is the same as the map of proofs of \cref{thm: Displacement Process Convergence,thm: Characteristic Time Convergence- Full,thm: Hit Ratio Convergence Full}.

\subsubsection{Theoretically-Motivated Hit Ratio Estimator}
\label{sssec: Theoretically-Motivated Hit Ratio Estimator}

Inspired by the mean-field convergence results in \cref{thm: Practical Hit Ratio Convergence Full}, we develop a theoretically-motivated estimator for practical LLM inference systems.

Suppose that the empirical workload contains $L$ conversations. For conversation $i$, let $M_i$ denote its number of turns, $R_{i,j}$ the physical history content size after turn $j$, and $B_i:=R_{i,M_i}$ its total lifetime KV block count. We define the empirical mean cumulative history content workload as
\begin{equation}
\label{equ: practical estimator AR}
    \widehat A_R := \frac{1}{L} \sum_{i=1}^{L} \sum_{j=1}^{M_i}R_{i,j},
\end{equation}
and the empirical mean hashable workload as
\begin{equation}
\label{equ: practical estimator AH hash}
    \widehat A_H^{\mathrm{hash}} := \frac{1}{L} \sum_{i=1}^{L} \sum_{j=1}^{M_i-1}(R_{i,j}-1).
\end{equation}
The empirical mean total lifetime KV block count and the empirical mean number of turns are defined, respectively, as
\begin{equation}
\label{equ: practical estimator moments}
    \widehat{\E[B]} := \frac{1}{L}\sum_{i=1}^{L}B_i, \quad \text{and}\quad \widehat{\E[M]} :=\frac{1}{L}\sum_{i=1}^{L}M_i.
\end{equation}

Let $\widehat F$ denote the empirical CDF of the prompt interarrival time, and let $\widehat{\bar F}(t):=1-\widehat F(t)$ denote its tail. The practical mean-field characteristic time estimator $\widehat T_C^{\mathrm{prac}}$ is defined as the unique positive solution to the equation
\begin{equation}
\label{equ: practical TC estimator}
    \lambda_0 \widehat{\E[B]}\widehat T_C^{\mathrm{prac}}+\lambda_0 \widehat A_H^{\mathrm{hash}}\int_0^{\widehat T_C^{\mathrm{prac}}}\widehat{\bar F}(u)\,du=N.
\end{equation}
The corresponding mean-field estimator of the practical hit ratio is
\begin{equation}
\label{equ: practical hit ratio estimator}
    \widehat h_{ \mathrm{prac}}^{\mathrm{MF}}:=\frac{\widehat A_H^{\mathrm{hash}}}{\widehat A_R}\widehat F\left(\widehat T_C^{\mathrm{prac}}\right).
\end{equation}

In practice, the complete conversation-level workload trace may not be fully available or predictable. Nonetheless, our estimator does not require direct access to every individual conversation, as long as reasonable prior or externally estimated values of the required workload statistics and the prompt interarrival distribution are available. Specifically, we can substitute the estimates of $\E[B]$, $A_R$, $A_H^{\mathrm{hash}}$, and $F$ into the same mean-field formulas. Therefore, this theoretically-motivated hit ratio estimator remains applicable when partial workload distribution and system dynamics information is available.

\section{Proofs}
\label{sec: Proofs}

In this section, we provide the proofs of our main results in \cref{sec: main results}. Our main MCM-LRU mean-field result, the normalized tagged-$j$ conversation displacement process convergence (\cref{thm: Displacement Process Convergence}), the tagged-$j$ characteristic time convergence (\cref{thm: Characteristic Time Convergence- Full}), and the hit ratio convergence (\cref{thm: Hit Ratio Convergence Full}), are demonstrated in \cref{ssec: Proof of the full version of the sample path convergence,ssec: Proof of the Full Version of CT convergence,ssec: Proof of the Full Version of hit ratio conv MCM-LRU short}, respectively. 

The proof of \cref{thm: Practical Hit Ratio Convergence Full} follows essentially the same arguments as those developed for the MCM-LRU system. We will demonstrate it in \cref{ssec: Proof of the Full Version of practical results}.

\subsection{Proof of \cref{thm: Displacement Process Convergence}}
\label{ssec: Proof of the full version of the sample path convergence}

We begin by demonstrating the sample-path convergence of the tagged-conversation displacement process. The proof proceeds in three steps. In \cref{sssec: Representation of the Displacement Process}, we first establish a Poisson random measure representation of the displacement process and show in \cref{lem: Poisson Random Measure Representation}, that $D_{N,j}(\cdot)\stackrel{\mathcal L}{=}D_N(\cdot)$ for every turn state $j$. In \cref{sssec: Displacement Process Decomposition}, we decompose the displacement into two subprocesses: the pre-existing history content contribution and the newly generated KV block contribution. Their respective mean curves are derived in \cref{lem: Initial-prefix contribution,lem: Mean New Blocks Contribution} respectively, which sum to the mean displacement curve $d(t)$ in \cref{lem: Mean Displacement Curve}. Finally, in \cref{sssec: normalized displacement sample path convergence}, we prove \cref{thm: Displacement Process Convergence}, by combining the results in \cref{lem: Poisson Random Measure Representation,lem: Mean Displacement Curve}.

\subsubsection{Poisson Random Measure Representation of the Displacement Process}
\label{sssec: Representation of the Displacement Process}

We first establish a Poisson random measure representation of the tagged-conversation displacement process in \cref{lem: Poisson Random Measure Representation}. Using this Poisson random measure representation, we are prepared to then analyze its mean curve, and finally show the sample path convergence of our normalized displacement process.

\begin{lemma}[Poisson Random Measure Representation of the Displacement Process]
\label{lem: Poisson Random Measure Representation}
    Fix a finite horizon $T<\infty$. Recall \cref{ass: Exponential Prompt Inter-arrival Time}, where $\{W_j\}_{j\geq1}$ are i.i.d. interarrival times with cumulative distribution function $F$ and finite mean $\mu_W:=\E[W_1]$, independent of the conversation mark $(M,\{X_j,Y_j\}_{j\geq1})$. Then, for every fixed tagged turn state $j$ with $S_j>0$, the displacement process $\{D_{N,j}(t):0\leq t\leq T\}$ admits the representation
    \begin{equation}
    \label{equ: Poisson Random Measure Representation}
        D_{N,j}(t)\stackrel{\mathcal L}{=}D_N(t):=\int_{\mathcal Z}\Phi_t(z)\,\mathcal N_N(dz), \qquad 0\leq t\leq T,
    \end{equation}
    where $\mathcal N_N$ is a Poisson random measure on the marked conversation trajectory space $\mathcal Z$ with intensity measure $N\nu$. The path functional $\Phi_t(z)$ denotes the cumulative KV block mass contributed by the background conversation $z$ to the displacement process during $(0,t]$.

    For a marked conversation $z=(a,\zeta)$, where $a$ denote its arrival time and $\zeta$ denote its mark, we define its lifetime span $L(\zeta):=\sum_{\ell=1}^{M-1}W_\ell,$ and the envelope
    \begin{equation*}
        G_T(a,\zeta):=B(\zeta)\one\{-L(\zeta)<a\leq T\}.
    \end{equation*}
    Then, for every $z\in\mathcal Z$,
    \begin{equation}
    \label{equ: PRM contribution bound}
        0\leq\Phi_s(z)\leq\Phi_t(z)\leq G_T(z), \qquad 0\leq s\leq t\leq T.
    \end{equation}
    Moreover,
    \begin{equation}
    \label{equ: PRM second moment constant}
        V_T:=\int_{\mathcal Z}G_T(z)^2\,\nu(dz)=\lambda_0T\E[B^2]+\lambda_0\mu_W\E[(M-1)B^2]<\infty.
    \end{equation}
    % Consequently, the law of the background displacement process is invariant with respect to the tagged turn state $j$.
\end{lemma}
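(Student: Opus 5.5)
The plan is to build the representation from the stationary arrival structure of background conversations, and then read off the envelope and second-moment bounds directly. Conversations arrive according to a Poisson process of rate $N\lambda_0$ on the whole real line; we take the stationary version, started at $-\infty$. Each conversation carries an i.i.d. mark $\zeta=(M,\{(X_\ell,Y_\ell)\}_{\ell\geq1},\{W_\ell\}_{\ell\geq1})$, drawn independently of its arrival epoch $a$. By the marking theorem, the points $z=(a,\zeta)$ form a Poisson random measure $\mathcal N_N$ on $\mathcal Z=\mathbb R\times(\text{mark space})$ with intensity $N\nu$, where $\nu(da,d\zeta)=\lambda_0\,da\otimes P(d\zeta)$. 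We place the tagged access at time $0$. By the Palm property of Poisson processes (Slivnyak--Mecke), removing the tagged conversation leaves the remaining points a Poisson random measure with the same intensity $N\nu$.

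The central structural fact is the following. Under the whole-content convention, $C_i(t)$ is determined by background conversation $i$'s own trajectory alone. It equals zero until its first prompt time in $(0,t]$. At that time it jumps to $R_k$, where $k$ is that turn's index, and afterwards it adds $Z_{k'}$ at each later turn $k'$. We write $\Phi_t(a,\zeta)$ for this deterministic functional of the conversation's prompt epochs $a,a+W_1,a+W_1+W_2,\dots$ and its increments. Importantly, $\Phi_t$ does not depend on whether conversation $i$ was resident, nor on the tagged conversation's state $j$ or size $R_j$. We must still argue one point: a conversation that was evicted and then reinserted carries its full available history, which is also $R_k$, so $\Phi$ is unaffected. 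The conditioning in Definition~\ref{def: tagged-conversation displacement process} (no tagged reuse before eviction) only concerns the tagged conversation's own clock, which is independent of $\mathcal N_N$. It therefore follows that $D_{N,j}(t)=\int\Phi_t\,d\mathcal N_N$, with a law that does not depend on $j$. I expect this step to be the main obstacle. Care is needed to justify that the tagged conversation's randomness (its $R_j$, its turn state, and its future $W$) is independent of the background PRM under the stationary Palm measure. Care is also needed to confirm that the eviction dynamics do not feed back into $\Phi$.

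The remaining steps are routine. Monotonicity, $0\leq\Phi_s\leq\Phi_t$, holds because contributions only accumulate. The contribution is at most $R_M=B(\zeta)$, since the first jump $R_k$ plus the later increments telescope to at most $R_M$. It is also nonzero only if some prompt epoch lies in $(0,T]$. This requires $a\leq T$ and $a+L(\zeta)>0$, because the last prompt occurs at $a+L(\zeta)$. Together these give $\Phi_t\leq G_T$. For $V_T$ we compute
\[
\int G_T^2\,d\nu=\lambda_0\E\!\left[B^2\,(T+L)\right]=\lambda_0T\E[B^2]+\lambda_0\E[B^2L].
\]
Conditioning on $(M,\{Z_\ell\})$ and using the independence of the $W_\ell$ from Assumption~\ref{ass: Exponential Prompt Inter-arrival Time}, we get $\E[B^2L]=\mu_W\E[(M-1)B^2]$. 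This is finite because $(M-1)B^2\leq B^3$ (as $M\leq B$, using $Z_\ell\geq1$), and $B$ has an exponential moment by Assumption~\ref{ass: Regularity of one cumulative conversation blocks}.
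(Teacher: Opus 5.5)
Your proposal is correct and follows essentially the same route as the paper's proof. Both use a marked Poisson random measure on $\mathbb R\times\mathcal M$, remove the tagged point by Slivnyak's theorem, and treat $\Phi_t(z)$ as a functional of the background trajectory alone (a jump of $R_k$ at the first access after time $0$, then only new increments); both then use the support condition $-L(\zeta)<a\le T$ with total mass at most $B$, and obtain $V_T=\lambda_0\E[B^2(T+L)]$, finite by $M\le B$ and the exponential moment.

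One correction to your side remark. The event ``evicted before next reuse'' is $\{\tau_C^{(N)}(j)\le W_j\}$, which depends on the background through $\tau_C^{(N)}(j)$, so conditioning on it literally would tilt the law of $\mathcal N_N$. The right justification is that the rules defining $C_i(t)$ never refer to the tagged conversation's later accesses. Hence $D_{N,j}$ is a pure background functional and no conditioning is needed, which is exactly how the paper later uses it in $H_{N,j+1}=\one\{W_j<\tau_C^{(N)}(j)\}$.
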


\begin{proof}
We first construct the conversation arrivals on the entire time axis $\mathbb R$, which is a Poisson process with rate $N\lambda_0$. Each arrival at time $a\in\mathbb R$ is independently assigned a mark
\begin{equation*}
    \zeta=\left(M,\{X_\ell,Y_\ell\}_{\ell\geq1},\{W_\ell\}_{\ell\geq1}\right)\in\mathcal M,
\end{equation*}
which specifies the complete trajectory of the corresponding conversation in the mark space $\mathcal M$. We then let $P_\zeta$ denote the distribution of the conversation mark $\zeta$, and let $z=(a,\zeta)$ additionally record the conversation arrival time. The collection of marked conversation trajectories then forms a Poisson random measure $\mathcal N_N$ on $\mathcal Z:=\mathbb R\times\mathcal M$, with intensity measure
\begin{equation*}
    N\nu(dz)=N\lambda_0\,da\,P_\zeta(d\zeta).
\end{equation*}

For a marked background conversation $z=(a,\zeta)$, we let $\Phi_t(z)$ denote the total KV block mass contributed by this conversation that moves ahead of the tagged content during $(0,t]$.  Under the Palm construction for the tagged conversation, Slivnyak's theorem (see, e.g., Theorem 1.4.5 in \cite{baccelli2010stochastic}) implies that, after removing the tagged conversation, the remaining marked background conversation trajectories have the same law as the original stationary marked Poisson process. Moreover, the trajectory of the tagged conversation is independent of the background process. Because the background marked Poisson process is stationary, the displacement process observed from the tagged turn-$j$ access at time $t=0$ has the same law for every fixed $j$ with $S_j>0$.

If a background conversation is already active at time $0$, then at its next prompt access, its pre-existing history content crosses in front of the tagged content. After this access, those existing blocks remain ahead of the tagged content, so subsequent accesses contribute only newly generated blocks. If a background conversation arrives after time $0$, all of its blocks generated by time $t$ contribute in the same manner. Hence every KV block of a background conversation contributes to displacement at most once. 

For a conversation mark $\zeta$, define its lifetime span by $L(\zeta):=\sum_{\ell=1}^{M-1}W_\ell.$ A background conversation can contribute to the displacement over the horizon $[0,T]$ only if its arrival time $a$ satisfies $a\leq T$ and $a+L(\zeta)>0$, or equivalently, $-L(\zeta)<a\leq T$. Therefore, for every $z=(a,\zeta)\in\mathcal Z$, the path $\Phi_\cdot(z)$ is nondecreasing and satisfies that
\begin{equation*}
    0\leq\Phi_s(z)\leq\Phi_t(z)\leq B(\zeta)\one\{-L(\zeta)<a\leq T\}=G_T(z), \qquad 0\leq s\leq t\leq T.
\end{equation*}
We then sum the contributions over all background conversations, and have that
\begin{equation*}
    D_N(t)=\int_{\mathcal Z}\Phi_t(z)\,\mathcal N_N(dz), \qquad 0\leq t\leq T.
\end{equation*}

We next verify the square-integrability of the envelope $G_T$. 
% Recall that, for a conversation mark $\zeta$, we let $L(\zeta):=\sum_{\ell=1}^{M-1}W_\ell$ denote the elapsed time between its first and last prompts. By definition,
% \begin{equation*}
%     G_T(a,\zeta)=B(\zeta)\one\{-L(\zeta)<a\leq T\}.
% \end{equation*}
For each fixed mark $\zeta$, the set of arrival times $a$ for which $G_T(a,\zeta)$ is nonzero has Lebesgue measure $T+L(\zeta)$. Therefore,
\begin{align*}
    V_T &=\int_{\mathcal Z}G_T(z)^2\,\nu(dz)\\
    &=\lambda_0\E\left[B^2(T+L)\right]\\
    &=\lambda_0T\E[B^2]+\lambda_0\mu_W\E[(M-1)B^2],
\end{align*}
where we obtain the last equality because the interarrival times are i.i.d. with mean $\mu_W$ and are independent of $(M,\{X_\ell,Y_\ell\}_{\ell\geq1})$. In addition, by \cref{ass: Regularity of one cumulative conversation blocks}, the finiteness of $B$'s M.G.F. around $0$ implies that all polynomial moments of $B$ are finite. Moreover, because every turn contributes at least one KV block, we have $M\leq B$ almost surely. Hence, 
\begin{equation*}
    \E[(M-1)B^2]\leq\E[B^3]<\infty,
\end{equation*}
and we conclude \eqref{equ: PRM second moment constant} because
\begin{equation*}
    V_T=\lambda_0T\E[B^2]+\lambda_0\mu_W\E[(M-1)B^2]<\infty.
\end{equation*}

Finally, because the path functional $\Phi_t(z)$ depends only on the background conversation trajectory and not on the tagged turn state $j$, the resulting displacement process has the same law for every fixed $j$ with $S_j>0$. Hence,
\begin{equation*}
    D_{N,j}(\cdot)\stackrel{\mathcal L}{=}D_N(\cdot),
\end{equation*}
which completes the proof.
\end{proof}

\subsubsection{Displacement Process Decomposition and the Mean Displacement Curve}
\label{sssec: Displacement Process Decomposition}

Fix a turn index $j$ with $S_j>0$. Under the stationary Palm distribution, suppose that a tagged conversation is accessed at its turn $j$ at time $0$ and moved to the MRU end. Recall that $D_{N,j}(t)$ denotes the corresponding displacement process and that, by \cref{lem: Poisson Random Measure Representation},
\begin{equation*}
    D_{N,j}(\cdot)\stackrel{\mathcal L}{=}D_N(\cdot),
\end{equation*}
where the law of $D_N(\cdot)$ is independent of the tagged turn state $j$.

We now decompose the representative displacement process $D_N(t)$ into two subprocesses, similarly to the proof of \cref{lem: Poisson Random Measure Representation}:
\begin{equation}
    \label{equ: displacement decomposition}
    D_N(t)=D_N^{\init}(t)+D_N^{\newb}(t).
\end{equation}
Here, $D_N^{\init}(t)$ denotes the displacement contributed by pre-existing history blocks of background conversations that are already active at time $0$ and are moved ahead of the tagged content during the time period $(0,t]$, during which their subsequent access happens. Additionally, $D_N^{\newb}(t)$ denotes the displacement contributed by newly generated KV blocks during $(0,t]$.

We first characterize the mean curve of the normalized pre-existing history contribution $\frac{D_N^{\init}(t)}{N}$.

\begin{lemma}[Mean Normalized Initial-Prefix Contribution]
    \label{lem: Initial-prefix contribution}
    During the time period $(0,t]$, the total pre-existing history content mass contributed to the displacement process, $D_N^{\init}(t)$, has the following normalized mean
    \begin{equation}
    \label{equ: mean normalized initial-prefix contribution}
        \frac{1}{N}\E[D_N^{\init}(t)]=\lambda_0A_H\int_0^t\bar F(s)\,ds,
    \end{equation}
    where $\bar F(t):=1-F(t)$ denotes the tail probability of the prompt interarrival distribution and $A_H:=\sum_{j\geq1}S_{j+1}\bar R_j$ denotes the expected amount of reusable history content, as defined in \eqref{equ: number of blocks eligible for reuse}.
\end{lemma}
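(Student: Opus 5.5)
We compute $\E[D_N^{\init}(t)]$ directly from the Poisson random measure representation in \cref{lem: Poisson Random Measure Representation}, using Campbell's formula. A background conversation $z=(a,\zeta)$ contributes to $D_N^{\init}(t)$ only if it is active at time $0$. Concretely, some turn $k\ge 1$ with $k\le M-1$ must have its prompt at time $a+\sum_{\ell<k}W_\ell\le 0$, and the turn-$(k+1)$ prompt must arrive in $(0,t]$, i.e. $0<a+\sum_{\ell\le k}W_\ell\le t$. In that case it contributes exactly its pre-existing history $R_k$; the newly generated blocks $Z_{k+1}$ and later are assigned to $D_N^{\newb}$. Since the turn intervals $[a+\sum_{\ell<k}W_\ell,\,a+\sum_{\ell\le k}W_\ell)$ are disjoint across $k$, at most one $k$ qualifies. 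Hence
\begin{equation*}
\Phi^{\init}_t(a,\zeta)=\sum_{k=1}^{M-1}R_k\,\one\Bigl\{a+\textstyle\sum_{\ell<k}W_\ell\le 0<a+\sum_{\ell\le k}W_\ell\le t\Bigr\}.
\end{equation*}

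By Campbell's formula with intensity $N\lambda_0\,da\,P_\zeta(d\zeta)$, we get $\frac1N\E[D_N^{\init}(t)]=\lambda_0\E\bigl[\int_{\mathbb R}\Phi^{\init}_t(a,\zeta)\,da\bigr]$. Substitute $u=-(a+\sum_{\ell<k}W_\ell)$ in the $k$-th term. The indicator then becomes $\one\{0\le u<W_k\le u+t\}$, and the Jacobian equals $1$. This gives
\begin{equation*}
\int_{\mathbb R}\Phi^{\init}_t\,da=\sum_{k=1}^{M-1}R_k\int_0^\infty \one\{u<W_k\le u+t\}\,du=\sum_{k=1}^{M-1}R_k\min(W_k,t).
\end{equation*}

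We then take expectations. By \cref{ass:independent block incrememts} and \cref{ass: Exponential Prompt Inter-arrival Time}, $W_k$ is independent of $(M,R_k)$, so $\E[\one\{k\le M-1\}R_k\min(W_k,t)]=\E[\one\{M\ge k+1\}R_k]\,\E[\min(W,t)]$. Independence of $M$ from the increments gives $\E[\one\{M\ge k+1\}R_k]=S_{k+1}\bar R_k$, and $\E[\min(W,t)]=\int_0^t\bar F(s)\,ds$. Summing over $k$ by Tonelli (all terms are nonnegative, and $A_H\le\E[MB]\le\E[B^2]<\infty$) yields $\lambda_0A_H\int_0^t\bar F(s)\,ds$, which is the claim.

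The main obstacle is the first step. We must justify that, under the Palm/stationary construction of \cref{lem: Poisson Random Measure Representation}, a conversation's state at time $0$ is fully determined by its mark and arrival time $a$, with no dependence on residency in HBM. This holds because displacement counts block mass crossing the tagged content at access times, regardless of whether that mass was a hit or was reprefilled. We must also check carefully that the split between $D_N^{\init}$ and $D_N^{\newb}$ assigns exactly $R_k$ to the first post-$0$ access, with no double counting.
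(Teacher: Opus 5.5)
Your proof is correct and follows essentially the same route as the paper. Both apply Campbell's formula to the marked Poisson arrivals, show that a turn-$k$ waiting interval straddling $0$ and ending by $t$ has admissible window length $\min\{t,W_k\}$, and then use independence to factor $\E[\one\{M\geq k+1\}R_k\min\{t,W_k\}]=S_{k+1}\bar R_k\int_0^t\bar F(u)\,du$ before summing over $k$. The only difference is cosmetic: you integrate over the arrival time $a$ with the shift $u=-(a+\sum_{\ell<k}W_\ell)$, whereas the paper uses the mapping theorem to treat the turn-$\ell$ epochs of continuing conversations as a Poisson process of rate $N\lambda_0S_{\ell+1}$, and your version is slightly more self-contained.
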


\begin{proof}
    Fix a turn state $\ell\geq1$. A conversation that reaches turn $\ell+1$ has a waiting interval of length $W_\ell$ after completing turn $\ell$. 
    
    Let $s<0$ denote the time at which this conversation completes turn $\ell$, and let its inter-turn waiting time be $W_\ell=w$. Then its waiting interval is $[s,s+w)$. For its pre-existing history content to contribute to the displacement during $(0,t]$, the conversation's next prompt must arrive no later than time $t$. Equivalently, we require
    \begin{equation*}
        s<0<s+w\leq t.
    \end{equation*}
    For a fixed $w$, the set of waiting times $s$ satisfying this condition is
    \begin{equation*}
        -w<s\leq\min\{0,t-w\},
    \end{equation*}
    the admissible length of which is $\min\{t,w\}$. Therefore, when we integrate over the stationary time shifts of the marked conversation trajectories, the contribution associated with an inter-turn interval of length $w$ is weighted by $\min\{t,w\}$. 
    
    To determine how frequently such waiting intervals occur on the stationary time axis, we recall that conversations arrive according to a Poisson process with rate $N\lambda_0$. Among these conversations, a fraction $S_{\ell+1}=\pro(M\geq \ell+1)$ reaches turn $\ell+1$. In addition, the push-forward time $\sum\limits_{j=1}^{\ell} W_j$ for each conversation is i.i.d.. Therefore, by the mapping theorem of Poisson process, the arrival process of active turn-$\ell$ conversations that can reach turn $\ell+1$ is a Poisson process with rate $N\lambda_0 S_\ell p_{\ell} = N\lambda_0 S_{\ell+1}$.
    
    % By Campbell's theorem, the expected number of such waiting intervals that contain time $0$ and terminate during $(0,t]$ is
    % \begin{align*}
    %     N\lambda_0S_{\ell+1}\E\left[\min\{t,W_\ell\}\right]
    %     &=N\lambda_0S_{\ell+1}\int_0^t\pro(W_\ell>u)\,du\\
    %     &=N\lambda_0S_{\ell+1}\int_0^t\bar F(u)\,du.
    % \end{align*}

    % Consequently, the expected number of turn-$\ell$ waiting intervals whose starting points fall in an infinitesimal interval $ds$ is
    % \begin{equation*}
    %     N\lambda_0S_{\ell+1}\,ds.
    % \end{equation*}
    
    Therefore, by Campbell's theorem, we obtain the expected number of turn-$\ell$ waiting intervals that are active at time $0$ and terminate during $(0,t]$:
    \begin{align*}
        \E \bigg[\int_{-W_\ell}^{\min\{0,t-W_\ell\}} N\lambda_0 S_{\ell+1} ds \bigg ] &=  N\lambda_0S_{\ell+1}\E[\min\{t,W_\ell\}]\\
        & =N\lambda_0S_{\ell+1}\int_0^t\pro(W_\ell>u)\,du \\&=N\lambda_0S_{\ell+1}\int_0^t\bar F(u)\,du.
    \end{align*}

    At the terminating prompt of each such interval, its pre-existing history content of size $R_\ell$ is moved in front of the tagged content. Because $W_\ell$ is independent of the conversation length and block increments, the expected turn-$\ell$ history contribution is therefore
    \begin{equation*}
        N\lambda_0S_{\ell+1}\bar R_\ell\int_0^t\bar F(u)\,du.
    \end{equation*}
    Summing over all  $\ell\geq1$ and recalling that $A_H=\sum_{\ell\geq1}S_{\ell+1}\bar R_\ell$, we obtain that
    \begin{equation*}
        \E[D_N^{\mathrm{init}}(t)] = N\lambda_0A_H\int_0^t\bar F(u)\,du.
    \end{equation*}
\end{proof}

Similarly, the normalized new blocks contribution $\frac{D_N^{\newb}(t)}{N}$ admits the following expectation, where we recall from \eqref{equL The history content size R_j} and \eqref{equ: cumulative number of blocks} that $B: = R_M$ denotes the total lifetime KV block numbers of a conversation.

\begin{lemma}[Mean Normalized New Blocks Contribution]
\label{lem: Mean New Blocks Contribution}
    For every $t\geq 0$, we let $D_N^{\newb}(t)$ denote the new blocks generated by new prompts in the time duration $(0,t]$. Then the mean of this normalized contribution is 
    \begin{equation}
        \label{equ: Mean New Blocks Contribution}
        \frac{1}{N}\E[D_N^{\newb}(t)] = \lambda_0\E[B]t.
    \end{equation}
\end{lemma}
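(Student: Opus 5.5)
The plan is to compute $\E[D_N^{\newb}(t)]$ with the marked Poisson random measure of \cref{lem: Poisson Random Measure Representation} and Campbell's theorem, reducing the count of new blocks generated in $(0,t]$ to a stationarity argument on the whole time axis. Every KV block of a background conversation is generated at one of its prompt epochs, namely the turn-$\ell$ epoch $a+\sum_{i=1}^{\ell-1}W_i$ for $\ell\le M$, which carries increment $Z_\ell$. So for a marked conversation $z=(a,\zeta)$ I define its new-block contribution on $(0,t]$ as
\begin{equation*}
\Phi^{\newb}_t(z):=\sum_{\ell=1}^{M}Z_\ell\,\one\Bigl\{0<a+\textstyle\sum_{i=1}^{\ell-1}W_i\le t\Bigr\}.
\end{equation*}
Then $D_N^{\newb}(t)=\int_{\mathcal Z}\Phi^{\newb}_t(z)\,\mathcal N_N(dz)$. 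This is well defined because each block crosses the tagged content exactly once. Blocks of a conversation active at time $0$ that existed before time $0$ belong to $D_N^{\init}$, not here.

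Campbell's theorem gives
\begin{equation*}
\E[D_N^{\newb}(t)]=N\lambda_0\int_{\mathcal M}\int_{\mathbb R}\Phi^{\newb}_t(a,\zeta)\,da\,P_\zeta(d\zeta).
\end{equation*}
For a fixed mark $\zeta$ and a fixed $\ell\le M$, the set of $a$ satisfying $0<a+\sum_{i<\ell}W_i\le t$ is an interval of Lebesgue measure exactly $t$, whatever the value of the shift $\sum_{i<\ell}W_i$. Hence the inner integral equals $t\sum_{\ell=1}^M Z_\ell=tB(\zeta)$, and so $\E[D_N^{\newb}(t)]=N\lambda_0 t\,\E[B]$. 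Dividing by $N$ yields \eqref{equ: Mean New Blocks Contribution}.

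Two technical points need checking. First, Tonelli applies because the integrand is nonnegative, and $\E[B]<\infty$ by \cref{ass: Regularity of one cumulative conversation blocks}. Second, the decomposition must not double count. At the first post-$0$ access of a conversation already active at time $0$, the pre-existing history $R_{\ell-1}$ is assigned to $D_N^{\init}$ and the fresh increment $Z_\ell$ to $D_N^{\newb}$. This matches the definition of $C_i(t)$, and it also matches \cref{lem: Initial-prefix contribution}, which counts only $R_\ell$ at the terminating prompt.

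The step I expect to need the most care is exactly this bookkeeping: making sure that every generated block, including the turn-$1$ blocks of conversations arriving in $(0,t]$ and the increments of conversations already active at time $0$, is counted once in $D_N^{\newb}$. Once that is fixed, the translation-invariance computation is immediate.
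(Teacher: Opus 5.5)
Your proof is correct, but it organizes the computation differently from the paper.

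\textbf{The paper's route.} The paper works turn by turn. It identifies the turn-$(j+1)$ prompt epochs as a Poisson stream of rate $N\lambda_0S_{j+1}$ (via the mapping theorem) and multiplies by the mean increment $\E[Z_{j+1}]$. It then adds the arrival stream of rate $N\lambda_0$ carrying mean $\E[X_1]$. Finally it uses \cref{ass:independent block incrememts} (independence of $M$ from the increments) to identify $\E[X_1]+\sum_{j\geq1}S_{j+1}\E[X_{j+1}+Y_j]$ with $\E[B]$.

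\textbf{Your route.} You apply Campbell's formula to the whole-trajectory functional $\Phi^{\newb}_t$ and integrate over the arrival time first. Translation invariance of Lebesgue measure gives $\int_{\mathbb R}\Phi^{\newb}_t(a,\zeta)\,da=t\,B(\zeta)$ for every mark. So $\E[B]$ appears directly, without splitting by turn index.

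\textbf{What each buys.} Your route is more general and more rigorous. You need only the first-moment Campbell formula, not the Poisson property of the turn-indexed substreams. The result also holds under arbitrary dependence among $M$, the increments $Z_\ell$ and the interarrival times $W_\ell$ within a mark. By contrast, the paper's identity for $\E[B]$ genuinely uses the independence assumption. In exchange, the paper's per-turn formulation runs parallel to the proof of \cref{lem: Initial-prefix contribution}. There the weight attached to each turn is $\min\{t,W_\ell\}$ rather than $t$, and that weight does make the independence of $W_\ell$ from $(M,R_\ell)$ essential. This also explains why the new-block term can be handled pathwise in your way.
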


\begin{proof}
    In stationarity, an active conversation at turn state $j$ generates a new prompt, with mean number of blocks $\E [Z_{j+1}] = \E[X_{j+1}+Y_{j}]$. In addition, recall our argument in the proof of \cref{lem: Initial-prefix contribution}, that the arrival of these turn-$(j+1)$ prompts forms a Poisson process with rate $N \lambda_0 S_{j+1}$. In addition, the new conversation has arrival rate $N \lambda_0$, each contributes blocks with mean $\E [Z_1] = \E X_1$. Therefore, the total new block growth rate due to new block arrivals is 
    \begin{equation}
    \label{equ: new block growth rate}
        \sum\limits_{j\geq 1} N\lambda_0 S_{j} \E[Z_j] =  \sum\limits_{j\geq 1} N\lambda_0S_{j+1}\E [X_{j+1}+Y_{j}] + N \lambda_0 \E [X_1]
    \end{equation}
    
    Because $B=R_M$, by \cref{ass:independent block incrememts}, we have that 
    \begin{align*}
        \E[B] & = \E[X_1] + \E\Big[ \sum\limits_{j=1}^{M-1} X_{j+1}+Y_j \Big]\\
        &  = \E [X_1] + \sum_{j\geq1} \E[ (X_{j+1}+Y_j)\one\{M\geq j+1\} ] \\
        & = \E [X_1] + \sum_{j\geq1} S_{j+1}\E[X_{j+1}+Y_{j}].
    \end{align*}
    Therefore, we conclude \eqref{equ: Mean New Blocks Contribution}.
\end{proof}

Now we combine \cref{lem: Initial-prefix contribution,lem: Mean New Blocks Contribution} to obtain the following mean displacement curve.

\begin{lemma}[Mean Displacement Curve]
\label{lem: Mean Displacement Curve}
    For every $t\geq0$, the displacement process $D_{N,j}(t)$ has the same law as $D_N(t)$ for every fixed tagged turn state $j$ with $S_j>0$. Moreover,
    \begin{equation}
    \label{equ: Mean Displacement Curve}
        \frac{1}{N}\E[D_N(t)]=d(t):= \lambda_0\E[B]t + \lambda_0A_H\int_0^t\bar F(s)\,ds,
    \end{equation}
    where $\bar F(t):=1-F(t)$ denotes the tail probability of $F$. Moreover, $d$ is strictly increasing and Lipschitz-continuous, with
    \begin{equation*}
        \lambda_0\E[B](t-s) \leq d(t)-d(s) \leq \lambda_0(\E[B]+A_H)(t-s), \qquad\text{for all } 0\leq s\leq t.
    \end{equation*}
\end{lemma}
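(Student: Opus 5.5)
The lemma follows by assembling three earlier results. For the law statement we invoke \cref{lem: Poisson Random Measure Representation}, which already gives $D_{N,j}(\cdot)\stackrel{\mathcal L}{=}D_N(\cdot)$ for every $j$ with $S_j>0$. For the mean we use the decomposition \eqref{equ: displacement decomposition}, $D_N(t)=D_N^{\init}(t)+D_N^{\newb}(t)$, and linearity of expectation. By \cref{lem: Initial-prefix contribution} and \cref{lem: Mean New Blocks Contribution},
\[
\frac{1}{N}\E[D_N(t)]=\lambda_0\E[B]t+\lambda_0A_H\int_0^t\bar F(s)\,ds=d(t).
\]

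The one point needing care is that the two pieces partition the contributions $\Phi_t(z)$ without double counting. Each block of a background conversation crosses the tagged content at most once, as argued in the proof of \cref{lem: Poisson Random Measure Representation}. A block counts in $D_N^{\init}$ if it existed at time $0$ and was carried forward at the conversation's first access in $(0,t]$. It counts in $D_N^{\newb}$ if it was generated during $(0,t]$. These two sets are disjoint and exhaust $\Phi_t(z)$.

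We also need the expectations to be finite, so that adding them is legitimate. Both are, since $A_H\le\E[(M-1)B]\le\E[B^2]<\infty$ by \cref{ass: Regularity of one cumulative conversation blocks} and $M\le B$.

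For monotonicity and Lipschitz continuity, we write for $0\le s\le t$
\[
d(t)-d(s)=\lambda_0\E[B](t-s)+\lambda_0A_H\int_s^t\bar F(u)\,du .
\]
Since $0\le\bar F\le1$ and $A_H\ge0$, the integral term lies between $0$ and $\lambda_0A_H(t-s)$, which gives the two-sided bound. The lower bound is strictly positive for $t>s$ because $\E[B]\ge\E[Z_1]\ge1$, so $d$ is strictly increasing.

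No step is a real obstacle. The only delicate point is the disjointness and exhaustiveness of the init/new split, which we justify by the "each block crosses at most once" observation above.
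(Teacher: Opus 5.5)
Your proposal is correct and follows the paper's own proof: the paper takes the law statement from \cref{lem: Poisson Random Measure Representation}, obtains the mean by adding \cref{lem: Initial-prefix contribution} and \cref{lem: Mean New Blocks Contribution} through the decomposition \eqref{equ: displacement decomposition}, and gets the two-sided increment bound from $0\le\bar F\le1$. Your extra remarks are sound elaborations of points the paper leaves implicit: the init/new split being disjoint and exhaustive, $A_H\le\E[B^2]<\infty$, and $\E[B]\ge\E[Z_1]\ge1$ for strict monotonicity.
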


\begin{proof}
    We obtain \eqref{equ: Mean Displacement Curve} by directly combining \cref{lem: Initial-prefix contribution,lem: Mean New Blocks Contribution}. 

    Because $0\leq\bar F(u)\leq1$, for every $0\leq s\leq t$,
    \begin{align*}
        d(t)-d(s) =\lambda_0\E[B](t-s)+ \lambda_0A_H\int_s^t\bar F(u)\,du,
    \end{align*}
    and hence
    \begin{equation*}
        \lambda_0\E[B](t-s) \leq d(t)-d(s) \leq \lambda_0(\E[B]+A_H)(t-s).
    \end{equation*}
    Thus $d$ is strictly increasing and Lipschitz-continuous.   
\end{proof}

\subsubsection{Sample Path Convergence of the Normalized Displacement Process}
\label{sssec: normalized displacement sample path convergence}

We are now prepared to prove \cref{thm: Displacement Process Convergence}, the finite-size concentration and sample path convergence of the normalized displacement process.

By \cref{lem: Poisson Random Measure Representation}, for every fixed turn state $j$ with $S_j>0$, the process $\{D_{N,j}(t):0\leq t\leq T\}$ has the same law as the canonical displacement process $\{D_N(t):0\leq t\leq T\}$. We therefore omit the tagged turn-state index $j$ and study $D_N(t)$ directly.

\begin{reptheorem}{thm: Displacement Process Convergence}[Sample-Path Convergence of the Normalized Displacement Process]
    The tagged turn-$j$ content process follows the same law, such that $D_{N,j}(\cdot)\overset{d}{=}D_N(\cdot)$ for every turn state $j$. In addition, for a fixed $T<\infty$, let $V_T  = \lambda_0T\E[B^2]+\lambda_0\mu_W\E[(M-1)B^2]<\infty.$, and let $L :=\sup_{0\leq t\leq T}d'(t)=\lambda_0\E[B]+\lambda_0A_H$.
    Then, for every $t\in[0,T]$ and every $\varepsilon>0$,
    \begin{equation*}
        \pro\left(\left|\frac{D_N(t)}{N}-d(t)\right|>\varepsilon\right)\leq\frac{V_T}{N\varepsilon^2}.
    \end{equation*}
    Moreover, for every $0<\varepsilon\leq1$,
    \begin{equation*}
        \pro\left(\sup_{0\leq t\leq T}\left|\frac{D_N(t)}{N}-d(t)\right|>\varepsilon\right)\leq\frac{C_T}{N\varepsilon^3},
    \end{equation*}
    where $C_T:=\frac{16V_T}{9}(2+4 L T).$
    
    Consequently, for any fixed $T\geq 0$,
    \begin{equation*}
        \sup_{0\leq t\leq T}\left|\frac{D_N(t)}{N}-d(t)\right|\xrightarrow{\mathbb P}0.
    \end{equation*}
    In particular, the finite-size bounds imply
    \begin{equation*}
        \left|\frac{D_N(t)}{N}-d(t)\right|=O_{\mathbb P}(N^{-1/2})
    \end{equation*}
    for every fixed $t$, and
    \begin{equation*}
        \sup_{0\leq t\leq T}\left|\frac{D_N(t)}{N}-d(t)\right|=O_{\mathbb P}(N^{-1/3}).
    \end{equation*}
\end{reptheorem}
\begin{proof}

    For every $t\in[0,T]$, the Poisson random measure representation in \cref{lem: Poisson Random Measure Representation} gives that
    \begin{equation}
    \label{equ: displacement variance exact}
        \text{Var}\left(\frac{D_N(t)}{N}\right)=\frac{1}{N}\int_{\mathcal Z}\Phi_t^2(z)\,\nu(dz) \leq \frac{1}{N}\int_{\mathcal Z}G_T^2(z)\,\nu(dz) =\frac{V_T}{N}.
    \end{equation}
    In addition, due to \cref{lem: Mean Displacement Curve}, $\frac{1}{N}\E[D_N(t)] = d(t)$, we have that 
    \begin{equation}
        \label{equ: displacement MSE upper bound}
        \sup_{0\leq t\leq T}\E\left[\left|\frac{D_N(t)}{N}-d(t)\right|^2\right]\leq\frac{V_T}{N}.
    \end{equation}
    Therefore, applying Chebyshev's inequality, we first derive the following pointwise second-moment bound for any $\varepsilon>0$:
    \begin{equation*}
        \pro\left(\left|\frac{D_N(t)}{N}-d(t)\right|>\varepsilon\right)\leq\frac{1}{\varepsilon^2}\E\left[\left|\frac{D_N(t)}{N}-d(t)\right|^2\right]\leq\frac{V_T}{N\varepsilon^2}.
    \end{equation*}

    We next demonstrate the upper bound for the sample-path $\mathbb L^\infty$-difference. Recall from \cref{lem: Mean Displacement Curve} that for any $t>0$,
    \begin{equation*}
        d'(t)=\lambda_0\E[B]+\lambda_0A_H \bar F(t) \leq L := d'(0)=\lambda_0\E[B]+\lambda_0A_H.
    \end{equation*}
    Hence, the mean displacement curve $d$ is $L$-Lipschitz.

    Fix $0<\varepsilon\leq1$. Choose an equal partition
    \begin{equation*}
        \Pi = \{t_0,\cdots,t_K\},\quad \text{where}\quad 0=t_0<t_1<\cdots<t_K=T,
    \end{equation*}
    satisfying that
    \begin{equation*}
    \|\Pi\| : =\max_{0\leq k<K}(t_{k+1}-t_k)\leq\frac{\varepsilon}{4 L }.
    \end{equation*}
    It is enough to choose $K+1\leq 2+\frac{4 L T}{\varepsilon}$, i.e., $K=  \lfloor1+\frac{4 L T}{\varepsilon}\rfloor$.

    Fix $t\in[t_k,t_{k+1}]$. Because $D_N(t)$ and  $d(t)$ are both nondecreasing,
    \begin{equation*}
        D_N(t_k)\leq D_N(t)\leq D_N(t_{k+1}) \quad \text{and} \quad d(t_k)\leq d(t)\leq d(t_{k+1}).
    \end{equation*}
    Therefore, 
    \begin{align*}
        \frac{D_N(t)}{N}-d(t)
        &\leq\frac{D_N(t_{k+1})}{N}-d(t_k) \\
        &=\left(\frac{D_N(t_{k+1})}{N}-d(t_{k+1})\right)+\left(d(t_{k+1})-d(t_k)\right) \\
        &\leq\left|\frac{D_N(t_{k+1})}{N}-d(t_{k+1})\right|+\frac{\varepsilon}{4}.
    \end{align*}
    and similarly,
    \[
    d(t)-\frac{D_N(t)}{N} \leq \left|\frac{D_N(t_{k})}{N}-d(t_{k})\right|+\frac{\varepsilon}{4}.
    \]

    To establish 
    \begin{equation*}
        \sup_{0\leq t\leq T}\left|\frac{D_N(t)}{N}-d(t)\right|\leq\varepsilon,
    \end{equation*}
    it is enough to let, for any $1\leq k\leq K$ 
    \begin{equation*}
        \max_{0\leq k\leq K}\left|\frac{D_N(t_k)}{N}-d(t_k)\right|\leq\frac{3\varepsilon}{4}.
    \end{equation*}
    Therefore, we can bound the $L^\infty$-distance via a union bound, such that
    \begin{align*}
        \pro\left(\sup_{0\leq t\leq T}\left|\frac{D_N(t)}{N}-d(t)\right|>\varepsilon\right)& \leq\sum_{k=0}^{K}\pro\left(\left|\frac{D_N(t_k)}{N}-d(t_k)\right|>\frac{3\varepsilon}{4}\right)\\
        & \leq (K+1) \frac{16V_T}{9N\varepsilon^2}\\
        & \leq (2+\frac{4 L T}{\varepsilon})\frac{16V_T}{9N\varepsilon^2}.
    \end{align*}

    When $0<\varepsilon\leq1$, the last term can be further bounded by  
    \begin{equation*}
        (2+\frac{4 L T}{\varepsilon})\frac{16V_T}{9N\varepsilon^2} \leq \frac{2+4 L T}{\varepsilon}\frac{16V_T}{9N\varepsilon^2} = \frac{C_T}{N\varepsilon^3}.
    \end{equation*}

    We finally show the convergence result with the given rate. For the pointwise convergence rate, set $\varepsilon = \delta/\sqrt N$ for some fixed $\delta>0$ in \eqref{equ: displacement pointwise concentration}. Then
    \begin{equation*}
        \pro\left(\sqrt N\left|\frac{D_N(t)}{N}-d(t)\right|>\delta \right)\leq\frac{V_T}{\delta^2}.
    \end{equation*}
    Taking $\limsup_{N\to\infty}$ and then $\delta\to\infty$, we have that
    \begin{equation*}
        \left|\frac{D_N(t)}{N}-d(t)\right|=O_{\mathbb P}(N^{-1/2}).
    \end{equation*}
    
    For the sample-path convergence rate, set $\varepsilon=\delta N^{-1/3}$ in \eqref{equ: displacement sample path concentration}. For every fixed $\delta>0$ and all sufficiently large $N$, $\varepsilon\leq 1$, and
    \begin{equation*}
        \pro\left(N^{1/3}\sup_{0\leq t\leq T}\left|\frac{D_N(t)}{N}-d(t)\right|>\delta\right)\leq\frac{C_T}{\delta^3}.
    \end{equation*}
    Taking $\limsup_{N\to\infty}$ and then $\delta\to\infty$, we have that
    \begin{equation*}
        \sup_{0\leq t\leq T}\left|\frac{D_N(t)}{N}-d(t)\right|=O_{\mathbb P}(N^{-1/3}).
    \end{equation*}

    Finally, because $D_{N,j}(\cdot)\stackrel{\mathcal L}{=}D_N(\cdot)$ for every fixed turn state $j$ with $S_j>0$, the same concentration bounds, convergence in probability, and finite-size rates hold for the normalized tagged-conversation displacement process $\frac{D_{N,j}(\cdot)}{N}$. This completes the proof.
\end{proof}

\subsection{Proof of \cref{thm: Characteristic Time Convergence- Full}}
\label{ssec: Proof of the Full Version of CT convergence}

We next focus on proving the convergence of the eviction time of a tagged turn-$j$ conversation content. First, we bound the tail of the turn-$j$ conversation content length $R_j$ for each $j\geq 1$, via the following lemma.

\begin{lemma}[Tail Bound of Turn-$j$ Content Length]
\label{lem: Tail Bound of Turn-j Content}
    Fix a turn state $j$ such that $S_j>0$. For any fixed $\theta\in(0,\theta_0)$, we have the following tail bound
    \begin{equation}
    \label{equ: Rj exponential tail}
        \pro(R_j>x)\leq \frac{\E[e^{\theta B}]}{S_j}e^{-\theta x}, \qquad x\geq0.
    \end{equation}
    Consequently,
    \begin{equation}
    \label{equ: Rj order}
        R_j=O_{\mathbb P}(1).
    \end{equation}
\end{lemma}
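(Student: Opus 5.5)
The plan is a Chernoff-type bound obtained by dominating $R_j$ by $B$ on the event that the conversation actually reaches turn $j$. First I need to pin down what $R_j$ means for a tagged turn-$j$ content. The tagged conversation is observed at a turn-$j$ access, so $R_j$ should be read under the conditional law given $\{M\geq j\}$. In fact, by \cref{ass:independent block incrememts} the sequence $\{Z_\ell\}$ is independent of $M$, so the unconditional and conditional laws of $R_j$ coincide. I will nonetheless work with the conditional version, since that is where the factor $1/S_j$ comes from, and it also covers the unconditional reading.

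The key monotonicity fact is that every $Z_\ell\geq 1$. Hence on $\{M\geq j\}$ we have $R_j=\sum_{\ell\le j}Z_\ell\le\sum_{\ell\le M}Z_\ell=B$. Applying Markov's inequality to $e^{\theta R_j}$ then gives, for $\theta\in(0,\theta_0)$ and $x\geq0$,
\begin{equation*}
\pro(R_j>x\mid M\geq j)\le e^{-\theta x}\,\frac{\E[e^{\theta R_j}\one\{M\geq j\}]}{S_j}\le e^{-\theta x}\,\frac{\E[e^{\theta B}\one\{M\geq j\}]}{S_j}\le \frac{\E[e^{\theta B}]}{S_j}e^{-\theta x}.
\end{equation*}
Two facts finish this step. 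First, $\E[e^{\theta B}]\le\E[e^{\theta_0 B}]<\infty$ for $\theta<\theta_0$, since $B\geq0$ and \cref{ass: Regularity of one cumulative conversation blocks} applies. Second, if one prefers the unconditional probability, independence gives $\pro(R_j>x)=\pro(R_j>x\mid M\ge j)$, so the same bound holds.

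For \eqref{equ: Rj order}, fix $\delta>0$ and choose $x_\delta=\theta^{-1}\log\big(\E[e^{\theta B}]/(S_j\delta)\big)$. Then $\pro(R_j>x_\delta)\le\delta$. The law of $R_j$ does not depend on $N$, because the block-increment primitives are unscaled, so this is uniform in $N$ and gives tightness. That is exactly $R_j=O_{\pro}(1)$.

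I expect no real obstacle. The only delicate point is justifying the domination $R_j\le B$, which needs the conditioning on $M\geq j$; this is why $S_j>0$ is required and why $1/S_j$ appears rather than a bound free of $j$.
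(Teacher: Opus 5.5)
Your proposal is correct and follows essentially the paper's proof. Both arguments use independence of $M$ and the increments together with the domination $R_j\le B$ on $\{M\ge j\}$ to get $\E[e^{\theta R_j}]\le \E[e^{\theta B}]/S_j$, then apply Markov's inequality, and deduce tightness from the fact that the law of $R_j$ does not depend on $N$. The only difference is cosmetic: you apply Markov under the conditional law given $\{M\ge j\}$ and then transfer to the unconditional probability, whereas the paper bounds the unconditional moment generating function directly.
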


\begin{proof}

    By \cref{ass:independent block incrememts}, the block increments are independent of $M$. Because $R_j$ depends only on the block increments up to turn $j$, $R_j$ is independent of $M$. Hence,
    \begin{equation*}
        \E[e^{\theta R_j}]=\E[e^{\theta R_j}\mid M\geq j].
    \end{equation*}
    Conditional on the event $\{M\geq j\}$, the turn-$j$ history content is contained in the total lifetime block mass, and therefore $R_j\leq B$. Hence,
    \begin{align*}
        \E[e^{\theta R_j}]
        =\frac{1}{S_j}\E\left[e^{\theta R_j}\one\{M\geq j\}\right] 
        \leq\frac{1}{S_j}\E\left[e^{\theta B}\one\{M\geq j\}\right] \leq\frac{\E[e^{\theta B}]}{S_j}.
    \end{align*}
    Applying Markov's inequality, we obtain the following bound for every $x>0$:
    \begin{equation*}
        \pro(R_j>x)\leq e^{-\theta x}\E[e^{\theta R_j}]\leq\frac{\E[e^{\theta B}]}{S_j}e^{-\theta x}.
    \end{equation*}
     Because the distribution of $R_j$ is independent of $N$, the family $\{R_j\}_{N\geq1}$ is tight, and hence $R_j=O_{\mathbb P}(1)$.
\end{proof}

We are now ready to prove \cref{thm: Characteristic Time Convergence- Full}.

\begin{reptheorem}{thm: Characteristic Time Convergence- Full}[Characteristic Time Convergence]
    Fix a turn state $j$ such that $S_j>0$, and a finite horizon $T>T_C$. For every
    \begin{equation*}
        0<\varepsilon<\min\{T_C,T-T_C\},
    \end{equation*}
    the characteristic time satisfies
    \begin{equation*}
        \pro\left(\left|\tau_C^{(N)}(j)-T_C\right|>\varepsilon\right)
        \leq\frac{18V_T}{Nc_d^2\varepsilon^2}
        +\frac{\E[e^{\theta B}]}{S_j}\exp\left(-\frac{\theta c_d}{3}N\varepsilon\right) = p_{N,j}(\varepsilon),
    \end{equation*}
    where $c_d:=\lambda_0\E[B]$ and $V_T:=\int_{\mathcal Z}G_T(z)^2\,\nu(dz)<\infty$ is the second-moment constant introduced in \cref{lem: Poisson Random Measure Representation}. Consequently,
    \begin{equation*}
        \tau_C^{(N)}(j)\xrightarrow{\mathbb P}T_C
    \end{equation*}
    for every fixed turn state $j$, and
    \begin{equation}
    \label{equ: characteristic time convergence rate-2}
        \tau_C^{(N)}(j)-T_C=O_{\mathbb P}(N^{-1/2}).
    \end{equation}
\end{reptheorem}

\begin{proof}
    First, because for any $t>0$, $d'(t) = c_d + \lambda_0 A_H \bar F(t) \geq c_d$, thus for every $0<\varepsilon<T_C$,
    \begin{equation}
    \label{equ: lower separation around TC}
        d(T_C-\varepsilon)\leq1-c_d\varepsilon,
    \end{equation}
    and for every $\varepsilon>0$,
    \begin{equation}
    \label{equ: upper separation around TC}
        d(T_C+\varepsilon)\geq1+c_d\varepsilon.
    \end{equation}

    Fix $0<\varepsilon<\min\{T_C,T-T_C\}$ and define
    \begin{equation*}
        t_-:=T_C-\varepsilon, \qquad t_+:=T_C+\varepsilon, \qquad \eta:=\frac{c_d\varepsilon}{3}.
    \end{equation*}
    Then we rewrite equations \eqref{equ: lower separation around TC} and \eqref{equ: upper separation around TC}, such that 
    \begin{equation*}
    d(t_-)\leq1-3\eta, \quad \text{and }\quad d(t_+)\geq1+3\eta, \quad \text{respectively.}
    \end{equation*}
    
    We introduce the event
    \begin{equation*}
        \mathcal E_N(\varepsilon) := \left\{\left|\frac{D_N(t_-)}{N}-d(t_-)\right |\leq\eta\right\}\cap\left\{\left|\frac{D_N(t_+)}{N}-d(t_+)\right|\leq\eta\right\}\cap\left\{\frac{R_j}{N}\leq\eta\right\},
    \end{equation*}
    and demonstrate that on $\mathcal E_N(\varepsilon)$, we have that 
    \begin{equation}
        \label{equ: Bound the tauNCj}
        T_C-\varepsilon< \tau_C^{(N)}(j)\leq T_C+\varepsilon.
    \end{equation}
    
    On the one hand, when event $\mathcal E_N(\varepsilon)$ happens,
    \begin{equation*}
        \frac{D_N(t_-)}{N}\leq d(t_-)+\eta\leq1-2\eta, \quad \text{and} \quad  \frac{N-R_j}{N}=1-\frac{R_j}{N}\geq1-\eta,
    \end{equation*}
    which implies that, at the time $t_-$, the displacement is below the threshold, i.e., $D_N(t_-)<N-R_j$, and thus $\tau_C^{(N)}(j)>T_C-\varepsilon.$

    On the other hand, at time $t_+$, given the event $\mathcal E_N(\varepsilon)$ happening, 
    \begin{equation*}
        \frac{D_N(t_+)}{N}\geq d(t_+)-\eta\geq1+2\eta>1.
    \end{equation*}
    Because $R_j > 0$, $N-R_j< N<D_N(t_+)$, and hence $\tau_C^{(N)}(j)\leq T_C+\varepsilon.$ Thus we conclude the inclusion $\mathcal E_N(\varepsilon)\subseteq\left\{\left|\tau_C^{(N)}(j)-T_C\right|\leq\varepsilon\right\}$ and claim \eqref{equ: Bound the tauNCj}.

    We now apply the union bound, which gives
    \begin{equation}
        \label{equ: union bound for tau and TC diff}
        \pro\left(\left|\tau_C^{(N)}(j)-T_C\right|>\varepsilon\right)
        \leq\pro\left(\left|\frac{D_N(t_-)}{N}-d(t_-)\right|>\eta\right)+\pro\left(\left|\frac{D_N(t_+)}{N}-d(t_+)\right|>\eta\right)+\pro(R_j>N\eta).
    \end{equation}
    
    We use the pointwise displacement tail bound in \cref{thm: Displacement Process Convergence} to bound the first two terms in the R.H.S. of \eqref{equ: union bound for tau and TC diff}, such that
    \begin{equation*}
        \pro\left(\left|\frac{D_N(t_\pm)}{N}-d(t_\pm)\right|>\eta\right)\leq\frac{V_T}{N\eta^2} \stackrel{\eta=c_d\varepsilon/3}{=}\frac{9V_T}{N c_d^2\varepsilon^2}.
    \end{equation*}
    In addition, the last term on the R.H.S. of \eqref{equ: union bound for tau and TC diff} can be bounded using \cref{lem: Tail Bound of Turn-j Content}:
    \begin{equation*}
        \pro(R_j>N\eta)\leq\frac{\E[e^{\theta B}]}{S_j}e^{-\theta N\eta}=\frac{\E[e^{\theta B}]}{S_j}\exp\left(-\frac{\theta c_d}{3}N\varepsilon\right).
    \end{equation*}
    Combining these bounds proves \eqref{equ: characteristic time finite N bound}. Moreover, for every fixed $\varepsilon>0$, all terms on the right-hand side of \eqref{equ: union bound for tau and TC diff} converge to zero as $N\rightarrow\infty$. Hence,
    \begin{equation*}
        \tau_C^{(N)}(j)\xrightarrow{\mathbb P}T_C.
    \end{equation*}
    
    Finally, the convergence rate can be shown as follows. When we let $x>0$ be fixed and set $\varepsilon=\frac{x}{\sqrt N}$, $\varepsilon<\min\{T_C,T-T_C\}$ can always be satisfied for sufficiently large $N$, and \eqref{equ: characteristic time finite N bound} gives
    \begin{equation*}
        \pro\left(\sqrt N\left|\tau_C^{(N)}(j)-T_C\right|>x\right)\leq\frac{18V_T}{c_d^2x^2}+\frac{\E[e^{\theta B}]}{S_j}\exp\left(-\frac{\theta c_d}{3}x\sqrt N\right).
    \end{equation*}
    Therefore, letting $N \uparrow \infty$, we conclude that $\tau_C^{(N)}(j)-T_C=O_{\mathbb P}(N^{-1/2}).$
\end{proof}

\subsection{Proof of \cref{thm: Hit Ratio Convergence Full}}
\label{ssec: Proof of the Full Version of hit ratio conv MCM-LRU short}

We finally establish the mean-field convergence of the  hit ratio under the whole-content LRU policy. For every $j\geq2$, recall that $H_{N,j}$ denotes the indicator that the history content of size $R_{j-1}$ remains resident in HBM when the turn-$j$ prompt arrives.

To analyze the limiting behavior of the hit ratio, we first define some auxiliary notations. 

Recall that $A_H=\sum_{j\geq1}S_{j+1}\bar R_j$ denotes the expected amount of reusable history content workload. We further define $A_H^{(J)}$ as a truncation of $A_H$ at level $J\geq 1$, i.e.,
\begin{equation}
\label{equ: truncated AH J}
    A_H^{(J)}:=\sum_{j=1}^{J}S_{j+1}\bar R_j.
\end{equation}
The tail workload is then defined such that
\begin{equation}
\label{equ: reusable workload tail}
    \mathcal T_J:=\sum_{j>J}S_{j+1}\bar R_j=A_H-A_H^{(J)}.
\end{equation}
In addition, we let 
\begin{equation}
\label{equ: truncated hit second moment constant}
    K_J:=\sum_{j=1}^{J}S_{j+1}\sqrt{\E[R_j^2]}
\end{equation}
which will be used to control the finite-$J$ approximation error.

We prove \cref{thm: Hit Ratio Convergence Full} through the following lemmas. We first establish an exact representation of the finite-$N$ LRU hit ratio in terms of the characteristic time. Recall that $W_j$, the prompt interarrival time between turns $j$ and $j+1$, has cumulative distribution function $F$. By the definition of the characteristic time, we have that
\begin{equation}
\label{equ: hit indicator characteristic time representation}
    H_{N,j+1}=\one\left\{W_j<\tau_C^{(N)}(j)\right\}.
\end{equation}

\begin{lemma}[Exact Representation of the LRU Hit Ratio]
\label{lem: Exact Representation of LRU Hit Probability}
For every $N$, the  hit ratio admits the representation
\begin{equation}
\label{equ: exact LRU hit ratio characteristic representation}
    h ^{(N)} = \frac{1}{A_R}\sum_{j\geq1}S_{j+1}\E\left[R_j F(\tau^{N}_C(j)) \right].
\end{equation}
\end{lemma}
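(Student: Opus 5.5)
The plan is to start from the definition \eqref{equ: finite N LRU hit ratio} of $h^{(N)}$. We rewrite the random sum over turns as a deterministic series, then use the characteristic-time representation \eqref{equ: hit indicator characteristic time representation} of the hit indicator. Conditioning on everything except the interarrival time $W_j$ will then produce $F(\tau_C^{(N)}(j))$.

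\textbf{Step 1: expand the random sum.} Write
\[
\sum_{j=1}^{M-1}R_jH_{N,j+1}=\sum_{j\geq1}R_jH_{N,j+1}\one\{M\geq j+1\}.
\]
All terms are nonnegative, so by Tonelli the expectation can be taken term by term. It remains to show that
\[
\E\left[R_jH_{N,j+1}\one\{M\geq j+1\}\right]=S_{j+1}\E\left[R_jF(\tau_C^{(N)}(j))\right].
\]
Here the right-hand side is understood under the Palm law of a tagged turn-$j$ access. By \cref{lem: Poisson Random Measure Representation}, this Palm law makes the background independent of the tagged conversation's own mark.

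\textbf{Step 2: condition on the tagged access.} Fix $j$ and condition on $\{M\geq j+1\}$, which has probability $S_{j+1}$. Under \cref{ass:independent block incrememts}, $R_j$ is independent of $M$. By \cref{ass: Exponential Prompt Inter-arrival Time}, $W_j$ is independent of $M$ and of the block increments. We invoke \eqref{equ: hit indicator characteristic time representation}, $H_{N,j+1}=\one\{W_j<\tau_C^{(N)}(j)\}$. Here $\tau_C^{(N)}(j)$ is a functional of the background displacement $D_{N,j}$ and of $R_j$ only, by \cref{def: characteristic time}. Since the tagged conversation is not accessed before time $W_j$, the displacement up to $W_j$ is exactly $D_{N,j}$.

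Next we use independence. By Slivnyak's theorem (as in \cref{lem: Poisson Random Measure Representation}), the background is independent of the tagged conversation's mark, so $W_j$ is independent of the pair $(R_j,\tau_C^{(N)}(j))$. Conditioning on $(R_j,\tau_C^{(N)}(j))$ gives
\[
\E\left[R_j\one\{W_j<\tau_C^{(N)}(j)\}\,\middle|\,R_j,\tau_C^{(N)}(j)\right]=R_j\,\pro(W_j<\tau)\big|_{\tau=\tau_C^{(N)}(j)}.
\]
Since $F$ is continuous (it is Lipschitz), $\pro(W_j<\tau)=F(\tau)$. The boundary case $R_j>N$ is consistent with the formula: there $\tau_C^{(N)}(j)=0$ and $F(0)=0$, the latter using $\pro(W=0)=0$, which again follows from continuity of $F$.

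\textbf{Step 3: sum up.} Multiply by $S_{j+1}$, sum over $j\geq1$, and divide by $A_R$ to obtain \eqref{equ: exact LRU hit ratio characteristic representation}.

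\textbf{Main obstacle.} The delicate step is Step 2, namely the claim that $W_j$ is independent of $\tau_C^{(N)}(j)$. This needs two facts about the Palm construction. First, the event $\{M\geq j+1\}$ and the tagged conversation's future increments do not affect the background process. Second, the event "evicted before reuse," on which the displacement is defined, is consistently replaced by the unconditional eviction time of the background-driven displacement. I would handle this by defining $\tau_C^{(N)}(j)$ directly as the hitting time of the background process $D_{N,j}$ computed as if the tagged conversation were never reaccessed. With that definition, $H_{N,j+1}=\one\{W_j<\tau_C^{(N)}(j)\}$ holds pathwise, and independence follows from the product structure of the Palm measure (tagged mark $\otimes$ background Poisson measure).
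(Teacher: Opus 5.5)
Your proposal is correct and follows essentially the same route as the paper: expand the random sum over $\{M\geq j+1\}$, use $H_{N,j+1}=\one\{W_j<\tau_C^{(N)}(j)\}$ together with the independence of $W_j$ from $(R_j,\tau_C^{(N)}(j))$ (the background is independent of the tagged mark) to obtain $F(\tau_C^{(N)}(j))$, and factor out $S_{j+1}$ using the independence of $M$. Your extra care makes explicit several points the paper's proof leaves implicit: invoking Tonelli, using continuity of $F$ to justify $\pro(W_j<\tau)=F(\tau)$ and $F(0)=0$, and reading $\tau_C^{(N)}(j)$ as the unconditional background hitting time rather than conditioning on eviction before reuse.
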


\begin{proof}
    By \cref{ass: Exponential Prompt Inter-arrival Time}, the tagged-conversation interarrival time $W_j$ is independent of the block increments and hence of $R_j$. Moreover, the displacement process is determined entirely by the background conversations, which are independent of the tagged conversation. Therefore, $W_j$ is independent of $(R_j,\tau_C^{(N)}(j))$. Conditioning on $(R_j,\tau_C^{(N)}(j))$ gives
    \begin{align*}
        \E\left[H_{N,j+1}\mid R_j,\tau_C^{(N)}(j)\right]
        =\pro\left(W_j<\tau_C^{(N)}(j)\mid R_j,\tau_C^{(N)}(j)\right) = F \big (\tau^{(N)}_C(j)\big) %\\&=1-e^{-r\tau_C^{(N)}(j)}.
    \end{align*}
    
    Using \eqref{equ: finite N LRU hit ratio},
    \begin{equation*}
        \E\left[\sum_{j=1}^{M-1}R_jH_{N,j+1}\right]= \sum_{j\geq1}\E\left[R_jH_{N,j+1}\one\{M\geq j+1\}\right].
    \end{equation*}
    Additionally, according to \cref{ass:independent block incrememts}, $M$ is independent of the block increment. Hence,
    \begin{equation}
    \label{equ: conditional hit probability}
        \E\Big[R_jH_{N,j+1}\one\{M\geq j+1\}\Big] =S_{j+1}\E\left[R_jH_{N,j+1}\right] =S_{j+1}\E\left[R_jF(\tau^{N}_C(j))\right].
    \end{equation}
    By summing over $j\geq 1$ and dividing by $A_R$, we prove \eqref{equ: exact LRU hit ratio characteristic representation}.
\end{proof}

Recall in \cref{thm: Characteristic Time Convergence- Full}, that the convergence of $\tau^{(N)}_C(j)$ is not uniform in $j$. We therefore first use the finite-turn proxy
\begin{equation}
    \label{equ: finite turn proxy J}
    \sum_{j=1}^{J}S_{j+1}\E\left[R_jF(\tau^{N}_C(j))\right].
\end{equation}
The difference between the finite-size and limiting hit ratios can then be decomposed into two parts: the tail contributions beyond turn $J$, which are controlled by $\mathcal T_J$, and the truncated approximation error $\left| \sum_{j=1}^{J}S_{j+1}\E\left[R_jF(\tau^{N}_C(j))\right]- A_H^{(J)}F(T_C) \right|$, which can be bounded via the following lemma.

\begin{lemma}[Finite-Turn Hit Ratio Error]
    \label{lem: Finite Turn Hit Ratio Error}
        Fix a finite $J$ such that $S_{J+1}>0$. For every $0<\varepsilon<\min\{T_C,T-T_C\}$,
    \begin{equation}
    \label{equ: finite turn hit ratio error}
        \left| \sum_{j=1}^{J}S_{j+1}\E\left[R_jF(\tau^{N}_C(j))\right]- A_H^{(J)}F(T_C)\right|\leq L_F \varepsilon A_H^{(J)}+K_J\sqrt{p_{N,J}(\varepsilon)}.
    \end{equation}
\end{lemma}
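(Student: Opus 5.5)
We bound each summand separately and then sum over $j\le J$. For fixed $j\in\{1,\dots,J\}$ (note $S_j\ge S_{J+1}>0$), we write
\begin{equation*}
    \E\left[R_jF(\tau^{N}_C(j))\right]-\bar R_jF(T_C)=\E\left[R_j\left(F(\tau^{N}_C(j))-F(T_C)\right)\right],
\end{equation*}
and split this expectation on the event $\mathcal G_j:=\{|\tau_C^{(N)}(j)-T_C|\le\varepsilon\}$ and its complement. Multiplying by $S_{j+1}$ and summing over $j\le J$, the triangle inequality then reduces \eqref{equ: finite turn hit ratio error} to two per-turn bounds.

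\emph{On $\mathcal G_j$.} By the Lipschitz continuity of $F$ (\cref{ass: Exponential Prompt Inter-arrival Time}), $|F(\tau_C^{(N)}(j))-F(T_C)|\le L_F\varepsilon$ there. Since $R_j\ge0$,
\begin{equation*}
    \E\left[R_j\left|F(\tau^{N}_C(j))-F(T_C)\right|\one_{\mathcal G_j}\right]\le L_F\varepsilon\,\bar R_j .
\end{equation*}
Summing with weights $S_{j+1}$ gives exactly $L_F\varepsilon A_H^{(J)}$.

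\emph{On $\mathcal G_j^c$.} Here we only use $|F(\cdot)-F(T_C)|\le1$ and apply Cauchy--Schwarz:
\begin{equation*}
    \E\left[R_j\one_{\mathcal G_j^c}\right]\le\sqrt{\E[R_j^2]}\sqrt{\pro(\mathcal G_j^c)}\le\sqrt{\E[R_j^2]}\sqrt{p_{N,j}(\varepsilon)},
\end{equation*}
where the last step is \cref{thm: Characteristic Time Convergence- Full} applied to turn state $j$. The admissible range $0<\varepsilon<\min\{T_C,T-T_C\}$ is exactly that theorem's hypothesis. The moment $\E[R_j^2]$ is finite by \cref{lem: Tail Bound of Turn-j Content}, or simply because $R_j$ is independent of $M$ and $R_j\le B$ on $\{M\ge j\}$.

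\emph{Uniformizing over $j\le J$.} The only point needing care is replacing $p_{N,j}(\varepsilon)$ by $p_{N,J}(\varepsilon)$. In $p_{N,j}$ only the second term depends on $j$, through the factor $1/S_j$. Since $S_j$ is nonincreasing, $1/S_j\le 1/S_J$ for every $j\le J$, so $p_{N,j}(\varepsilon)\le p_{N,J}(\varepsilon)$, and $S_J\ge S_{J+1}>0$ keeps this finite.

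Summing the two pieces with weights $S_{j+1}$ gives the bound $K_J\sqrt{p_{N,J}(\varepsilon)}$ for the complement part. Adding the $L_F\varepsilon A_H^{(J)}$ term from $\mathcal G_j$ yields \eqref{equ: finite turn hit ratio error}.

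I expect no real obstacle. The one subtlety is whether $\tau_C^{(N)}(j)$ is well defined and measurable jointly with $R_j$, which is handled by the representation in \cref{lem: Exact Representation of LRU Hit Probability}.
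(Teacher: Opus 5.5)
Your proposal is correct and follows the paper's proof essentially step for step. Both arguments split on the event $\{|\tau_C^{(N)}(j)-T_C|\le\varepsilon\}$ and use Lipschitz continuity of $F$ on it. On the complement, both use the trivial bound together with Cauchy--Schwarz and \cref{thm: Characteristic Time Convergence- Full}. Both then use the monotonicity $S_j\ge S_J$ to replace $p_{N,j}(\varepsilon)$ by $p_{N,J}(\varepsilon)$ before summing with weights $S_{j+1}$.
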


\begin{proof}
    By the triangle inequality, the L.H.S. of \eqref{equ: finite turn hit ratio error} can be bounded such that
    \begin{align*}
        \left|\sum_{j=1}^{J}S_{j+1}\E\left[R_jF\left(\tau_C^{(N)}(j)\right)\right]-\sum_{j=1}^{J}S_{j+1}\bar R_jF(T_C)\right| \leq\sum_{j=1}^{J}S_{j+1}\E\left[R_j\left|F\left(\tau_C^{(N)}(j)\right)-F(T_C)\right|\right],
    \end{align*}

    Due to \cref{ass: Exponential Prompt Inter-arrival Time}, $F(\cdot)$ is $L_F$-Lipschitz, i.e.,
    \begin{equation*}
        |F(t)-F(s)|\leq L_F|t-s|, \qquad \text{for all } 0\leq t<s<\infty.
    \end{equation*}
    For every $j\leq J$, we define the concentration event $\mathcal G_{N,j}(\varepsilon):=\left\{\left|\tau_C^{(N)}(j)-T_C\right|\leq\varepsilon\right\}$. On $\mathcal G_{N,j}(\varepsilon)$, the Lipschitz condition implies that
    \begin{equation*}
        \left|F\left(\tau_C^{(N)}(j)\right)-F(T_C)\right|\leq L_F\varepsilon.
    \end{equation*}
    Because $0\leq F(t)\leq1$ for all $t\geq0$, on the complementary event $\mathcal G_{N,j}(\varepsilon)^c$ we still have $\left|F\left(\tau_C^{(N)}(j)\right)-F(T_C)\right|\leq1$. Therefore,
    \begin{equation*}
        \E\left[R_j\left|F\left(\tau_C^{(N)}(j)\right)-F(T_C)\right|\right]
        \leq L_F\varepsilon\E[R_j]+\E\left[R_j\one\left\{\mathcal G_{N,j}(\varepsilon)^c\right\}\right].
    \end{equation*}
    By the Cauchy-Schwarz inequality,
    \begin{equation*}
        \E\left[R_j\one\left\{\mathcal G_{N,j}(\varepsilon)^c\right\}\right]
        \leq\sqrt{\E[R_j^2]}\sqrt{\pro\left(\left|\tau_C^{(N)}(j)-T_C\right|>\varepsilon\right)}.
    \end{equation*}

    According to \cref{thm: Characteristic Time Convergence- Full},
    \begin{equation*}
        \pro\left(\left|\tau_C^{(N)}(j)-T_C\right|>\varepsilon\right)\leq p_{N,j}(\varepsilon).
    \end{equation*}
    Moreover, because $S_j$ is nonincreasing in $j$, we have $S_j\geq S_J$ for every $j\leq J$. Hence, the finite-size bound in \cref{thm: Characteristic Time Convergence- Full} implies $p_{N,j}(\varepsilon)\leq p_{N,J}(\varepsilon),$
    and therefore
    \begin{equation*}
        \pro\left(\left|\tau_C^{(N)}(j)-T_C\right|>\varepsilon\right)\leq p_{N,J}(\varepsilon).
    \end{equation*}
    It follows that for each $1\leq j\leq J$,
    \begin{equation*}
        \E\left[R_j\left|F\left(\tau_C^{(N)}(j)\right)-F(T_C)\right|\right]
        \leq L_F\varepsilon\bar R_j+\sqrt{\E[R_j^2]}\sqrt{p_{N,J}(\varepsilon)}.
    \end{equation*}
    Multiplying by $S_{j+1}$ and summing over $j=1,\ldots,J$, we obtain that
    \begin{equation*}
        \sum_{j=1}^{J}S_{j+1}\E\left[R_j\left|F\left(\tau_C^{(N)}(j)\right)-F(T_C)\right|\right]
        \leq L_F\varepsilon A_H^{(J)}+K_J\sqrt{p_{N,J}(\varepsilon)},
    \end{equation*}
    and we prove \eqref{equ: finite turn hit ratio error}.
\end{proof}

We next control the error of the tail workload $\mathcal T_J = \sum\limits_{j>J}S_{j+1}\bar R_j$, via the lemma below.

\begin{lemma}[Bound for the Tail Workload]
\label{lem: Tail Bound for Reusable Workload}
    For every integer $J\geq1$ and any fixed $\theta\in(0,\theta_0)$, the tail workload $\mathcal T_J$ has the following upper bound:
    \begin{equation}
    \label{equ: reusable workload tail exponential bound}
        \mathcal T_J\leq \E[B^2e^{\theta B}]e^{-\theta(J+2)}.
    \end{equation}
    Consequently,
    \begin{equation}
    \label{equ: reusable workload tail convergence}
        \mathcal T_J\rightarrow 0.
    \end{equation}
\end{lemma}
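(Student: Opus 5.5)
The plan is to rewrite the tail workload as a single expectation over one conversation and then use a Chernoff-type bound, with $M\le B$ converting a large turn count into large total mass.

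First, I will undo the factorization $S_{j+1}\bar R_j$. By \cref{ass:independent block incrememts}, $R_j$ depends only on the block increments, so it is independent of $M$. Hence
\[
S_{j+1}\bar R_j=\E\big[R_j\one\{M\geq j+1\}\big].
\]
On the event $\{M\geq j+1\}$ we have $R_j\leq R_M=B$, because every increment is nonnegative. This gives $S_{j+1}\bar R_j\leq \E[B\one\{M\geq j+1\}]$.

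Next, I sum over $j>J$ and exchange sum and expectation, which is justified by Tonelli since all terms are nonnegative:
\[
\mathcal T_J\leq \E\Big[B\sum_{j>J}\one\{M\geq j+1\}\Big]=\E\big[B\,(M-1-J)^+\big].
\]
The count $(M-1-J)^+$ vanishes unless $M\geq J+2$, and it is at most $M\leq B$. The inequality $M\le B$ holds because each $Z_\ell\geq1$, the same fact used in \cref{lem: Poisson Random Measure Representation}. Therefore
\[
\mathcal T_J\leq \E\big[B^2\one\{M\geq J+2\}\big]\leq \E\big[B^2\one\{B\geq J+2\}\big].
\]

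For the exponential step, I use the pointwise bound $\one\{B\geq J+2\}\leq e^{\theta(B-(J+2))}$, valid for any $\theta>0$. It yields $\mathcal T_J\leq \E[B^2e^{\theta B}]\,e^{-\theta(J+2)}$, which is \eqref{equ: reusable workload tail exponential bound}.

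It remains to check that the constant is finite for $\theta\in(0,\theta_0)$. Since $b^2e^{-(\theta_0-\theta)b}$ is bounded on $[0,\infty)$, we have $B^2e^{\theta B}\leq C_\theta e^{\theta_0 B}$ for some constant $C_\theta$. By \cref{ass: Regularity of one cumulative conversation blocks}, $\E[e^{\theta_0 B}]<\infty$, so the constant is finite. Letting $J\to\infty$ then gives $\mathcal T_J\to0$, which is \eqref{equ: reusable workload tail convergence}.

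The only delicate point is the inequality $(M-1-J)^+\one\{M\geq J+2\}\leq B\,\one\{B\geq J+2\}$. It hinges on $Z_\ell\geq1$, which forces $M\leq B$. Everything else is routine monotone summation.
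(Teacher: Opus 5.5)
Your proof is correct and follows essentially the same route as the paper: you rewrite $\mathcal T_J$ as an expectation over a single conversation, use $M\le B$ to reach $\E[B^2\one\{B\geq J+2\}]$, and finish with the indicator bound $\one\{B\geq J+2\}\leq e^{\theta(B-J-2)}$ (you bound $R_j\le B$ termwise before summing, which gives the count $(M-1-J)^+$ where the paper uses the cruder factor $M$, but this changes nothing). Your explicit check that $\E[B^2e^{\theta B}]<\infty$ for $\theta<\theta_0$ is a worthwhile addition, since the paper's conclusion $\mathcal T_J\to0$ tacitly relies on it.
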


\begin{proof}x
    Using the independence between $M$ and the block increments $R_j$'s, we can rewrite $\mathcal T_J$, such that
    \begin{align*}
        \mathcal T_J = \sum_{j>J}S_{j+1}\bar R_j =\sum_{j>J}\E\left[R_j\one\{M\geq j+1\}\right] =\E\Big[\sum_{j=J+1}^{M-1}R_j\Big].
    \end{align*}
    Conditional on the event $\{M\geq J+2\}$, each partial history content satisfies $R_j\leq B$. Hence,
    \begin{equation*}
        \sum_{j=J+1}^{M-1}R_j \leq MB\one\{M\geq J+2\}.
    \end{equation*}
    Because each turn generates at least one additional block, $M\leq B$ almost surely. Therefore,
    \begin{equation*}
        \sum_{j=J+1}^{M-1}R_j \leq B^2\one\{B\geq J+2\}.
    \end{equation*}
    For any fixed $\theta\in(0,\theta_0)$, because $\one\{B\geq J+2\} \leq e^{\theta (B-J-2)}$,
    \begin{equation*}
        B^2\one\{B\geq J+2\} \leq B^2e^{\theta B}e^{-\theta(J+2)} \quad \Rightarrow \quad  \mathcal T_J\leq\E[B^2e^{\theta B}]e^{-\theta(J+2)},
    \end{equation*}
    and we conclude \cref{lem: Tail Bound for Reusable Workload}.
\end{proof}

We now demonstrate proof of \cref{thm: Hit Ratio Convergence Full}, by applying \cref{lem: Exact Representation of LRU Hit Probability,lem: Finite Turn Hit Ratio Error,lem: Tail Bound for Reusable Workload}.

\begin{reptheorem}{thm: Hit Ratio Convergence Full}[Hit Ratio Convergence in the MCM-LRU Model]
    Under the MCM assumptions, fix a finite $J$ such that $S_{J+1}>0$, and fix $T>T_C$. For every $0<\varepsilon<\min\{T_C,T-T_C\}$,
    \begin{equation*}
    \left|h_{\LRU}^{(N)} -h_{\LRU}^{(\infty)}  \right| \leq \frac{L_F\varepsilon A_H^{(J)}}{A_R} +\frac{K_J}{A_R}\sqrt{p_{N,J}(\varepsilon)} +\frac{2\E[B^2e^{\theta B}]e^{-\theta(J+2)}}{A_R},
    \end{equation*}
    where $K_J:=\sum_{j=1}^{J}S_{j+1}\sqrt{\E[R_j^2]}$.
    Consequently, the system-level hit ratio under the whole-content LRU policy converges when $N\rightarrow\infty$:
    \begin{equation*}
        h_{\LRU}^{(N)} \rightarrow h_{\LRU}^{(\infty)}  =\frac{A_H}{A_R} F(T_C).
    \end{equation*}
\end{reptheorem}

\begin{proof}[Proof of \cref{thm: Hit Ratio Convergence Full}]
    Recall \cref{lem: Exact Representation of LRU Hit Probability} that the finite-size hit ratio has the following representation
    \begin{equation*}
    h ^{(N)} = \frac{1}{A_R} \sum_{j\geq1} S_{j+1} \E\left[R_j F\left(\tau_C^{(N)}(j)\right)\right],
    \end{equation*}
    and the limiting hit ratio,
    \begin{equation*}
    h ^{(\infty)} = \frac{1}{A_R} \sum_{j\geq1} S_{j+1}\bar R_j F(T_C).
    \end{equation*}
    We fix a truncation $J$ and use triangle inequality to bound the difference between $h ^{(N)}$ and $h ^{(\infty)}$, such that
    \begin{align*}
        \left|h ^{(N)} -h ^{(\infty)}  \right|
        &\leq \frac{1}{A_R}\left|\sum_{j=1}^{J}S_{j+1}\E\left[R_j F\left(\tau_C^{(N)}(j)\right)\right]-
        \sum_{j=1}^{J}S_{j+1}\bar R_j F(T_C)\right| \\
        &\quad+\frac{1}{A_R}\sum_{j>J}S_{j+1}\E\left[R_j F\left(\tau_C^{(N)}(j)\right)\right]+\frac{1}{A_R} \sum_{j>J}S_{j+1}\bar R_j F(T_C).
    \end{align*}
    Because $0\leq F(t)\leq1$,
    \begin{equation*}
        \sum_{j>J}S_{j+1}\E\left[R_j F\left(\tau_C^{(N)}(j)\right)\right]\leq \sum_{j>J}S_{j+1}\E\left[R_j\right] \leq \mathcal T_J,\quad\text{and} \quad \sum_{ j>J}S_{j+1}\bar R_j F(T_C)\leq\mathcal T_J.
    \end{equation*}
    Applying \cref{lem: Finite Turn Hit Ratio Error} to the finite-turn error, we obtain that
    \begin{equation*}
        \left|h ^{(N)} -h ^{(\infty)}  \right| \leq
        \frac{L_F \varepsilon A_H^{(J)}}{A_R} + \frac{K_J}{A_R}\sqrt{p_{N,J}(\varepsilon)} + \frac{2\mathcal T_J}{A_R}.
    \end{equation*}
    By \cref{lem: Tail Bound for Reusable Workload}, we have that $\mathcal T_J\leq\E[B^2e^{\theta B}]e^{-\theta(J+2)},$ which proves the upper bound \eqref{equ: finite N hit ratio error bound}.
    
    To prove the convergence result, we fix $J$ and choose $\varepsilon=N^{-1/4}$. For sufficiently large $N$, $\varepsilon<\min\{T_C,T-T_C\}$. The first term satisfies
    \begin{equation*}
        \frac{L_F \varepsilon A_H^{(J)}}{A_R}=O(N^{-1/4}).
    \end{equation*}
    Moreover,
    \begin{align*}
        p_{N,J}(N^{-1/4}) &= \frac{18V_T}{c_d^2N^{1/2}} + \frac{\E[e^{\theta B}]}{S_J} \exp\left(-\frac{\theta c_d}{3}N^{3/4}\right) = O(N^{-\frac{1}{2}}), 
    \end{align*}
    and therefore $\sqrt{p_{N,J}(N^{-1/4})}\rightarrow0.$
    
    Hence, for every fixed $J$,
    \begin{equation*}
        \limsup_{N\rightarrow\infty} \left|h ^{(N)} -h ^{(\infty)}  \right| \leq \frac{2\mathcal T_J}{A_R}.
    \end{equation*}
    Finally, by \cref{lem: Tail Bound for Reusable Workload}, $\mathcal T_J\rightarrow 0$ as $J\rightarrow\infty$. Taking $J\rightarrow\infty$, we conclude \cref{thm: Hit Ratio Convergence Full}.
\end{proof}

\subsection{Proof of \cref{thm: Practical Hit Ratio Convergence Full}}
\label{ssec: Proof of the Full Version of practical results}

We first show that the displacement-process analysis remains valid under the practical modification. For a fixed $T<\infty$, we define
\begin{equation*}
    V_T^{\mathrm{prac}}:=\int_{\mathcal Z}G_T(z)^2\,\nu(dz)=\lambda_0T\E[B^2]+\lambda_0\mu_W\E[(M-1)B^2]<\infty.
\end{equation*}
and
\begin{equation*}
    L_{\mathrm{prac}}:=\sup_{0\leq t < \infty}d_{\mathrm{prac}}'(t) =\lambda_0 \E[B]+\lambda_0 A_H^{\mathrm{hash}}.
\end{equation*}

\begin{lemma}[Normalized Practical Displacement Process Convergence]
\label{lem: Normalized Practical Displacement Process Convergence}
In the practical MCM, for every fixed $t\in[0,T]$ and $\varepsilon>0$, the deviation of the normalized displacement process has the following tail bound:
\begin{equation}
\label{equ: practical displacement pointwise concentration}
    \pro\left(\left|\frac{D_N^{\mathrm{prac}}(t)}{N}-d_{\mathrm{prac}}(t)\right|>\varepsilon\right) \leq\frac{V_T^{\mathrm{prac}}}{N\varepsilon^2}.
\end{equation}
For any fixed $t\leq T$, we have the pointwise convergence rate
\begin{equation}
\label{equ: practical displacement pointwise rate}
    \left|\frac{D_N^{\mathrm{prac}}(t)}{N}-d_{\mathrm{prac}}(t)\right|=O_{\mathbb P}(N^{-1/2}),
\end{equation}
and the uniform convergence admits the following convergence rate
\begin{equation}
\label{equ: practical displacement uniform rate}
    \sup_{0\leq t\leq T}\left|\frac{D_N^{\mathrm{prac}}(t)}{N}-d_{\mathrm{prac}}(t)\right|=O_{\mathbb P}(N^{-1/3}).
\end{equation}
In consequence,
\begin{equation}
\label{equ: practical displacement sample path convergence}
    \sup_{0\leq t\leq T}\left|\frac{D_N^{\mathrm{prac}}(t)}{N}-d_{\mathrm{prac}}(t)\right|\xrightarrow{\mathbb P}0,
\end{equation}

\end{lemma}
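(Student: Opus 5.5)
The plan is to replay the proof of \cref{thm: Displacement Process Convergence} in the practical system, changing only the per-conversation path functional. First, I would build the same Poisson random measure representation as in \cref{lem: Poisson Random Measure Representation}: the background marked conversation trajectories form a PRM $\mathcal N_N$ on $\mathcal Z$ with intensity $N\nu$, and Slivnyak's theorem makes them independent of the tagged conversation. Then
\[
D_N^{\mathrm{prac}}(t)=\int_{\mathcal Z}\Phi^{\mathrm{prac}}_t(z)\,\mathcal N_N(dz).
\]
Here $\Phi^{\mathrm{prac}}_t(z)$ is the block mass that conversation $z$ pushes ahead of the tagged content during $(0,t]$. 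At its first access after time $0$, an already-active conversation moves its $R_\ell-1$ hashable history blocks ahead of the tagged content. Every newly generated physical block counts once. Each block therefore contributes at most once, which gives the envelope bound
\[
0\le\Phi^{\mathrm{prac}}_s(z)\le\Phi^{\mathrm{prac}}_t(z)\le G_T(z)=B(\zeta)\one\{-L(\zeta)<a\le T\}.
\]
Consequently $V_T^{\mathrm{prac}}=\int G_T^2\,d\nu$ is exactly the same finite constant as $V_T$.

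Second, I would identify the mean. The new-block part is untouched, so \cref{lem: Mean New Blocks Contribution} still gives $\lambda_0\E[B]t$. For the initial-prefix part I would repeat \cref{lem: Initial-prefix contribution}: turn-$\ell$ waiting intervals straddling $0$ and ending by $t$ arrive with Campbell weight $N\lambda_0S_{\ell+1}\int_0^t\bar F$. Each such interval now carries expected mass $\E[R_\ell-1]$, using the independence of $W_\ell$ from the block increments. Summing over $\ell$ gives $\lambda_0A_H^{\mathrm{hash}}\int_0^t\bar F$, so $\E[D_N^{\mathrm{prac}}(t)]/N=d_{\mathrm{prac}}(t)$. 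This curve is nondecreasing and $L_{\mathrm{prac}}$-Lipschitz.

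Third, I would derive the bounds. The PRM variance identity gives $\mathrm{Var}(D_N^{\mathrm{prac}}(t)/N)=N^{-1}\int(\Phi_t^{\mathrm{prac}})^2d\nu\le V_T^{\mathrm{prac}}/N$, and Chebyshev then yields \eqref{equ: practical displacement pointwise concentration}. Setting $\varepsilon=\delta N^{-1/2}$ gives the $O_{\mathbb P}(N^{-1/2})$ rate. For the uniform statement, I would partition $[0,T]$ with mesh $\varepsilon/(4L_{\mathrm{prac}})$ and sandwich $D_N^{\mathrm{prac}}(t)$ between its values at neighbouring grid points, which works because both the process and $d_{\mathrm{prac}}$ are monotone. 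A union bound over the grid then gives $\pro(\sup>\varepsilon)\le C_T^{\mathrm{prac}}/(N\varepsilon^3)$ with $C_T^{\mathrm{prac}}=\frac{16V_T^{\mathrm{prac}}}{9}(2+4L_{\mathrm{prac}}T)$. Taking $\varepsilon=\delta N^{-1/3}$ yields the uniform rate and hence \eqref{equ: practical displacement sample path convergence}.

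The only step needing care is the second one, namely that the practical modification really changes just the initial-prefix mass per revisit to $R_\ell-1$. I must check that unhashable tail blocks never re-cross the tagged content, so they are neither double counted nor missing from $\E[B]t$. I would argue this directly from the definition that a revisit moves only the hashable history to the MRU side, while every physical block, tail blocks included, is counted once at its creation.
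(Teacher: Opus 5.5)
Your proposal is correct and follows essentially the same route as the paper: the same Poisson random measure representation with envelope $G_T$, so that $V_T^{\mathrm{prac}}=V_T$ (the paper obtains the envelope via $\Phi_t^{\mathrm{prac}}\le\Phi_t\le G_T$, you via each block contributing at most once). The remaining steps also coincide: the mean splits into the unchanged new-block term and an initial-prefix term with mass $R_\ell-1$ per revisit, Chebyshev gives the pointwise bound, and the monotone-partition union bound gives the uniform rate. Your explicit constant $C_T^{\mathrm{prac}}$ and your check that tail blocks are neither double counted nor missed spell out what the paper leaves implicit.
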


\begin{proof}
The practical system has the same stationary turn-state population and the same Poisson random measure structure as in \cref{lem: Poisson Random Measure Representation}. The only difference is that, when an existing history content is first moved ahead of the tagged conversation, its contribution changes from $R_j$ to $R_j-1$.

Accordingly, the practical displacement process admits the following  representation regardless of the turn state of the tagged conversation 
\begin{equation*}
    D_N^{\mathrm{prac}}(t)=\int_{\mathcal Z}\Phi_t^{\mathrm{prac}}(z)\,\mathcal N_N(dz),
\end{equation*}
where the practical contribution satisfies
\begin{equation*}
    0\leq\Phi_t^{\mathrm{prac}}(z)\leq\Phi_t(z)\leq G_T(z), \qquad 0\leq t\leq T.
\end{equation*}
Thus, we apply the same square-integrability argument as in \cref{lem: Poisson Random Measure Representation}.

The new-block contribution is unchanged and has normalized mean $\lambda_0\E[B]t$. For the initial-prefix contribution, an active turn-$j$ conversation contributes $R_j-1$ rather than $R_j$ when its next prompt arrives. Therefore,
\begin{equation*}
    \frac{1}{N}\E[D_N^{\mathrm{prac}}(t)]
    =\lambda_0\E[B]t+\lambda_0A_H^{\mathrm{hash}}\int_0^t\bar F(u)\,du
    =d_{\mathrm{prac}}(t).
\end{equation*}

Using the envelope $\Phi_t^{\mathrm{prac}}(z)\leq G_T(z)$, the same Poisson-integral variance calculation as in \cref{thm: Displacement Process Convergence}, we have the following variance bound
\begin{equation*}
    \text{Var}\left(\frac{D_N^{\mathrm{prac}}(t)}{N}\right)
    \leq\frac{V_T^{\mathrm{prac}}}{N}.
\end{equation*}
Hence, we claim \eqref{equ: practical displacement pointwise concentration} using Chebyshev's inequality.

Finally, $D_N^{\mathrm{prac}}(t)$ and $d_{\mathrm{prac}}(t)$ are nondecreasing, while $d_{\mathrm{prac}}$ is $L_{\mathrm{prac}}$-Lipschitz. Therefore, the partition and union-bound argument in the proof of \cref{thm: Displacement Process Convergence} applies without modification, which proves \eqref{equ: practical displacement sample path convergence} and \eqref{equ: practical displacement uniform rate}. The pointwise rate \eqref{equ: practical displacement pointwise rate} follows directly from \eqref{equ: practical displacement pointwise concentration}.
\end{proof}

We next consider the characteristic time of a reusable turn-$j$ history content.

\begin{lemma}[Practical Characteristic Time of a Turn-$j$ Tagged Conversation]
\label{lem: Practical Characteristic Time Turn-j}
    Fix a turn state $j$ such that $S_j>0$, and let
    \begin{equation*}
        m_{\mathrm{prac}}:=\lambda_0\E[B].
    \end{equation*}
    For every fixed $T>T_C^{\mathrm{prac}}$ and $0<\varepsilon<\min\{T_C^{\mathrm{prac}},T-T_C^{\mathrm{prac}}\},$ we have that
    \begin{equation}
    \label{equ: practical CT finite N bound}
        \pro\left(\left|\tau_{C,\mathrm{prac}}^{(N)}(j)-T_C^{\mathrm{prac}}\right|>\varepsilon\right) \leq \frac{18V_T^{\mathrm{prac}}}{Nm_{\mathrm{prac}}^2\varepsilon^2} + \frac{\E[e^{\theta B}]}{S_j} \exp\left(-\frac{\theta m_{\mathrm{prac}}}{3}N\varepsilon\right).
    \end{equation}
    Consequently, we have the practical characteristic time convergence, such that
    \begin{equation}
    \label{equ: practical CT convergence-2}
        \tau_{C,\mathrm{prac}}^{(N)}(j)\xrightarrow{\mathbb P}T_C^{\mathrm{prac}},
    \end{equation}
    and the convergence rate is given by
    \begin{equation}
    \label{equ: practical CT rate}
        \tau_{C,\mathrm{prac}}^{(N)}(j)-T_C^{\mathrm{prac}}=O_{\mathbb P}(N^{-1/2}).
    \end{equation}
\end{lemma}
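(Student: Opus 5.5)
The plan is to repeat the proof of \cref{thm: Characteristic Time Convergence- Full} essentially verbatim, replacing $d$ by $d_{\mathrm{prac}}$, $D_N$ by $D_N^{\mathrm{prac}}$, $V_T$ by $V_T^{\mathrm{prac}}$, and the eviction threshold $N-R_j$ by $N-(R_j-1)$, since the tagged hashable content has size $R_j-1$. Concretely, I would define $\tau_{C,\mathrm{prac}}^{(N)}(j):=\inf\{t\geq0: D_N^{\mathrm{prac}}(t)>N-(R_j-1)\}$, with the convention $\tau_{C,\mathrm{prac}}^{(N)}(j)=0$ if $R_j-1>N$. By the Poisson random measure representation in the proof of \cref{lem: Normalized Practical Displacement Process Convergence}, the law of $D_N^{\mathrm{prac}}$ does not depend on the tagged turn state, so the tagged index $j$ enters only through $R_j$.

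\textbf{Separation around $T_C^{\mathrm{prac}}$.} Since $d_{\mathrm{prac}}'(t)=\lambda_0\E[B]+\lambda_0A_H^{\mathrm{hash}}\bar F(t)\geq m_{\mathrm{prac}}$ (note $A_H^{\mathrm{hash}}\geq0$ because $R_j\geq j\geq1$), we get
\[
d_{\mathrm{prac}}(T_C^{\mathrm{prac}}-\varepsilon)\leq 1-3\eta \quad\text{and}\quad d_{\mathrm{prac}}(T_C^{\mathrm{prac}}+\varepsilon)\geq1+3\eta, \qquad \eta:=m_{\mathrm{prac}}\varepsilon/3 .
\]

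\textbf{Good event.} Set $t_\pm:=T_C^{\mathrm{prac}}\pm\varepsilon\in[0,T]$ and let $\mathcal E_N$ be the intersection of the events $\{|D_N^{\mathrm{prac}}(t_\pm)/N-d_{\mathrm{prac}}(t_\pm)|\leq\eta\}$ and $\{R_j/N\leq\eta\}$. On $\mathcal E_N$:
\begin{itemize}
\item At $t_-$, $D_N^{\mathrm{prac}}(t_-)/N\leq1-2\eta<1-\eta\leq (N-R_j+1)/N$, so the content is not yet evicted and $\tau_{C,\mathrm{prac}}^{(N)}(j)>t_-$.
\item At $t_+$, $D_N^{\mathrm{prac}}(t_+)/N\geq1+2\eta>1\geq (N-R_j+1)/N$ because $R_j\geq1$, so $\tau_{C,\mathrm{prac}}^{(N)}(j)\leq t_+$.
\end{itemize}
Hence $\mathcal E_N\subseteq\{|\tau_{C,\mathrm{prac}}^{(N)}(j)-T_C^{\mathrm{prac}}|\leq\varepsilon\}$.

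\textbf{Union bound.} Bound the two displacement deviations by \eqref{equ: practical displacement pointwise concentration}, each giving $9V_T^{\mathrm{prac}}/(Nm_{\mathrm{prac}}^2\varepsilon^2)$. Bound $\pro(R_j>N\eta)$ by \cref{lem: Tail Bound of Turn-j Content}, whose argument is unchanged because $R_j$ has the same law in the practical model. Summing these three terms yields \eqref{equ: practical CT finite N bound}.

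\textbf{Convergence and rate.} Letting $N\to\infty$ with $\varepsilon$ fixed gives \eqref{equ: practical CT convergence-2}. For the rate, set $\varepsilon=x/\sqrt N$, which is admissible for large $N$; the bound becomes $18V_T^{\mathrm{prac}}/(m_{\mathrm{prac}}^2x^2)$ plus a term vanishing in $N$. Sending $N\to\infty$ and then $x\to\infty$ gives \eqref{equ: practical CT rate}.

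The main point needing care is the threshold shift by one block. The tagged hashable history has size $R_j-1$, while the unhashable tail block sits behind it and is evicted with or after it. I would check that the whole-content eviction rule in the practical system still makes eviction occur exactly when the displacement exceeds $N$ minus the hashable size. Any $O(1)$ discrepancy here (one block, or whether the tail block counts toward $N$) is absorbed by the slack $\eta N$, so the bound is robust to it.
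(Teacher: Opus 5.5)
Your proposal is correct and follows the paper's proof essentially step for step. Both use the same threshold sandwich at $T_C^{\mathrm{prac}}\pm\varepsilon$ with $\eta=m_{\mathrm{prac}}\varepsilon/3$, the same three-term union bound (the practical pointwise concentration plus the tail bound of \cref{lem: Tail Bound of Turn-j Content}), and the same $\varepsilon=x/\sqrt N$ rate argument. The only cosmetic difference is that you control $\{R_j/N\le\eta\}$ directly instead of $\{(R_j-1)/N\le\eta\}$, which yields the identical bound.
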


\begin{proof}
    The proof follows the same threshold-sandwich argument as \cref{thm: Characteristic Time Convergence- Full}. The reusable tagged content now has size $R_j-1$ instead of $R_j$, and $0\leq R_j-1 \leq R_j.$ Hence, \cref{lem: Tail Bound of Turn-j Content} continues to control its normalized size $\frac{R_j-1}{N}$.
    
    Moreover, because the increasing rate of $d_{\mathrm{prac}}(t)$ has the lower bound
    \begin{equation*}
        d_{\mathrm{prac}}'(t)=m_{\mathrm{prac}}+\lambda_0A_H^{\mathrm{hash}}\bar F(t)
    \geq m_{\mathrm{prac}},
    \end{equation*}
    \begin{equation*}
        d_{\mathrm{prac}}(T_C^{\mathrm{prac}}-\varepsilon)\leq1-m_{\mathrm{prac}}\varepsilon, \quad \text{and} \quad  d_{\mathrm{prac}}(T_C^{\mathrm{prac}}+\varepsilon)\geq1+m_{\mathrm{prac}}\varepsilon.
    \end{equation*}
    
    We then similarly set $\eta:=\frac{m_{\mathrm{prac}}\varepsilon}{3}$ and repeat the event construction in the proof of \cref{thm: Characteristic Time Convergence- Full}, with $D_N$ replaced by $D_N^{\mathrm{prac}}$, $d$ replaced by $d_{\mathrm{prac}}$, and $R_j$ replaced by $R_j-1$. This gives
    \begin{align*}
        \pro\left(\left|\tau_{C,\mathrm{prac}}^{(N)}(j)-T_C^{\mathrm{prac}}\right|>\varepsilon\right)
        &\leq\pro\left(\left|\frac{D_N^{\mathrm{prac}}(T_C^{\mathrm{prac}}-\varepsilon)}{N}-d_{\mathrm{prac}}(T_C^{\mathrm{prac}}-\varepsilon)\right|>\eta\right) \\
        &\quad+ \pro\left(\left|\frac{D_N^{\mathrm{prac}}(T_C^{\mathrm{prac}}+\varepsilon)}{N}-d_{\mathrm{prac}}(T_C^{\mathrm{prac}}+\varepsilon)\right|>\eta\right)+ \pro(\frac{R_j-1}{N}>\eta).
    \end{align*}
    Using the pointwise concentration bound in \cref{lem: Normalized Practical Displacement Process Convergence}, we can bound the first two terms, such that
    \begin{equation*}
        \pro\left(\left|\frac{D_N^{\mathrm{prac}}(T_C^{\mathrm{prac}}\pm\varepsilon)}{N}
        -d_{\mathrm{prac}}(T_C^{\mathrm{prac}}\pm\varepsilon)\right|>\eta\right)
        \leq\frac{V_T^{\mathrm{prac}}}{N\eta^2}.
    \end{equation*}
    Additionally, because $R_j-1\leq R_j$, \cref{lem: Tail Bound of Turn-j Content} gives that
    \begin{equation*}
        \pro(R_j-1>N\eta)
        \leq\frac{\E[e^{\theta B}]}{S_j}e^{-\theta N\eta}.
    \end{equation*}
    Substituting $\eta=m_{\mathrm{prac}}\varepsilon/3$, we conclude \eqref{equ: practical CT finite N bound}. The convergence and $O_{\mathbb P}(N^{-1/2})$ rate then follow exactly as in the final part of the proof of \cref{thm: Characteristic Time Convergence- Full}.
\end{proof}

We finally establish the practical hit-ratio convergence. For a finite truncation level $J\geq1$, define the truncated $A_{H,\mathrm{hash}}$ at turn level $J$ as 
\begin{equation}
\label{equ: practical truncated AH}
    A_{H,\mathrm{hash}}^{(J)} :=\sum_{j=1}^{J}S_{j+1}\E[R_j-1],
\end{equation}
and the truncated second moment constant at level $J$,
\begin{equation}
\label{equ: practical truncated KJ}
    K_{J,\mathrm{hash}}:=\sum_{j=1}^{J}S_{j+1}\sqrt{\E[(R_j-1)^2]}.
\end{equation}

Similarly to \cref{ssec: Proof of the Full Version of hit ratio conv MCM-LRU short}, define the practical tail workload as
\begin{equation}
\label{equ: practical tail workload}
    \mathcal T_{J,\mathrm{hash}} :=\sum_{j>J}S_{j+1}\E[R_j-1].
\end{equation}
Because $0\leq R_j-1\leq R_j$, by \cref{lem: Tail Bound for Reusable Workload}, 
\begin{equation}
\label{equ: practical tail workload bound}
    0\leq\mathcal T_{J,\mathrm{hash}}\leq\mathcal T_J \leq\E[B^2e^{\theta B}]e^{-\theta(J+2)}.
\end{equation}

For fixed $T>T_C^{\mathrm{prac}}$, define the corresponding tail probability bound 
\begin{equation}
\label{equ: practical pNJ}
    p_{N,J}^{\mathrm{prac}}(\varepsilon) :=\frac{18V_T^{\mathrm{prac}}}{Nm_{\mathrm{prac}}^2\varepsilon^2} + \frac{\E[e^{\theta B}]}{S_J} \exp\left(-\frac{\theta m_{\mathrm{prac}}}{3}N\varepsilon\right).
\end{equation}
We are now ready to state and prove the practical hit-ratio convergence result.

\begin{reptheorem}{thm: Practical Hit Ratio Convergence Full}[Practical Hit Ratio Convergence]
Fix a finite $J$ such that $S_{J+1}>0$ and $T>T_C^{\mathrm{prac}}$. For every $0<\varepsilon<\min\{T_C^{\mathrm{prac}},T-T_C^{\mathrm{prac}}\},$ the difference between the practical finite-size hit ratio and the mean-field limiting hit ratio can be bounded, such that
\begin{equation*}
    \left|h_{ \mathrm{prac}}^{(N)}-h_{ \mathrm{prac}}^{(\infty)}\right| \leq \frac{L_F\varepsilon A_{H,\mathrm{hash}}^{(J)}}{A_R} + \frac{K_{J,\mathrm{hash}}}{A_R}\sqrt{p_{N,J}^{\mathrm{prac}}(\varepsilon)} + \frac{2\E[B^2e^{\theta B}]e^{-\theta(J+2)}}{A_R}.
\end{equation*}
Consequently,
\begin{equation*}
    h_{ \mathrm{prac}}^{(N)}\rightarrow h_{ \mathrm{prac}}^{(\infty)}= \frac{A_H^{\mathrm{hash}}}{A_R} F(T_C^{\mathrm{prac}}).
\end{equation*}
\end{reptheorem}

\begin{proof}
The same independence argument as in \cref{lem: Exact Representation of LRU Hit Probability} gives
\begin{equation*}
    H_{N,j+1}^{\mathrm{prac}} = \one\left\{W_j<\tau_{C,\mathrm{prac}}^{(N)}(j)\right\},
\end{equation*}
where we recall that the prompt interarrival times $\{W_j\}_{j\geq1}$ are i.i.d. with cumulative distribution function $F$.

Hence we obtain the following practical exact hit ratio representation
\begin{equation}
\label{equ: practical exact hit ratio representation}
    h_{ \mathrm{prac}}^{(N)} = \frac{1}{A_R} \sum_{j\geq1}S_{j+1} \E\left[(R_j-1)F(\tau_{C,\mathrm{prac}}^{(N)}(j))\right].
\end{equation}

Recall that $F$ is $L_F$-Lipschitz-continuous. Repeating the argument in \cref{lem: Finite Turn Hit Ratio Error} with $R_j$ replaced by $R_j-1$, $\tau_C^{(N)}(j)$ replaced by $\tau_{C,\mathrm{prac}}^{(N)}(j)$, and $T_C$ replaced by $T_C^{\mathrm{prac}}$, we obtain that
\begin{align*}
    \left|\sum_{j=1}^{J}S_{j+1}\E\left[(R_j-1)F\left(\tau_{C,\mathrm{prac}}^{(N)}(j)\right)\right]- A_{H,\mathrm{hash}}^{(J)}F(T_C^{\mathrm{prac}}) \right| \leq
    L_F \varepsilon A_{H,\mathrm{hash}}^{(J)}+ K_{J,\mathrm{hash}}\sqrt{p_{N,J}^{\mathrm{prac}}(\varepsilon)}.
\end{align*}

By splitting the infinite series at $J$, and applying the triangle inequality exactly as in the proof of \cref{thm: Hit Ratio Convergence Full}, we obtain
\begin{align*}
    \left|h_{ \mathrm{prac}}^{(N)}-h_{ \mathrm{prac}}^{(\infty)}\right|
    &\leq \frac{L_F \varepsilon A_{H,\mathrm{hash}}^{(J)}}{A_R} + \frac{K_{J,\mathrm{hash}}}{A_R}\sqrt{p_{N,J}^{\mathrm{prac}}(\varepsilon)} + \frac{2\mathcal T_{J,\mathrm{hash}}}{A_R}.
\end{align*}
Applying the tail workload bound \eqref{equ: practical tail workload bound}, we prove \eqref{equ: practical finite N hit ratio bound}.

To prove the hit ratio convergence, we fix $J$ and choose $\varepsilon=N^{-1/4}$, such that $\frac{L_F\varepsilon A_{H,\mathrm{hash}}^{(J)}}{A_R}=O(N^{-1/4})$, and 
\begin{equation*}
    p_{N,J}^{\mathrm{prac}}(N^{-1/4}) = \frac{18V_T^{\mathrm{prac}}}{m_{\mathrm{prac}}^2N^{1/2}} + \frac{\E[e^{\theta B}]}{S_J}\exp\left(-\frac{\theta m_{\mathrm{prac}}}{3}N^{3/4}\right) =O(N^{-1/2}).
\end{equation*}
Hence, for every fixed $J$,
\begin{equation*}
    \limsup_{N\rightarrow\infty} \left|h_{ \mathrm{prac}}^{(N)}-h_{ \mathrm{prac}}^{(\infty)}\right| \leq \frac{2\mathcal T_{J,\mathrm{hash}}}{A_R}.
\end{equation*}
Finally, still by the tail workload bound \eqref{equ: practical tail workload bound}, $\mathcal T_{J,\mathrm{hash}}\rightarrow 0$ as $J\rightarrow\infty$. Taking $J\rightarrow\infty$, we prove \eqref{equ: practical hit ratio convergence full} and therefore \cref{thm: Practical Hit Ratio Convergence Full}.
\end{proof}

\section{Theoretically-Motivated Estimator and LLM Serving Experiments}
\label{sec: LLM-Inference Experiment and Empirical Results}

In this section, we provide a more detailed explanation of the settings and the result analysis, of our empirical LLM serving experiments in \cref{sec: LLM Serving Experiments}. 

Our experiments run Qwen3-8B on five Ascend 910B2 NPUs using a Prefill-Decode disaggregation architecture, with one NPU as the prefiller and four as the decoder pool. Each NPU provides approximately 64 GB of HBM, and the model is deployed under BF16 precision. We reserve 40 GB of the prefiller HBM exclusively for KV cache. With 36 transformer layers, 8 KV heads per layer, and a head dimension of 128, Qwen3-8B requires
\begin{equation*}
    \text{KV Bytes/token}=2\times36\times8\times128\times2=144\text{ KB},
\end{equation*}
where the leading factor $2$ corresponds to the Key (K) and Value (V) matrices. With a block size of 128 tokens, each KV block occupies 18 MB, and an HBM capacity of $N=\lfloor40\text{ GB}/18\text{ MB}\rfloor=2{,}275$ blocks. Our software stack consists of vLLM v0.22.1, vLLM-Ascend v0.22.1rc1, and Mooncake v0.3.9 for KV transfer between the prefill and decode pools.

As mentioned in \cref{sec: LLM Serving Experiments}, we construct the workload from the public ShareGPT dataset (\cite{sharegpt_vicuna_unfiltered}) and sample 6,000 conversations containing at least two turns. The first 1,000 conversations with 4,690 turns are used for cache warm-up and are sufficient to fill the HBM before measurement. The remaining 5,000 conversations form the measurement set. Conversation submissions in both phases follow Poisson processes with some rates specified below. The measurement window starts at the submission time $t_1$ of the first conversation in the measurement set and ends at the submission time $t_{5000}$ of the last. This finite horizon may censor some late turns of active conversations, but we avoid increasing the hit ratio by continuing measurement after new conversation arrivals have stopped.

All block-level workload statistics are computed using the Qwen3 tokenizer and chat template. \cref{fig: E1 Dataset info} shows the empirical conversation-turn distribution and the marginal distribution of new KV blocks per turn. We note that our estimator does not require block increments to be i.i.d. across turns. Using the formulas in equations \eqref{equ: practical estimator AR} to \eqref{equ: practical estimator moments}, we have that 
\[\hat A_R = 38.9092,\quad\hat A_H^{\text{hash}} = 22.5178, \quad \widehat {\E B} = 12.9220,\quad \text{and} \quad\widehat{\E M} = 4.4694.\]

\begin{figure}[ht]
    \centering
    \begin{subfigure}[t]{0.48\linewidth}
        \centering
        \includegraphics[width=\linewidth]{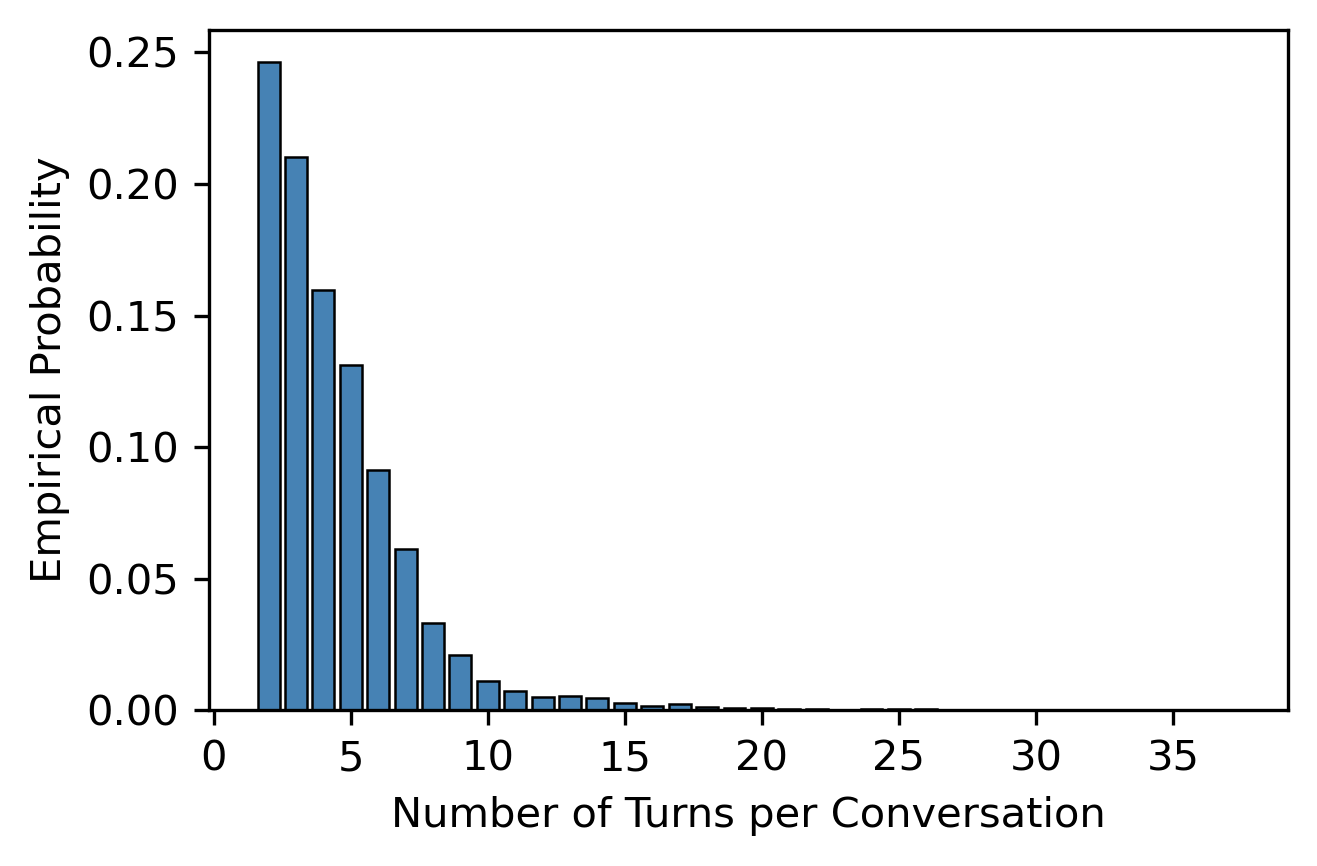}
        \caption{Empirical turn number distribution per conversation (Experiment 1)}
        \label{fig: E1_empirical_conversation_turn_distribution}
    \end{subfigure}
    \hfill
    \begin{subfigure}[t]{0.48\linewidth}
        \centering
        \includegraphics[width=\linewidth]{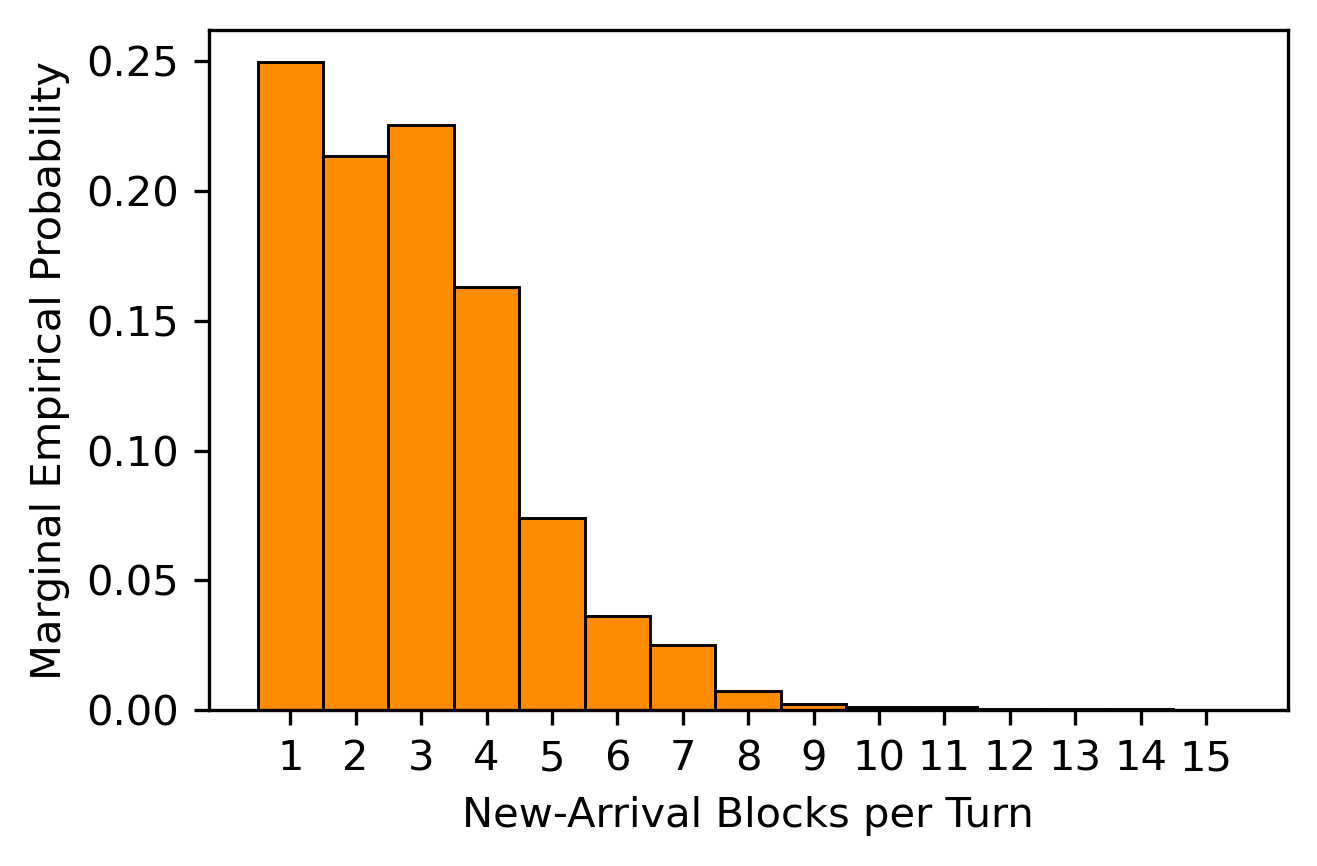}
        \caption{Marginal distribution of new KV blocks per turn. (Experiment 1)}
        \label{fig: E1_new_kv_block_histogram}
    \end{subfigure}
    \caption{Empirical workload characteristics of the ShareGPT measurement set used in Experiment~1. The measurement set contains 5,000 multi-turn conversations and 22,347 complete user-assistant turns.
    (a): Empirical probability mass function of the number of turns per conversation, with mean 4.4694, variance 7.9659, and maximum 37.
    (b): Marginal empirical probability mass function of the number of new KV blocks per turn. For each turn-$j$, this quantity is calculated by combining the turn-$(j-1)$'s tail tokens and response tokens, and turn-$j$'s whole prompt tokens. The empirical distribution in (b) is used only to describe the workload and does not impose an i.i.d. assumption across turn indices.}
    \label{fig: E1 Dataset info}
\end{figure}

Because ShareGPT provides both prompts and recorded responses, we replay the original conversation text rather than sending generated responses into subsequent turns. For each turn, the decoder is forced to generate the same number of tokens as the corresponding recorded response, preserving the decoding workload and timing, but the generated text and its response KV cache are discarded. This prevents inconsistencies between generated responses and subsequent recorded prompts while preserving the empirical KV-cache workload evolution.

In our experiment, we use the LRU policy equipped in vLLM, which has the following differences from our theoretical model:
\begin{enumerate}
    \item A block's effective recency timestamp is not the arrival time of its latest prompt. Instead, it is determined by system events including completion of prefill-to-decode KV transfer, CPU DRAM write-back, and prefill accesses.
    \item Blocks with nonzero reference counts are temporarily protected from eviction, even if they are currently the LRU blocks.
    \item Blocks belonging to the same conversation do not share identical timestamps. In particular, vLLM exhibits a prefix-preserving, suffix-first eviction behavior, so the cache boundary may partially evict a conversation suffix rather than the entire conversation cohort.
    \item Different conversations may share identical prefix blocks, e.g., through a common system prompt. However, in our experiment, such a cross-conversation prefix sharing is negligible ShareGPT workload used in our experiment.
\end{enumerate}

In Experiment 1, we fix the conversation arrival rate at $\lambda_0=1.5$ and vary the target mean prompt interarrival time $\mu_F$. For turn $j$ of conversation $i$, we let $W_{ij}^{\mathrm{E2E}}$ denote the end-to-end (E2E) time required to complete decoding, and we independently sample $W_{ij}^{\mathrm{sample}}\sim\Exp(1/\mu_F)$. Because the next prompt cannot be submitted before the preceding response is completed, the realized interarrival time is
\begin{equation}
\label{equ: Actual Interarrival time}
    W_{ij}=\max\{W_{ij}^{\mathrm{E2E}},W_{ij}^{\mathrm{sample}}\}.
\end{equation}
The prior estimator uses the target exponential CDF $F(t)=1-\exp(-t/\mu_F)$. After each experiment, we compute the empirical mean interarrival time $\hat\mu_F$ and use the posterior CDF $F^{\mathrm{post}}(t)=1-\exp(-t/\hat\mu_F)$. We report the absolute error $\mathrm{AE}=|\text{estimated hit ratio}-\text{actual hit ratio}|$ and relative error $\mathrm{RE}=\mathrm{AE}/\text{actual hit ratio}\times100\%$.

As shown in \cref{fig: E1-1 hit ratio results}, both estimators correctly capture the decrease in hit ratio as $\mu_F$ increases. 
However, for small $\mu_F$, the event that the interarrival time is not exponentially distributed (i.e., $W^{\text{E2E}}_{ij} > W^{\text{sample}}_{ij}$) occurs more frequently, which makes the empirical mean interarrival time $\hat \mu_F > \mu_F$. In contrast, for large $\mu_F$, the finite measurement horizon censors conversations with long interarrival times, which can make $\hat\mu_F<\mu_F$. These effects explain the relative ordering of the prior and posterior estimates across different $\mu_F$ regimes. Nevertheless, \cref{fig: E1-1 Error Analysis} shows that the absolute error remains below $0.02$ throughout the experiment, while the relative error remains below $16\%$. When the target and empirical interarrival means are close, i.e., $120\leq\mu_F\leq165$ s, the relative errors of both estimators are below $10\%$.

\begin{figure}[ht]
    \centering
    \includegraphics[width=0.6\linewidth]{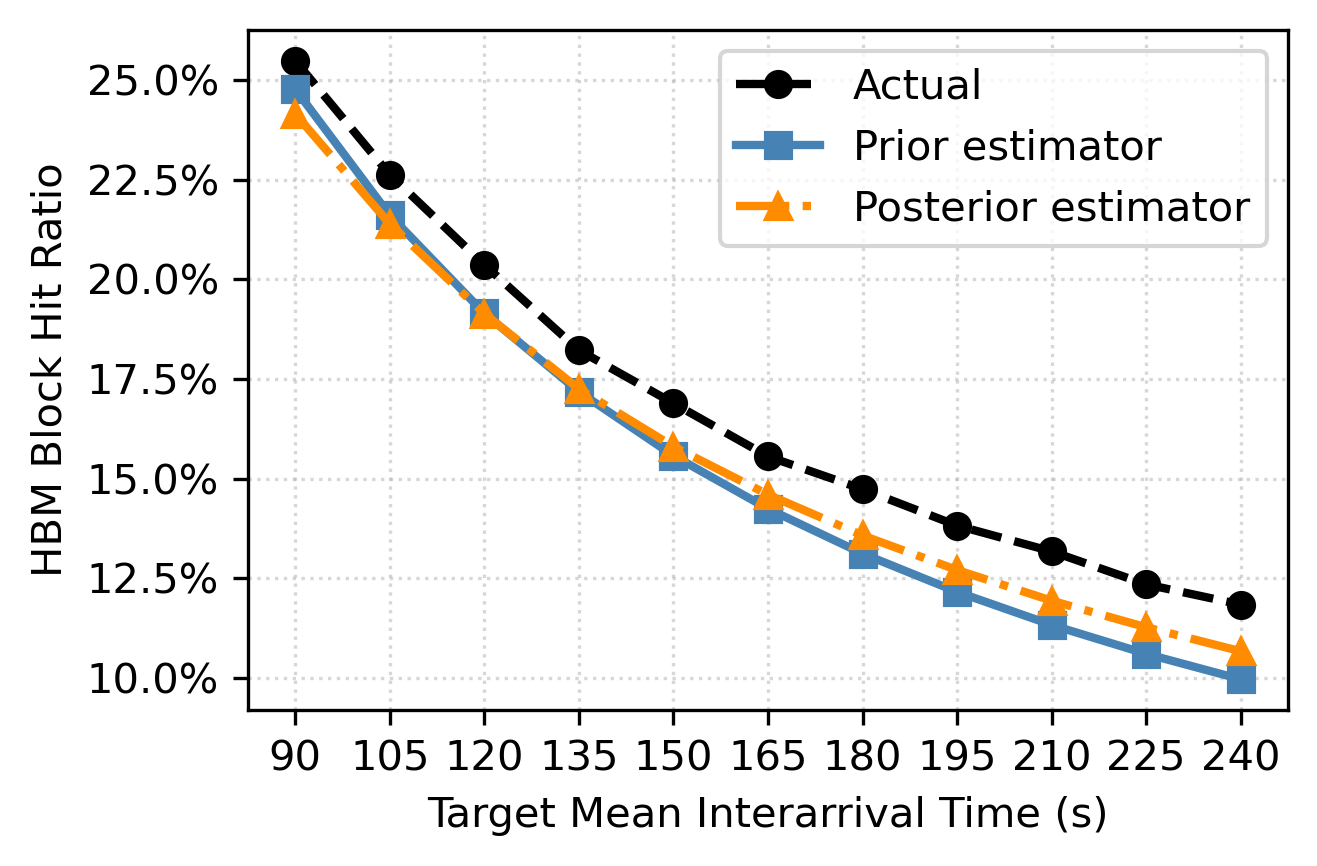}
    \label{fig: E1_actual_prior_posterior_hit_ratio}
    \caption{The plot of the actual and the prior/posterior estimated hit ratios as functions of target mean interarrival times $\mu_F$ (Experiment 1). We use 5000 ShareGPT real prompt-response conversations for measurement and 1000 for warm-up. The arrival process is a Poisson process with rate $1.5$ per second, and the interarrival time is sampled according to \eqref{equ: Actual Interarrival time}, with target means chosen from $90$s to $240$s. The measurement duration is between the submission times of the 1-st and the 5000-th measurement conversations.
    }
    \label{fig: E1-1 hit ratio results}
\end{figure}

\begin{figure}[ht]
    \centering
    \begin{subfigure}[t]{0.48\linewidth}
        \centering
        \includegraphics[width=\linewidth]{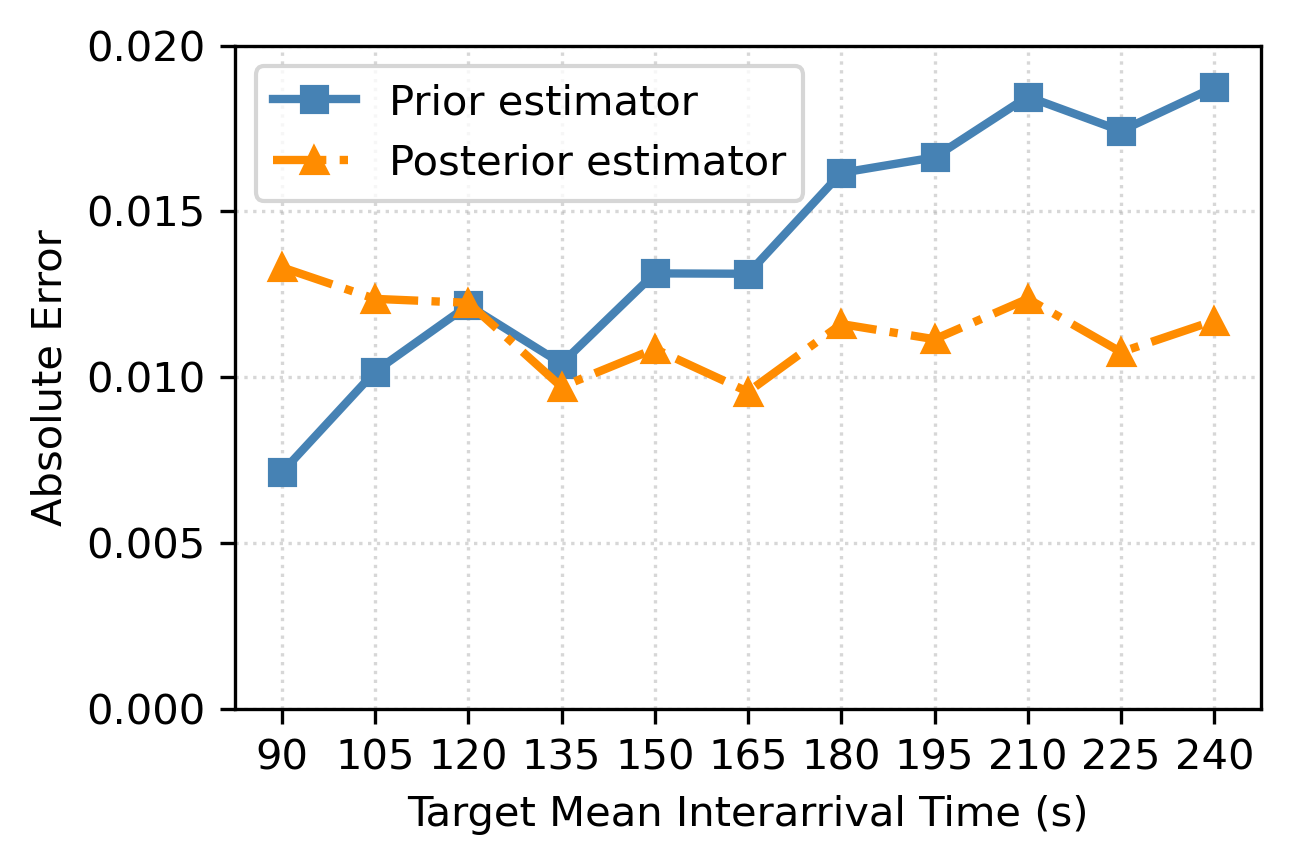}
        \caption{Absolute errors. (Experiment 1)}
        \label{fig: E1_posterior_estimator_errors_AE}
    \end{subfigure}
    \hfill
    \begin{subfigure}[t]{0.48\linewidth}
        \centering
        \includegraphics[width=\linewidth]{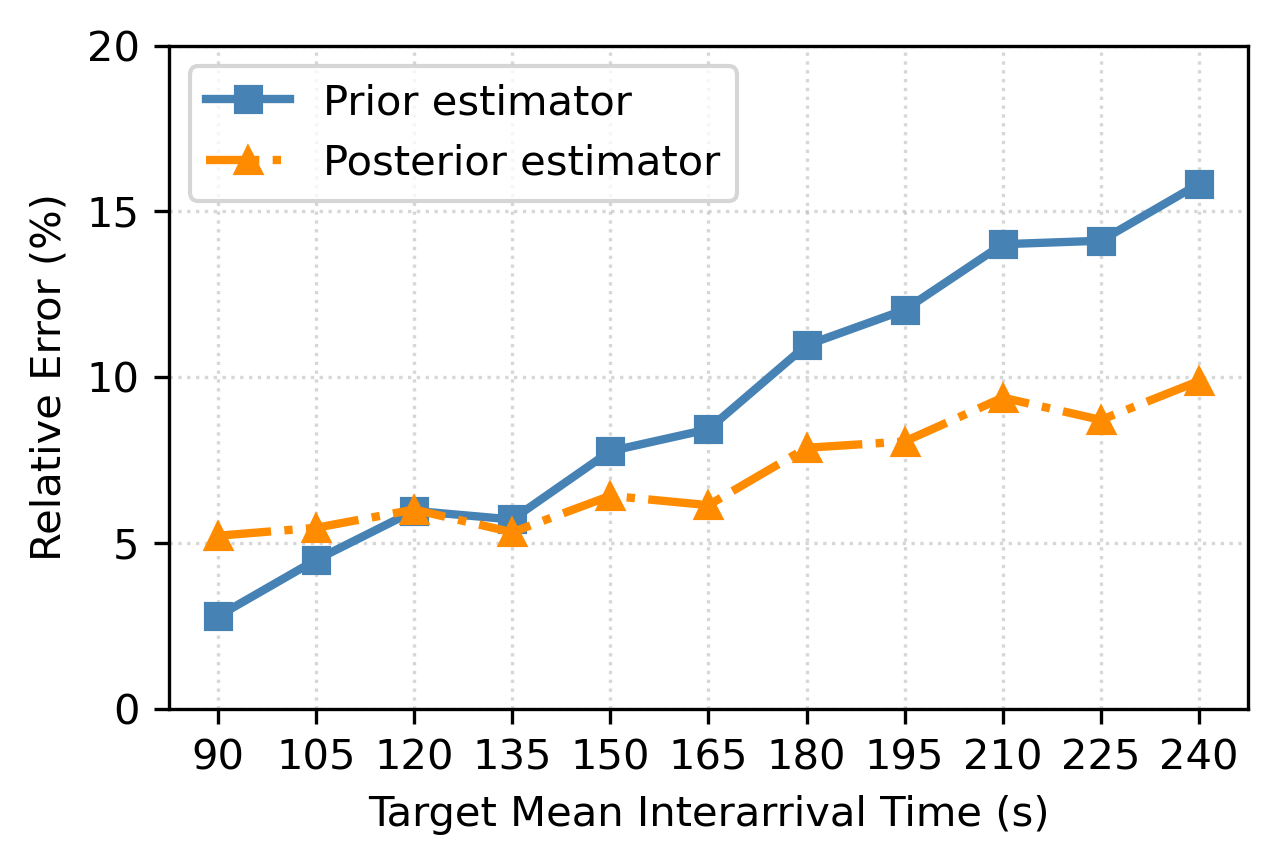}
        \caption{Relative errors. (Experiment 1)}
        \label{fig: E1_posterior_estimator_errors_RE}
    \end{subfigure}
    \caption{The absolute and relative errors of the prior/posterior estimated hit ratios as functions of $\mu_F$ (Experiment 1).}
    \label{fig: E1-1 Error Analysis}
\end{figure}

In Experiment 2, to reduce the error caused by the mismatch between the target and realized interarrival distributions, we fix $\mu_F=135$ s, for which $\hat\mu_F$ is closest to the target in Experiment 1. We vary the conversation arrival rate from $0.5$ to $2.0$. As shown in \cref{fig: E1-2 hit ratio results}, both estimators correctly capture the decrease in empirical hit ratio as $\lambda_0$ increases. Moreover, the estimates are more accurate in this experiment. To be specific, we note in \cref{fig: E1-2 Error Analysis} that both estimators achieve absolute errors below $0.015$ and relative errors below $7\%$.

\begin{figure}[ht]
    \centering
    \includegraphics[width=0.6\linewidth]{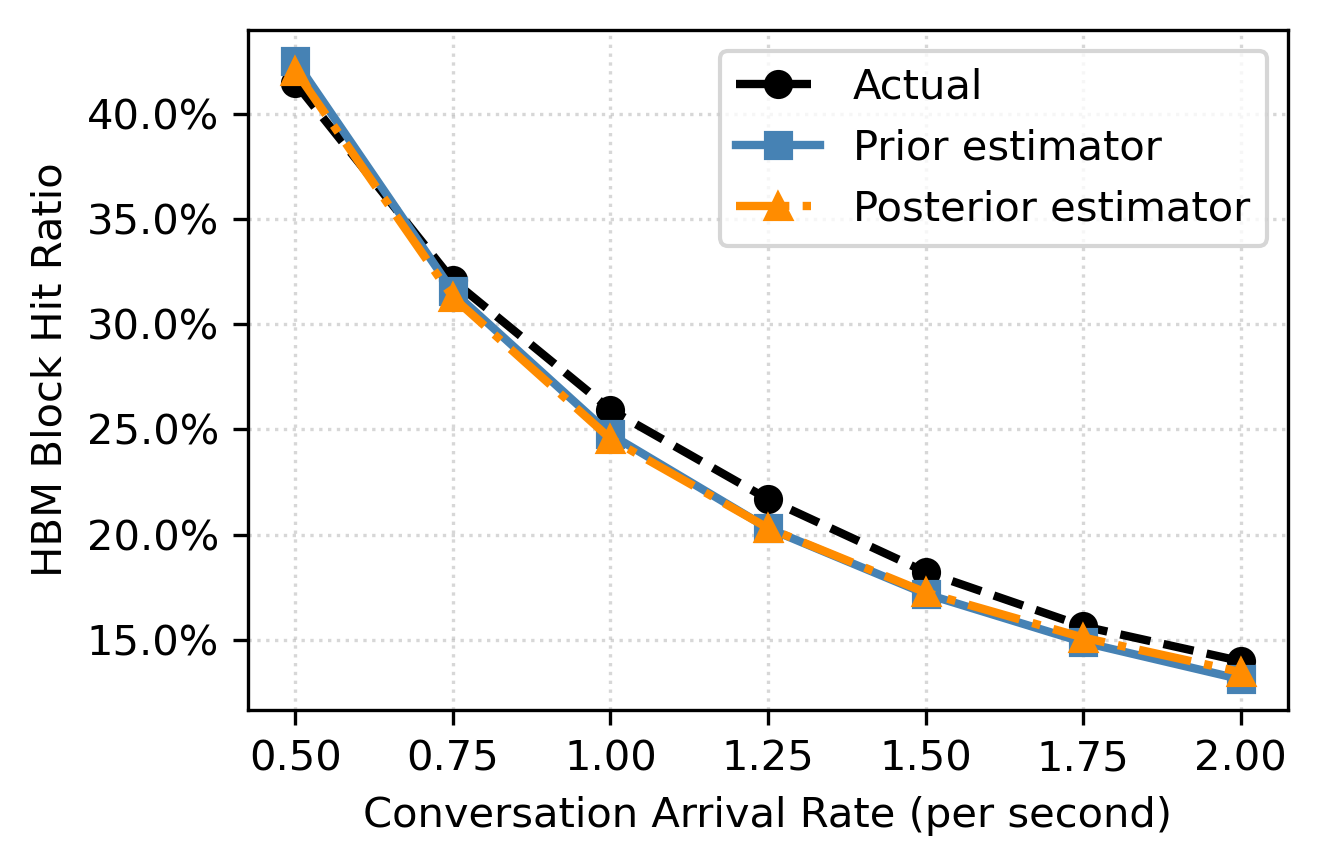}
    \label{fig: E1-2_arrival_rate_actual_prior_posterior_hit_ratio}
    \caption{The plot of the actual and the prior/posterior estimated hit ratios as functions of the conversation arrival rate $\lambda_0$ (Experiment 2). We use 5000 ShareGPT real prompt-response conversations for measurement and 1000 for warm-up. The arrival process is a Poisson process with rate chosen from 0.5 to 2.0, and the interarrival time is sampled according to \eqref{equ: Actual Interarrival time}, with target mean $135$s.
    }
    \label{fig: E1-2 hit ratio results}
\end{figure}

\begin{figure}[ht]
    \centering
    \begin{subfigure}[t]{0.48\linewidth}
        \centering
        \includegraphics[width=\linewidth]{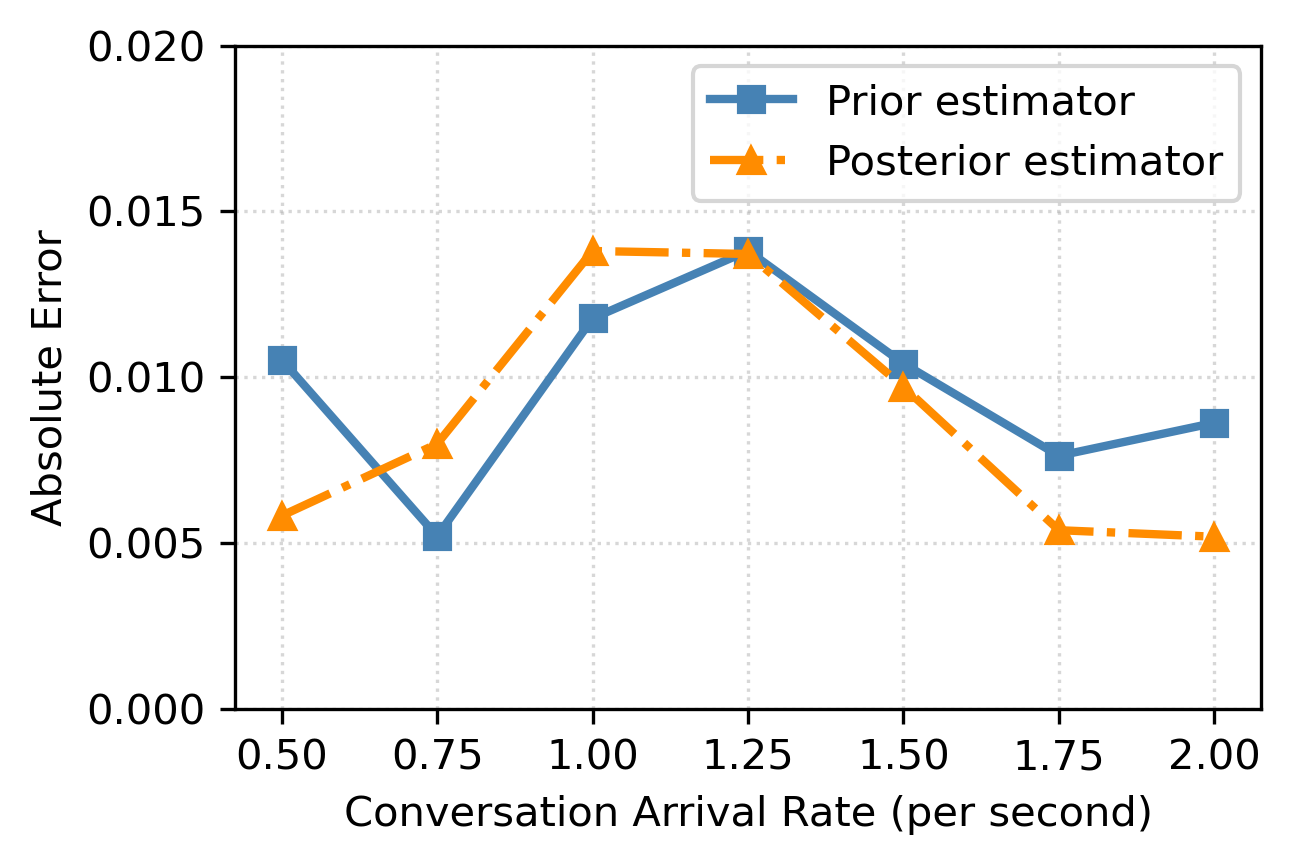}
        \caption{Absolute errors. (Experiment 2)}
        \label{fig: E1-2_arrival_rate_estimator_errors_AE}
    \end{subfigure}
    \hfill
    \begin{subfigure}[t]{0.48\linewidth}
        \centering
        \includegraphics[width=\linewidth]{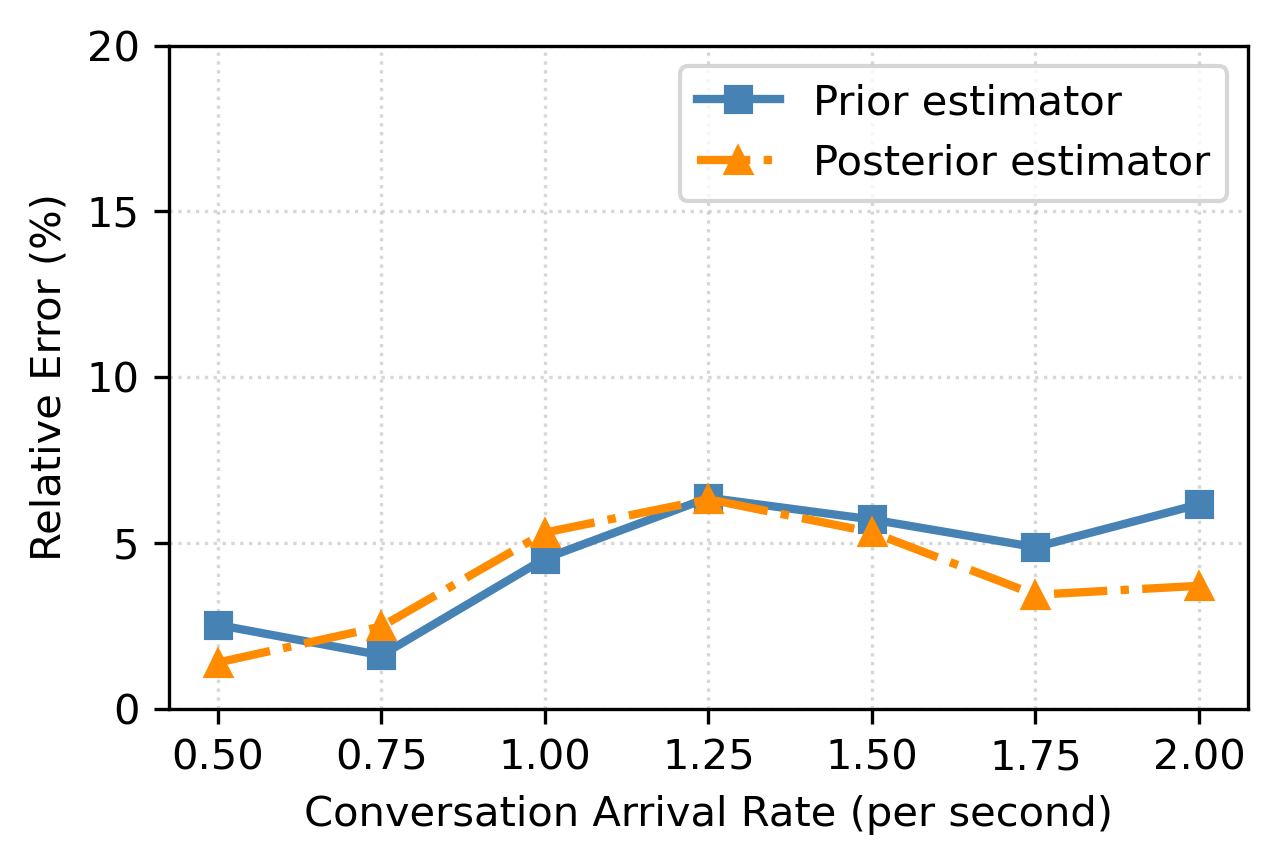}
        \caption{Relative errors. (Experiment 2)}
        \label{fig: E1-2_posterior_estimator_errors_RE}
    \end{subfigure}
    \caption{The absolute and relative errors of the prior/posterior estimated hit ratios as functions of $\lambda_0$ (Experiment 2).
    }
    \label{fig: E1-2 Error Analysis}
\end{figure}

\end{document}